\newcommand{\mynewtheorem}[2]{
	\newaliascnt{#1}{dummy}
	\newtheorem{#1}[#1]{#2}
	\aliascntresetthe{#1}
}
\theoremstyle{plain}
\theoremstyle{definition}
\newcommand{\yes}{{\sf yes}}
\newcommand{\no}{{\sf no}}
\newcommand{\bd}{{\sf bd}}
\newcommand{\ann}{{\sf ann}}
\newcommand{\inter}{{\sf int}}
\newcommand{\remove}[1]{}
\newcommand{\bigmid}{\;\big|\;}
\newcommand{\w}{\operatorname{{\bf w}}}
\newcommand{\cen}{{\sf center}}
\newcommand{\comp}{{\sf Comp}}
\newcommand{\perim}{{\sf Perim}}
\renewcommand{\Bbb}[1]{{\mathbb{#1}}}
\definecolor{MidnightBlue}{rgb}{0.1,0.1,0.44}
\definecolor{Black}{rgb}{0,0, 0}
\definecolor{Blue}{rgb}{0, 0 ,1}
\definecolor{Red}{rgb}{1, 0 ,0}
\definecolor{White}{rgb}{1, 1, 1}
\definecolor{Grey}{rgb}{.6, .6, .6}
\definecolor{Mygreen}{rgb}{.0, .7, .0}
\definecolor{Yellow}{rgb}{.55,.55,0}
\definecolor{mustard}{rgb}{1.0, 0.86, 0.35}
\definecolor{applegreen}{rgb}{0.55, 0.71, 0.0}
\definecolor{darkturquoise}{rgb}{0.0, 0.81, 0.82}
\definecolor{celestialblue}{rgb}{0.29, 0.59, 0.82}
\definecolor{green-yellow}{rgb}{0.68, 1.0, 0.18}
\definecolor{crimsonglory}{rgb}{0.75, 0.0, 0.2}
\definecolor{darkmagenta}{rgb}{0.30, 0.0, 0.30}
\newcommand{\mnb}[1]{{\color{MidnightBlue}#1}}
\newcommand{\red}[1]{{\color{Red}#1}}
\newcounter{func}
\newcommand{\newfun}[1]{f_{\refstepcounter{func}\label{#1}\thefunc}}
\newcommand{\funref}[1]{\hyperref[#1]{f_{\ref*{#1}}}} 
\newcounter{con}
\newcommand{\newcon}[1]{c_{\refstepcounter{con}\label{#1}\thecon}}
\newcommand{\conref}[1]{\hyperref[#1]{c_{\ref*{#1}}}} 
\newtheorem{observation}{Observation}
\newcommand{\Oh}{\mathcal{O}}
\newcommand{\tw}{{\mathbf{tw}}}
\newcommand{\hh}{
\end{document}
}
\tikzset{black node/.style={draw, circle, fill = black, minimum size = 4pt, inner sep = 0pt}}
    \pgfarrowshullpoint{\pgfarrowlength}{0pt}
\else\pgfsetlinewidth{+\pgfarrowlinewidth}\fi
    \pgfarrowshullpoint{\pgfarrowlength}{0pt}
    \pgfarrowshullpoint{\pgfarrowinset}{0pt}
\else\pgfsetlinewidth{+\pgfarrowlinewidth}\fi
\newdimen\ipeminipagewidth
\tikzstyle{ipe import} = [
\tikzset{
  rgb color/.code args={#1=#2}{%
    \definecolor{tempcolor-#1}{rgb}{#2}%
    \tikzset{#1=tempcolor-#1}%
  },
}
\tikzstyle{ipe stylesheet} = [
\definecolor{red}{rgb}{1,0,0}
\definecolor{blue}{rgb}{0,0,1}
\definecolor{green}{rgb}{0,1,0}
\definecolor{yellow}{rgb}{1,1,0}
\definecolor{orange}{rgb}{1,0.647,0}
\definecolor{gold}{rgb}{1,0.843,0}
\definecolor{purple}{rgb}{0.627,0.125,0.941}
\definecolor{gray}{rgb}{0.745,0.745,0.745}
\definecolor{brown}{rgb}{0.647,0.165,0.165}
\definecolor{navy}{rgb}{0,0,0.502}
\definecolor{pink}{rgb}{1,0.753,0.796}
\definecolor{seagreen}{rgb}{0.18,0.545,0.341}
\definecolor{turquoise}{rgb}{0.251,0.878,0.816}
\definecolor{violet}{rgb}{0.933,0.51,0.933}
\definecolor{darkblue}{rgb}{0,0,0.545}
\definecolor{darkcyan}{rgb}{0,0.545,0.545}
\definecolor{darkgray}{rgb}{0.663,0.663,0.663}
\definecolor{darkgreen}{rgb}{0,0.392,0}
\definecolor{darkmagenta}{rgb}{0.545,0,0.545}
\definecolor{darkorange}{rgb}{1,0.549,0}
\definecolor{darkred}{rgb}{0.545,0,0}
\definecolor{lightblue}{rgb}{0.678,0.847,0.902}
\definecolor{lightcyan}{rgb}{0.878,1,1}
\definecolor{lightgray}{rgb}{0.827,0.827,0.827}
\definecolor{lightgreen}{rgb}{0.565,0.933,0.565}
\definecolor{lightyellow}{rgb}{1,1,0.878}
\definecolor{black}{rgb}{0,0,0}
\definecolor{white}{rgb}{1,1,1}
\newcommand*\samethanks[1][\value{footnote}]{\footnotemark[#1]}
\date{\empty}
\begin{document}

\title{An Algorithmic Meta-Theorem for Graph Modification to  Planarity and FOL\thanks{A conference version of this paper appeared in the \emph{Proceedings of the 28th Annual European Symposium on Algorithms (\textbf{ESA}), volume 173 of LIPICs, pages 7:1--7:23, \textbf{2020}}. The two first authors have been supported by the Research Council of Norway  via the project BWCA (314528).
The two last authors have been supported  by  the ANR projects DEMOGRAPH (ANR-16-CE40-0028) and ESIGMA (ANR-17-CE23-0010) and the French-German Collaboration ANR/DFG Project UTMA (ANR-20-CE92-0027).}}

\author{Fedor V. Fomin\thanks{Department of Informatics, University of Bergen, Norway.
Emails:  \texttt{\{fedor.fomin, petr.golovach\}@uib.no}.} \and
Petr A. Golovach\samethanks[2] \and
	Giannos Stamoulis\thanks{LIRMM, Univ Montpellier, CNRS, Montpellier, France. Emails: \texttt{giannos.stamoulis@lirmm.fr}, \texttt{sedthilk@thilikos.info}.}	\and
	Dimitrios  M. Thilikos\samethanks[3]
}
%
%
%
%

\maketitle

\begin{abstract}
\noindent In general, a {\sl graph modification problem} is defined by a graph modification operation $\boxtimes$ and a target graph property  ${\cal P}$. Typically, the modification operation $\boxtimes$ may be {\sf vertex deletion}, {\sf edge deletion}, {\sf edge contraction}, or {\sf edge addition} and the question is, given a graph $G$ and an integer $k$, whether it is possible to transform $G$ to a graph in ${\cal P}$ after applying  the operation $\boxtimes$ $k$ times on $G$.  This problem has been extensively 
studied for particular instantiations of $\boxtimes$ and ${\cal P}$. 
In this paper we consider the general property ${\cal P}_{φ}$ of being  planar and, additionally, being a model of some 
First-Order Logic  sentence $φ$ (an FOL-sentence). We call the corresponding meta-problem 
 \mnb{\sc  Graph $\boxtimes$-Modification to Planarity and $φ$} and prove the following algorithmic meta-theorem:  there exists a function $f:\Bbb{N}^{2}\to\Bbb{N}$ such that, 
for every $\boxtimes$ and every FOL sentence $φ$, the \mnb{\sc  Graph $\boxtimes$-Modification to Planarity and $φ$}
is  solvable in $f(k,|φ|)\cdot n^2$ time.
The proof constitutes a hybrid of two different classic techniques in graph algorithms. The first is the {\em  irrelevant vertex technique} that is typically used in the context of Graph Minors and deals with properties such as planarity or surface-embeddability (that are {\sl not} FOL-expressible)
 and the second is the use of {\em Gaifman's Locality Theorem}  that is the theoretical 
base for the meta-algorithmic study of FOL-expressible problems.
\end{abstract}

\noindent{\bf Keywords:} Graph Modification Problems,  Algorithmic Meta-theorems, First-Order Logic, Irrelevant Vertex Technique, Planar Graphs.


\newpage

\section{Introduction}\label{sec:intoduction}

The term \emph{algorithmic meta-theorems} was coined by Grohe in his seminal exposition in \cite{Grohe08logi} in order to describe results providing 
general conditions, typically of logical and/or combinatorial nature,
that  automatically guarantee the existence of certain types of algorithms for wide families of problems.~Algorithmic meta-theorems
 reveal  deep relations
between logic and combinatorial structures, which is a fundamental issue
of computational complexity. Such theorems not only  
yield a better understanding of the scope of general algorithmic
techniques and the limits of tractability but often provide (or induce) a variety of new algorithmic results.
The archetype of algorithmic meta-theorems is  Courcelle's theorem~\cite{Courcelle90,Courcelle92} stating that all graph
properties expressible in Monadic Second-Order Logic (in short, {\em MSOL-expressible properties}) 
are fixed-parameter tractable when parameterized by the size of the  sentence and the treewidth of the graph. 

Our meta-theorem belongs to the intersection of two algorithmic research directions: 
Deciding First-Order Logic properties on sparse graphs and graph planarization algorithms.   

\medskip\noindent\textbf{FOL-expressible properties on sparse graphs.}
For graph properties expressible in First-Order Logic (in short {\em FOL-expressible properties}),  a rich family of algorithmic meta-theorems was developed within the last decades. Each of these meta-theorems can be stated in the following form: for a graph class $\mathcal{C}$,  
 deciding FOL-expressible properties is fixed-parameter tractable on   $\mathcal{C}$, i.e. there is an algorithm running in time $f(|φ|,h_{\mathcal{C}})\cdot n^{{\cal O}(1)}$
  , where $|φ|$ is the size of the input  FOL-sentence $φ$, $h_{\mathcal{C}}$  is a constant depending on the class $\mathcal{C}$, and $n$ is the number of vertices of the input graph.    
  The starting point in the chain of such meta-theorems is the  work of Seese  \cite{Seese96line} for  $\mathcal{C}$ being the class of 
  graphs of  bounded degree  \cite{Seese96line}. The first significant extension of Seese's theorem was obtained by Frick and Grohe \cite{FrickG01deci} for the class 
  $\mathcal{C}$ of graphs of bounded local treewidth \cite{FrickG01deci}. The class of graphs of bounded local treewidth contains graphs of bounded degree, planar graphs, graphs of bounded genus, and apex-minor-free graphs.
  The next step was done by 
  Flum and Grohe \cite{FlumG01fixe}, who panelled these results up to graph classes  excluding some minor.  Dawar, Grohe, and Kreutzer  \cite{DawarGK07loca} pushed the tractability border up to  graphs locally excluding a minor. 
   Further extension was due to Dvo\v{r}\'{a}k,   Kr\'{a}l,   and Thomas, who proved tractability for the class  $\mathcal{C}$
of being locally bounded expansion \cite{DvorakKT13}. 
Finally, Grohe,  Kreutzer, and Siebertz  \cite{GroheKS17} established fixed-parameter tractability for classes that are
effectively nowhere dense. In some sense, the result of Grohe et al. is the
  culmination of  this long line of meta-theorems, because for somewhere dense  graph classes closed under taking subgraphs  
 deciding first-order properties   is unlikely to be  fixed-parameter tractable \cite{DvorakKT13,Kreutzer11algo}. 
 
 Notice that the above line of results also shed some light
on  graph modification problems.  In particular, since many modification operations   are FOL-expressible, in some situations when the target property ${\cal P}$ is FOL-expressible, the above meta-algorithmic results can be panelled to graph modification problems. As a concrete example, consider the problem of  deleting at most $k$ vertices to obtain a graph of degree at most $3$. All vertices of the input graph of degree at least $4+k$ should be deleted, so we delete them and adapt the parameter $k$ accordingly. In the remaining graph all vertices are of degree at most $3+k$ and the property of deleting at most $k$ vertices from such a graph to obtain a graph of degree at most $3$ is FOL-expressible. Hence the Seese's theorem  implies that there is an algorithm of running time $f(k)\cdot n^{\Oh(1)}$ solving this problem. 
 However these theories are not applicable  with instantiations of ${\cal P}$, like planarity,  that are not FOL-expressible. 
 
 Another island of tractability for graph modification problems is provided by  Courcelle's theorem and similar theorems on graphs of bounded widths. For example,  
 graph modification problems are  fixed-parameter tractable 
 in cases where the target property ${\cal P}$ is MSOL-expressible under the additional assumption that  the graphs in
 ${\cal P}$ have fixed treewidth (or bounded rankwidth, for MSOL$_1$-properties, see e.g.,  \cite{CourcelleO07vert}). 
 
 To conclude, according to the current state of the art, all known algorithmic meta-theorems concerning fixed-parameter tractability of  graph modification problems 
are attainable either when the target property ${\cal P}$  is FOL-expressible and the structure is sparse or when  ${\cal P}$ is MSOL/MSOL$_{1}$-expressible and the structure has bounded tree/rank-width.
Interestingly,  {\em planarity} is the typical  property that escapes the above pattern: it is {\sl not} FOL-expressible and it has  {\sl unbounded} treewidth.

\medskip\noindent\textbf{Graph planarization.} 
The  {\mnb{\sc Planar Vertex Deletion} problem
is a generalization of planarity testing. For a given graph $G$ the goal is to find a  vertex set  of   size  at most $k$ whose deletion makes the resulting graph planar. 
Planarity is a nontrivial and hereditary graph property, hence by the result of  Lewis and Yannakakis \cite{LewisY80then}, the    
decision version of  \mnb{\sc Planar Vertex Deletion}  is {\sf NP}-complete. 
The parameterized complexity of this  problem has been extensively studied.

The non-uniform fixed-parameter tractability of   \mnb{\sc Planar Vertex Deletion}  (parameterized by $k$) follows from 
 the deep result of Robertson and Seymour  in Graph Minors theory \cite{RobertsonS04GMXX}, that every minor-closed graph class can be recognized in   polynomial time.  Since the class of graphs that can be made planar by deleting at most $k$ vertices is minor-closed,  the result of Robertson and Seymour implies that  for \mnb{\sc Planar Vertex Deletion}, for each $k$, there exists a (non-uniform) algorithm  that in time $\Oh( n^3)$  solves  \mnb{\sc Planar Vertex Deletion}. 
Significant amount of  work was involved to improve the enormous constants hidden in the big-O 
and the polynomial dependence on $n$.
Marx and  Schlotter~\cite{MarxS07obta} gave an algorithm that solves the problem in time $f(k)\cdot n^2$, where $f$ is some function of $k$ only. 
Kawarabayashi   \cite{Kawarabayashi09} obtained the first linear time algorithm of running time $f(k)\cdot n$ and 
Jansen, Lokshtanov, and Saurabh~\cite{JansenLS14}  obtained an algorithm of running time $\Oh(2^{\Oh(k\log{k})}\cdot n)$.
For the related problem of contracting at most $k$ edges to obtain a planar graph, 
\mnb{\sc Planar Edge Contraction}}, an $f(k)\cdot n^{\Oh(1)}$ time algorithm was obtained by 
Golovach, van ’t Hof and  Paulusma
\cite{GolovachHP13obta}. 
Approximation algorithms for \mnb{\sc Planar Vertex Deletion} and for    \mnb{\sc Planar Edge Deletion} were studied in  
\cite{ChekuriS18,Chuzhoy11,ChuzhoyMS11}.

 \medskip\noindent\textbf{Our results.}  Let $\boxtimes$ be one of the following operations on graphs: {\sf Vertex deletion},   {\sf edge deletion}, {\sf edge contraction},  or 
 {\sf edge addition}.  We are interested whether, for a given graph $G$ and an FOL-sentence  $φ$, 
  it is possible to transform $G$ by applying at most $k$  $\boxtimes$-operations, into a planar graph with the property defined by $φ$. We refer to this problem 
 as the 
 \mnb{\sc  Graph $\boxtimes$-Modification to Planarity and $φ$} problem. 
 For example, when  $\boxtimes$ is the vertex deletion operation, then the problem is \mnb{\sc Planar Vertex Deletion}. Similarly,  \mnb{\sc  Graph $\boxtimes$-Modification to Planarity and $φ$} generalizes  \mnb{\sc Planar Edge Deletion} and {\mnb{\sc Planar Edge Contraction}}. On the other hand, for the special case of $k=0$ this is the problem of 
 deciding FOL-expressible properties on planar graphs.
 
 Examples of first-order expressible properties are deciding whether there the input graph $G$ contains a fixed graph $H$ as a subgraph
 ($H$-{\mnb{\sc Subgraph Isomorphism}}), deciding whether there is a  homomorphism from  a fixed graph $H$ to $G$ to ($H$-{\mnb{\sc Homomorphism}}), satisfying  
 degree constraints (the degree of every vertex of the graph should be between $a$ and $b$ for some constants $a$ and $b$), excluding a subgraph of constant size or having a dominating set of constant size. 
Thus \mnb{\sc  Graph $\boxtimes$-Modification to Planarity and $φ$} encompasses the variety of graph modification problems to planar graphs with specific properties. 
For example,  can we delete $k$ vertices (or edges) such that the obtained graph is planar and 
each vertex belongs to a triangle? Reversely,
can we  delete at most $k$ vertices (or edges) from a graph  such that the resulting graph is a triangle-free planar graph?  
Can we add (or contract) at most $k$ edges such that the resulting graph is  $4$-regular and  planar?  Or can we delete at most $k$ edges resulting in a square-free or claw-free planar graph?

 \smallskip
Informally, our main result can be stated as follows.

 \smallskip

\noindent\textbf{Theorem} (Informal) {\em \mnb{\sc  Graph $\boxtimes$-Modification to Planarity and $φ$} is solvable in time $f(k, φ)\cdot n^2$, for some function $f$ depending on $k$ and $φ$ only. Thus the problem is fixed-parameter tractable, when  parameterized by $k+|φ|$. }

 \smallskip

Our theorem not only implies that \mnb{\sc Planar Vertex Deletion} is fixed-parameter tractable parameterized by $k$ (proved in~\cite{JansenLS14,MarxS07obta}) and that deciding whether a  planar graph has a first-order logic property $φ$ is fixed-parameter tractable parameterized by $|φ|$ (that follows from~\cite{FrickG01deci,DawarGK07loca,DvorakKT13,GroheKS17}).  It also implies a variety of  new algorithmic results about graph modification problems to planar graphs with some specific properties 
that cannot be obtained by applying the known results directly.
Of course, for some formulas $φ$, \mnb{\sc  Graph $\boxtimes$-Modification to Planarity and $φ$} can be solved by more simple techniques. For example, if $φ$ defines a hereditary property characterized by a finite family of forbidden induced subgraphs $\mathcal{F}$, then deciding,
whether it is possible to delete at most $k$ vertices to obtain a planar $\mathcal{F}$-free graph, can be done by combining the straightforward branching algorithm and, say, the algorithm of  Jansen, Lokshtanov, and  Saurabh~\cite{JansenLS14} for \mnb{\sc Planar Vertex Deletion}.
For this, we  iteratively find a copy of each $F\in \mathcal{F}$ and if such a copy exists we branch on all the possibilities to destroy this copy of $F$ by deleting a vertex.
By this procedure, we obtain a search tree of depth at most $k$, whose leaves are all $\mathcal{F}$-free induced subgraphs of the input graph that could be obtained by at most $k$ vertex deletions. Then for each leaf, we use the planarization algorithm limited by the remaining budget. 
However, this does not work for edge modifications,  because deleting an edge in order to ensure planarity may result in creating a copy of a forbidden induced subgraph. For problems with similar features, even for very ``simple'' ones, like deleting $k$ edges to obtain a claw-free planar graph,  or planar graph without induced cycles of length $4$, 
our theorem establishes the first fixed-parameter algorithms. Also  our theorem is applicable to the situation when  $φ$ defines a hereditary property that requires an infinite family of forbidden subgraphs for its characterization and for non-hereditary properties expressible in FOL.

To our knowledge this is the first time that an algorithmic meta-theorem is able to express modification problems such as  \mnb{\sc Planar Vertex Deletion} and its variants.

The price we pay for such generality is the running time. While the polynomial factor in the running time of our algorithm is comparable with the running time of the algorithm of Marx and  Schlotter~\cite{MarxS07obta} for \mnb{\sc Planar Vertex Deletion},
it is worse than the    more advanced algorithms of  Kawarabayashi   \cite{Kawarabayashi09} and 
Jansen et al.~\cite{JansenLS14}. Similarly, the algorithms  for deciding  first-order logic properties on graph classes \cite{DvorakKT13,FrickG01deci,GroheKS17} are faster than our algorithm.  

The proof of the  main theorem is based on a non-trivial combination of the \emph{irrelevant vertex} technique 
 of Robertson and Seymour~\cite{RobertsonS86GMII,RobertsonS95GMXIII}
with the {\sl Gaifman's  Locality Theorem}~\cite{Gaifman82onlo}.  While both techniques were widely used, see  \cite{AdlerKKLST11,CyganMPP13,JansenLS14,GolovachHP13obta,GroheKMW11find,Marx10cany} and 
\cite{DawarGK07loca,FlumG01fixe,FrickG01deci}, the combination of the two techniques requires  novel ideas. 
Following the popular trend in Theoretical Computer Science,  an alternative title for our paper could be {\em ``Robertson and Seymour meet Gaifman''}.

\section{Problem definition and preliminaries}

In this section we formally define  the general \mnb{\sc  Graph $\boxtimes$-Modification to Planarity and $φ$} problem (\autoref{sadffsdgsdgfsg}),  present the theoretical background around Gaifman's Locality Theorem  (\autoref{sdagdfsgdfgdfgdfgdfgasgf}), and provide the main algorithm supporting the proof (\autoref{asdfdfsdfgdhfdsdhgfgdfg}) whose more precise description is postponed until \autoref{dsfskjsdlfkjdks}.

\subsection{Modifications on graphs.}
\label{sadffsdgsdgfsg}

We define ${\sf OP}:=\{{\sf vd}, {\sf ed}, {\sf ec}, {\sf ea}\}$, that is the set of graph operations of {\sf vertex deletion}, {\sf edge deletion}, {\sf edge contraction}, and {\sf edge addition}, respectively.
Given an operation  $\boxtimes\in{\sf OP}$, a graph $G$, and a vertex set $R\subseteq V(G)$,
we define the {\em application domain} of the operation $\boxtimes$ as

$$\boxtimes\langle G,R\rangle=
\begin{cases}
	R, & \text{if }\boxtimes={\sf vd},\\
	E(G)\cap \binom{R}{2}, & \text{if }\boxtimes={\sf ed}, {\sf ec}, \text{and}\\
	\binom{R}{2}\setminus E(G), &\text{if } \boxtimes={\sf ea}.
\end{cases}
$$
Notice that $\boxtimes\langle G,R\rangle$ is either a vertex set or a set of subsets of vertices each of size two.

Given a set $S\subseteq \boxtimes\langle G,R\rangle,$ we define $G\boxtimes S$
as the graph obtained after applying the operation $\boxtimes$ on the elements of $S$.
The vertices of $G$ that are {\em affected} by the modification of $G$ to $G\boxtimes S$, denoted by $A(S)$, 
are the vertices in $S$, in case $\boxtimes={\sf vd}$ or the endpoints of the edges of $S$, in case $\boxtimes\in \{{\sf ed}, {\sf ec}, {\sf ea}\}$. 

Given an FOL-sentence $φ$ and some $\boxtimes\in {\sf OP}$ , we define the following meta-problem:
\begin{center}
	\fbox{
		\begin{minipage}{12cm}
			\noindent\mnb{\sc  Graph $\boxtimes$-Modification to Planarity and $φ$} (In short: \mnb{\sc  G$\boxtimes$MP$φ$})\\ 
			\noindent {\bf Input:}~~A graph $G$ and  a non-negative integer $k$.\\
			{\bf Question:}~~Is there a set $S\subseteq \boxtimes\langle G,V(G)\rangle$ of size $k$ such that $G\boxtimes S$ is a planar graph
			and $G \boxtimes S\models φ$?
		\end{minipage}
	}
\end{center}

{Let $(x_{1}, \ldots, x_{\ell})\in \Bbb{N}^{\ell}$ and $f,g: \Bbb{N} \to \Bbb{N}$. We use notation $f(n)={\cal O}_{x_{1}, \ldots, x_{\ell}}(g(n))$ to denote that there exists a computable function $h:\Bbb{N}^{\ell}\to \Bbb{N}$ such that $f(n)=h(x_{1}, \ldots, x_{\ell})\cdot g(n)$.}
We are ready to give the  formal statement of the main theorem of this paper.
\begin{theorem}\label{jnvcdjkvse}
For every FOL-sentence $φ$ and for every $\boxtimes\in{\sf OP}$,   \mnb{\sc  \mnb{\sc  G$\boxtimes$MP$φ$}} is solvable in time ${\cal O}_{k,|φ|}(n^2)$.
\end{theorem}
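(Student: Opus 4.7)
The plan is to combine the irrelevant vertex technique of Robertson--Seymour with Gaifman's Locality Theorem through a win/win argument driven by the treewidth of the input graph. First, I would preprocess the FOL-sentence $\varphi$ using Gaifman's Locality Theorem, rewriting it as a Boolean combination of basic local sentences of the form
$$\exists x_{1}\cdots\exists x_{m}\Bigl(\bigwedge_{i\neq j} \mathrm{dist}(x_{i}, x_{j})>2r \;\wedge\; \bigwedge_{i=1}^{m}\psi^{(r)}(x_{i})\Bigr),$$
where $\psi^{(r)}$ is an $r$-local formula and $r,m$ depend only on $|\varphi|$. Consequently, deciding whether a graph $H$ satisfies $\varphi$ reduces to computing, for each local $r$-type $\tau$, the number of vertices of $H$ realizing $\tau$, saturated at the threshold $m$.

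The algorithm is iterative. At each step it checks whether the current graph $G$ satisfies $\tw(G)\le g(k,|\varphi|)$ for a suitable function $g$. In the positive case, the modified instance is solved by dynamic programming on a tree decomposition: the states track the (at most $k$) $\boxtimes$-operations intersecting each bag, a partial planar embedding of the bag's torso, and the Gaifman-type profile of the vertices in the bag. Since planarity is $\MSO_{2}$-expressible and the set of possible local types is finite and computable from $|\varphi|$, a Courcelle-style routine yields an $f_{1}(k,|\varphi|)\cdot n$ time solution at this base case.

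Otherwise, a constant-factor treewidth approximation returns a witness of large treewidth, from which one extracts, in linear time, a \emph{flat wall} $W$ of height $h=h(k,|\varphi|)$. In the interior of $W$, the $r$-ball around each vertex is isomorphic to the canonical $r$-ball in an infinite wall, hence all such vertices realize the same generic local type $\tau^{\star}$. Choosing $h$ sufficiently large in terms of $k$, $r$, and the saturation threshold $m$ arising in the Gaifman normal form, I would locate a vertex $v$ deep inside $W$ that is simultaneously (a) \emph{planarity-irrelevant}, in the sense that every $\boxtimes$-modification set $S\subseteq\boxtimes\langle G,V(G)\rangle$ of size at most $k$ can be rerouted, using the flat structure of $W$, so as to avoid touching the $r$-neighborhood of $v$; and (b) \emph{$\varphi$-irrelevant}, in the sense that discarding the single copy of $\tau^{\star}$ realized by $v$ still leaves, in $G\boxtimes S$, at least $m$ vertices of type $\tau^{\star}$ for every basic local sentence occurring in the Gaifman normal form of $\varphi$. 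From (a) and (b) one concludes that $(G,k)$ and $(G-v,k)$ are equivalent instances, so we can delete $v$ and recurse; after at most $n$ deletions the treewidth drops below $g(k,|\varphi|)$ and we switch to the dynamic programming case, giving total time ${\cal O}_{k,|\varphi|}(n^{2})$.

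The main obstacle is to establish (a)+(b) uniformly for every $\boxtimes\in{\sf OP}$. Vertex removal is directly compatible with the classical flat-wall irrelevant-vertex lemma, but edge contraction can identify vertices and thus fuse local neighborhoods of potentially different types, while edge addition can shortcut distances and shift the effective locality radius of $\varphi$. The delicate technical core is to prove that, inside a sufficiently deep and generic portion of $W$, no optimal $\boxtimes$-modification has to operate inside the $r$-ball of $v$, and that the deletion of $v$ preserves the Gaifman count profile up to the saturation threshold. This requires an extension of the Robertson--Seymour rerouting argument to all four operations of ${\sf OP}$, carefully synchronized with the radius $r$ produced by Gaifman's theorem so that the two notions of irrelevance coincide on the same vertex $v$.
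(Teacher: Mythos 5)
Your high-level plan (Gaifman normal form, win/win on treewidth, Courcelle/DP in the bounded-treewidth case, irrelevant-vertex extraction from a flat wall otherwise) matches the paper's architecture, but there are two places where your sketch as written would break, and one of them is precisely the ``delicate technical core'' you flag at the end.

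First, your reformulation of Gaifman's theorem as ``computing, for each local $r$-type $\tau$, the number of vertices realizing $\tau$, saturated at a threshold $m$'' is not what the theorem gives you, and it is not correct for planar inputs. The basic local sentences quantify over \emph{scattered} tuples: $m$ vertices that are pairwise at distance $>2r$ and each satisfy an $r$-local formula $\psi$. Counting satisfiers of $\psi$ does not suffice unless one has a degree (or $r$-ball size) bound, which planar graphs do not provide. Likewise, your assertion that the $r$-ball of a vertex deep in the wall is ``isomorphic to the canonical $r$-ball in an infinite wall, hence all such vertices realize the same generic local type $\tau^\star$'' is false: the compass of the flat wall is an arbitrary planar graph of bounded treewidth that can decorate the wall faces with arbitrary bricks, so the $r$-neighborhoods of distinct central vertices in $G$ can have very different structure, and there is no single type $\tau^\star$. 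The paper handles this not by classifying $r$-balls but by computing a \emph{$(\varphi,\boxtimes)$-characteristic} of each subwall's compass (encoding which partial solutions $S$ and which scattered-set fragments a compass can support, at each annulus depth), and then finding many subwalls with identical characteristic; that is what licenses the replacement.

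Second, and more fundamentally, your conclusion that ``$(G,k)$ and $(G-v,k)$ are equivalent instances'' does not follow from (a)$+$(b) as stated and is, in fact, not what the paper proves. The paper lifts the problem to annotated triples $(G,R,k)$ and shows that $(G,R,k)$ is a $(\varphi,\boxtimes)$-triple iff $(G\setminus v,\,R\setminus X,\,k)$ is, where $X$ is the vertex set of an entire inner compass $\comp(W_1^{(r)})$, not just the single vertex $v$. The annotation $R$ is essential: after you discard $v$, you must \emph{remember} that neither the modification set nor the existential witnesses of the basic local sentences should be sought inside the scrubbed part of the wall; otherwise the biconditional fails in the direction from $G-v$ to $G$, because removing $v$ alters $r$-neighborhoods (hence local types) of its neighbors and can create or destroy scattered witnesses that you have no way to account for. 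Your conditions (a) and (b) do not capture this bookkeeping, and without the $R$-annotation the recursion is not sound. This is exactly the role of \autoref{sdmgflsdskgfnaksdffdglsgklaamgfrlka} and \autoref{wedndlsakfnlkdsafnkalen} in the paper, and their ``replacement argument'' across $(\varphi,\boxtimes)$-equivalent subwalls, which your sketch gestures at but does not supply.
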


\subsection{Gaifman's  theorem}
\label{sdagdfsgdfgdfgdfgdfgasgf}

 For vertices $u,v$ of graph $G$, we use $d_G(u,v)$ to denote the distance between $u$ and $v$ in $G$. We also use  $N_{G}^{(r)}(v)$ to denote the set of vertices of $G$ at distance at most $r$ from $v$. 

Gaifman's locality theorem is an important ingredient of our proof. We use the shortcut {\em FOL-formula/sentence} for logical formulas/sentences in First-Order Logic.
Given an FOL-formula $\psi(x)$ with one free variable $x$, we say that $\psi(x)$ is
{\em $r$-local} if the validity of $\psi(x)$  depends  only on the $r$-neighborhood of $x$, that is  for every graph $G$ and $v\in V(G)$ we have  
\[G\models \psi(v) \iff G[N_{G}^{(r)}(v)]\models \psi(v).\]
Observe that there exists an FOL-formula $\delta_{r}(x,y)$ such that for every graph $G$ and $v,u\in V(G)$, we have $d_{G}(u,v)\leq r\iff G\models \delta_{r}(v,u)$ (see~\cite[Lemma~12.26]{FlumG06para}).

We say that an FOL-sentence $φ$ is a {\em Gaifman sentence} when it is 
a Boolean combination of sentences $φ_{1}, \ldots, φ_{m}$ such that, for every $h\in[m]$, 		
\begin{eqnarray}
φ_{h}=\exists x_{1}\ldots\exists x_{\ell_{h}}\big( \bigwedge_{1\leq i<j\leq \ell_{h}} d(x_{i}, x_{j})> 2r_{h}\wedge \bigwedge_{i\in [\ell_{h}]}\psi_{h}(x_{i})\big),\label{gaifd}
\end{eqnarray}
 where $\ell_{h},r_{h}\geq 1$ and $\psi_{h}$ is an $r_{h}$-local formula with one free variable.
We refer to the variables $x_{1},\ldots,x_{\ell_{h}}$ for each $h\in[m]$ as the {\em basic variables} of $φ$.
Moreover, for every $ h\in [m]$, 
we call $φ_{h}$ a {\em basic sentence} of $φ$ and the formula $\psi_{h}$ a {\em basic local formula} of $φ$.

\begin{proposition}[Gaifman's Theorem~\cite{Gaifman82onlo}]\label{dnklsdnvkla}
	Every first-order sentence $φ$ is equivalent to a Gaifman sentence $φ'$.
	 Furthermore, 
$φ'$ can be computed effectively.
\end{proposition}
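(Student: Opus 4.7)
The plan is to prove Gaifman's theorem by structural induction on $φ$, but via a strengthened inductive statement that also handles formulas with free variables. Concretely, I would show that every FO-formula $\vartheta(\bar{x})$ with free variables $\bar{x}=(x_{1},\ldots,x_{n})$ is equivalent to a Boolean combination of (i) $r$-local formulas $\chi(\bar{x})$, whose truth at $\bar{x}$ in $G$ depends only on $G[N_{G}^{(r)}(\bar{x})]$ for some $r$, and (ii) basic local sentences of the shape displayed in \eqref{gaifd}. Applied to a sentence $φ$ (no free variables) the local formulas collapse to Boolean constants, leaving a Boolean combination of basic local sentences, which is exactly a Gaifman sentence.

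The base case (atomic formulas are $0$-local) and the Boolean cases are immediate, since Boolean combinations of local formulas and basic local sentences are closed under $\wedge$, $\vee$, and $\neg$. The entire difficulty is the existential case: assuming $\psi(\bar{x},y)$ is already in this normal form, one has to rewrite $\exists y\,\psi(\bar{x},y)$ in the same normal form. After converting $\psi$ to DNF and distributing $\exists y$ over disjunctions, the basic local sentences (which mention neither $\bar{x}$ nor $y$) can be pulled outside the quantifier, reducing everything to the analysis of $\exists y\,\chi(\bar{x},y)$ where $\chi$ is $t$-local in the tuple $(\bar{x},y)$.

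Here I would split on the position of the witness $y$. In the \emph{close} regime, where $y$ lies within distance $2t+1$ of some $x_{j}$, the formula $\exists y\,\bigl(\delta_{2t+1}(y,x_{j})\wedge \chi(\bar{x},y)\bigr)$ is $(3t+1)$-local in $\bar{x}$ by a direct neighborhood argument. In the \emph{scattered} regime, where $y$ is at distance more than $2t+1$ from every $x_{j}$, the $t$-neighborhoods of $y$ and of $\bar{x}$ are disjoint; by enumerating the finitely many isomorphism types of pointed $t$-neighborhoods of $(n+1)$-tuples, one obtains under the scatter assumption an equivalence $\chi(\bar{x},y) \equiv \bigvee_{i}\bigl(\alpha_{i}(\bar{x})\wedge \beta_{i}(y)\bigr)$ with $\alpha_{i}$ and $\beta_{i}$ both $t$-local. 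The factors $\alpha_{i}(\bar{x})$ stay local around $\bar{x}$, so what remains is to re-encode $\exists y\,\bigl(\bigwedge_{j}\neg\delta_{2t+1}(y,x_{j})\wedge \beta_{i}(y)\bigr)$ without reference to $\bar{x}$.

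The main obstacle, and the technical core of the proof, is this last elimination of the distance condition. The decisive observation is a pigeonhole argument: if $G$ contains $n+1$ pairwise scattered witnesses $y_{1},\ldots,y_{n+1}$ of $\beta_{i}$, at pairwise distance greater than $2(2t+1)$, then at most $n=|\bar{x}|$ of them can lie within distance $2t+1$ of some $x_{j}$, so at least one of them satisfies the distance requirement. This gives a basic local sentence of exactly the shape in \eqref{gaifd} that is \emph{sufficient} for the existence of a scattered witness, while the complementary case, in which fewer than $n+1$ pairwise scattered witnesses of $\beta_{i}$ exist, can be handled by a finite disjunction over the possible multiplicities, each branch producing either a basic local sentence or a local formula around $\bar{x}$. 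The \emph{Furthermore} clause then comes for free: the type enumerations, pigeonhole thresholds, and DNF conversions at each inductive step are computable from the syntax of $φ$, so $φ'$ is produced by a terminating recursive procedure.
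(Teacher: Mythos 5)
The paper does not prove this proposition at all: it is invoked as a classical black box and attributed to Gaifman's original paper \cite{Gaifman82onlo}, so there is no in-paper argument to measure your attempt against. What you have written is a sketch of the standard proof (strengthened induction over formulas with free variables, a close/scattered case split in the existential step, and the pigeonhole argument over $n+1$ pairwise scattered witnesses), and its overall architecture is the right one.

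Two points in the sketch would fail as written and need repair. First, there are \emph{infinitely} many isomorphism types of pointed $t$-neighborhoods (the neighborhoods can be arbitrarily large graphs), so the decomposition $\chi(\bar{x},y)\equiv\bigvee_{i}\bigl(\alpha_{i}(\bar{x})\wedge\beta_{i}(y)\bigr)$ cannot be obtained by ``enumerating the finitely many isomorphism types''; what is finite is the set of first-order types of bounded quantifier rank (Hintikka formulas), and the splitting must be justified by an Ehrenfeucht--Fra\"{i}ss\'{e} or Feferman--Vaught argument on the disjoint union of the two neighborhoods, using that the rank of $\chi$ is bounded. Second, the ``complementary case'' with fewer than $n+1$ pairwise scattered witnesses is where the real content of Gaifman's proof lies, and ``a finite disjunction over the possible multiplicities'' does not by itself yield formulas of the required shape \eqref{gaifd}: one has to fix the maximal size $e\le n$ of a $(2(2t+1))$-scattered set of witnesses of $\beta_{i}$ (which is expressible by basic local sentences), observe that every witness then lies within bounded distance of such a maximal set, and conclude that under this hypothesis the existence of a witness far from $\bar{x}$ collapses to a local condition around $\bar{x}$. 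Until those two steps are carried out, the argument is an outline of the known proof rather than a proof.
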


\subsection{Equivalent formulations}
\label{asdfdfsdfgdhfdsdhgfgdfg}

Given a Gaifman sentence $φ$ combined from sentences $φ_{1}, \ldots, φ_{m}$ and a unary relation symbol $R$, we define 
$\tilde{φ}$ as the sentence that is the same Boolean combination 
of sentences $\tilde{φ}_{1}, \ldots, \tilde{φ}_{m}$ such that, for every $h\in[m]$, 	
\begin{eqnarray}
\tilde{φ}_{h}=\exists x_{1}\ldots\exists x_{\ell_{h}}\big(\bigwedge_{i\in [\ell_{h}]}x_{i}\in R\wedge \bigwedge_{1\leq i<j\leq \ell_{h}} d(x_{i}, x_{j})> 2r_{h}\wedge \bigwedge_{i\in [\ell_{h}]}\psi_{h}(x_{i})\big),\label{gaidsfd}
\end{eqnarray}
where $\ell_{h},r_{h}\geq 1$ and $\psi_{h}$ is an $r_{h}$-local formula with one free variable.
Notice that $\tilde{φ}$ is evaluated on annotated graphs of the form $(G,R)$.

Let $(G,k)$ be an instance of the \mnb{\sc  \mnb{\sc  G$\boxtimes$MP$φ$}} problem.
 We may assume, because of~\autoref{dnklsdnvkla}, that $φ$ is a Gaifman  sentence. We  consider an enhanced version of the \mnb{\sc  \mnb{\sc  G$\boxtimes$MP$φ$}} problem as follows.
Let $(G,R,k)$ be a triple, where $G$ is a graph, $R\subseteq V(G)$, and $k\in \Bbb{N}$. 
We say that $(G,R,k)$ is a {\em $(φ,\boxtimes)$-triple}  if there exists set $S\subseteq\boxtimes\langle G,R\rangle$
such that $|S|\leq k$, $G\boxtimes S$ is a planar graph, and $(G\boxtimes S,R)\models \tilde{φ}$.
It is easy to observe that the property that $(G,R,k)$
is a $(φ,\boxtimes)$-triple can be expressed in MSOL.
This is easy in case $\boxtimes\in\{{\sf vd}, {\sf ed}, {\sf ec}\}$. In the case  where $\boxtimes={\sf ea}$,
we use some syntactic interpretation argument, given in~\autoref{fenflknflkas} (\autoref{kdsafksdfkalsdfkla}).

Also, we say that a set $S\subseteq\boxtimes\langle G,V(G)\rangle$ is a {\em $\boxtimes$-planarizer} of $G$ if $G\boxtimes S$ is planar.
\autoref{jnvcdjkvse}  is a consequence of  the following lemma.

\begin{lemma}\label{sdmgflsdskgfnaksdffdglsgklaamgfrlka}
	Given a 
	Gaifman sentence $φ$ and a $\boxtimes\in {\sf OP}$, there exists a function $\newfun{lksgjreklgjrnjighpotrr}:\Bbb{N}^{2}\to\Bbb{N}$, 
	and an algorithm with the following specifications:\medskip
	
	\noindent{\bf Reduce\_Instance}$(k,G,S,R)$\\
	\noindent {\sl Input:} an integer  $k\in\Bbb{N}$, a graph  $G$, a set $R\subseteq V(G)$, and a set $S\subseteq R$ that is a  ${\sf vd}$-planarizer of $G$ of size at most $k$.
	
	\noindent{\sl Output:} One of the following:
	
	\begin{enumerate}
	\item 
	\begin{itemize}
	\item if $\boxtimes\in \{{\sf ed}, {\sf ec},{\sf ea}\}$: a report that $(G,k)$ is a \no-instance of \mnb{\mnb{\sc \sc  G$\boxtimes$MP$φ$}}.
		\item if $\boxtimes={\sf vd}$: a  vertex $u\in S$ such that $S\setminus \{u\}$ is a  ${\sf vd}$-planarizer of $G\setminus u$ of size at most $k-1$ and   $(G,k)$ and $(G\setminus u, k-1)$ are equivalent instances  of 
		\mnb{\mnb{\sc \sc  G$\boxtimes$MP$φ$}}.
	\end{itemize}
		
		\item  a vertex set  $X\subseteq V(G)$
		and a vertex $v\in X$  such that $S\subseteq R\setminus X$ and $(G,R,k)$ is a $(φ,\boxtimes)$-triple if and only if $(G\setminus v,R\setminus X, k)$ is a $(φ,\boxtimes)$-triple.
				
		\item a tree decomposition of $G$ of width at most $\funref{lksgjreklgjrnjighpotrr}(k,|φ|)$.
	\end{enumerate}
	Moreover, this algorithm runs in ${\cal O}_{k,|φ|}(n)$ steps.
\end{lemma}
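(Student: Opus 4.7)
The plan is to combine the irrelevant vertex technique from Graph Minors with Gaifman's Locality Theorem. Since $\varphi$ is assumed to be in Gaifman normal form, let $r^{*}=r^{*}(|\varphi|)$ denote the maximum locality radius $r_{h}$ among its basic local formulas; all structural quantities below will be chosen as explicit functions of $k$ and $r^{*}$ (equivalently of $k$ and $|\varphi|$). Crucially, because the input $S$ is a ${\sf vr}$-planarizer of size at most $k$, the graph $G\setminus S$ is planar, so $G$ is $k$-apex from a planar graph.

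First, a constant-factor approximation for treewidth can be run in ${\cal O}_{k,|\varphi|}(n)$ time. If its output width is at most the threshold $\funref{lksgjreklgjrnjighpotrr}(k,|\varphi|)$, the algorithm returns Option~3. Otherwise $\tw(G)>\funref{lksgjreklgjrnjighpotrr}(k,|\varphi|)$, and since $\tw(G\setminus S)\geq \tw(G)-k$, the planar graph $G\setminus S$ has very large treewidth. Applying the linear Planar Excluded Grid Theorem to $G\setminus S$ yields a wall $W$ of height $h=h(k,|\varphi|)$ embedded planarly in $G\setminus S$; re-adding $S$ turns $W$ into a flat wall of $G$ whose apex set lies in $S$. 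The height $h$ will be chosen much larger than both $k$ and $r^{*}$.

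Next, the irrelevant vertex is extracted from $W$. The wall $W$ contains a large family of pairwise disjoint concentric sub-walls. On one hand, for any $\boxtimes\in{\sf OP}$ and any modification set $S'$ of size at most $k$, only $\Oh(k)$ of these sub-walls can be disturbed by the vertices affected by $S'$; hence some inner sub-wall $W'$ of sufficient height survives, and the standard flat-wall irrelevant-vertex argument (as in Robertson--Seymour and refinements such as~\cite{AdlerKKLST11}) guarantees that a chosen central vertex $v$ of $W'$ is irrelevant for planarity: $(G\boxtimes S')$ is planar if and only if $(G\setminus v)\boxtimes S''$ is planar for a suitable $S''$ of size at most $|S'|$. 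On the other hand, by pigeonhole on the many pairwise $>2r^{*}$-distant candidate centers inside $W$, one can find a ``homogeneous'' zone $X\ni v$ inside the wall whose vertices all share isomorphic $r^{*}$-neighborhoods as annotated structures with respect to $R$. Via Gaifman's Theorem, the truth value of $\tilde{\varphi}$ on the resulting modified graph is insensitive to whether any basic variable lands on a vertex of $X$ or on an equivalent vertex outside $X$. Combining both parts yields the equivalence $(G,R,k)\equiv(G\setminus v,R\setminus X,k)$ of $(\varphi,\boxtimes)$-triples demanded by Option~2, where $S\subseteq R\setminus X$ is ensured by placing $X$ in the interior of the wall, far from $S$, which is possible as the height $h$ dominates $k$.

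The main obstacle is the \emph{simultaneous} control of the two layers. The vertex $v$ must be a Robertson--Seymour style irrelevant vertex under all four operations $\boxtimes$, including the subtle case $\boxtimes={\sf ea}$ where new edges among $R$-vertices can disrupt local neighborhoods of wall vertices; and at the same time $v$ together with its ``twin'' zone $X$ must be Gaifman-irrelevant for $\tilde{\varphi}$. The function $h(k,|\varphi|)$ must therefore be chosen large enough to survive $k$ arbitrary modifications \emph{and} leave a sub-wall of radius $>2r^{*}$ around $v$ whose $r^{*}$-neighborhoods are pairwise isomorphic. Option~1 corresponds to the configurations in which no such combined $v$ and $X$ can be identified inside the chosen wall: a case analysis on how $S$ meets the wall then shows either that some $u\in S$ is forced into every vr-solution (handled for $\boxtimes={\sf vr}$ by committing to $u$ and recursing on $(G\setminus u,k-1)$), or that the wall obstructions cannot be removed by $k$ edge modifications at all, which licences the \no-report for $\boxtimes\in\{{\sf er},{\sf ec},{\sf ea}\}$.
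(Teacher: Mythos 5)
Your high-level plan (treewidth dichotomy, flat wall in $G\setminus S$, irrelevant vertex combining planarity with Gaifman locality) agrees in outline with the paper. However, the Gaifman engine you propose does not work. You propose to find a ``homogeneous zone $X \ni v$ whose vertices all share isomorphic $r^*$-neighborhoods as annotated structures with respect to $R$'' and then invoke insensitivity of $\tilde{\varphi}$. The difficulty is that the local formulas $\psi_h(x)$ in the Gaifman form are evaluated on the modified graph $G\boxtimes S'$, not on $G$: a homogeneous zone in $(G,R)$ says nothing about what the $r^*$-neighborhoods look like after an unknown modification $S'$ of size up to $k$ hits the wall's compass (for example when $\boxtimes = {\sf ea}$, new edges among $R$-vertices can rewire neighborhoods). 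Moreover, even in the unmodified graph, isomorphism of $r^*$-neighborhoods does not justify swapping a witness in a Gaifman tuple: the replacement must also satisfy the global $>2r_h$-scatteredness constraint against the \emph{other} witnesses, which may lie far away. The paper circumvents both issues with a much heavier device, a $(\varphi,\boxtimes)$-\emph{characteristic} of each candidate subwall: an MSOL-definable object, computable by Courcelle's theorem on the bounded-treewidth compass, recording for every possible partial solution landing inside the wall and every radius level which combinations of scattered sub-tuples satisfying the $\psi_h$ are realizable in the corresponding annular region. Two subwalls with equal characteristic are interchangeable for both the modification and the witness assignment, and pigeonhole over many disjoint subwalls yields enough equivalent ones to carry out a two-stage replacement argument. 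That mechanism, not local isomorphism of $r^*$-neighborhoods, is what makes the combined irrelevant-vertex step sound.

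Two secondary gaps. First, Option~1 is not a residual ``nothing worked'' case: the paper detects it concretely via bidimensionality, by contracting $G\setminus S$ to a large triangulated grid, examining the images of $N_G(S)$, and exhibiting a $(K_5,k{+}1)$-star minor centered on some $u\in S$, which forces $u$ into every vr-solution or certifies a \no-instance for ${\sf er}/{\sf ec}$; your sketch supplies no such detection mechanism, only an appeal to ``a case analysis.'' Second, your per-$S'$ formulation (``$v$ is irrelevant for planarity, for a suitable $S''$'') is not what is needed, since the output pair $(X,v)$ must be fixed before any solution is seen; the paper instead proves a fixed, $S'$-free statement, namely that the \emph{entire} compass of the chosen wall is $\boxtimes$-planarization irrelevant (no inclusion-minimal $\boxtimes$-planarizer of size at most $k$ affects it), established via a dedicated annulus-gluing lemma, and only then runs the characteristic-based replacement inside this protected region.
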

We postpone  the formal definitions of a tree decomposition and treewidth till  \autoref{fenflknflkas}.
Given \autoref{sdmgflsdskgfnaksdffdglsgklaamgfrlka}, we proceed to provide the proof of \autoref{jnvcdjkvse}.
Before this, we present two results that will also be used in the proof of \autoref{jnvcdjkvse}.

First, we use the algorithm of Jansen, Lokshtanov, and Saurabh~\cite{JansenLS14} for \mnb{\sc Planar Vertex Deletion}.
\begin{proposition}\label{planJLS}
There is an algorithm that, given a graph $G$ and an integer $k$, outputs, in time $2^{{\cal O}(k\log k)}\cdot n$, either a minimum-size ${\sf vd}$-planarizer $S$ of $G$ of size at most $k$, or a report that there is no ${\sf vd}$-planarizer $S$ of $G$ of size at most $k$.
\end{proposition}

Also, the following result of Golovach, van ’t Hof, and Paulusma~\cite[Lemma 1]{GolovachHP13obta} will allow us to argue about the existence of a ${\sf vr}$-planarizer of a graph $G$ of size at most $k$,
if  an {\sf ec}- or an {\sf ed}-planarizer of $G$ of size at most $k$ exists.

\begin{proposition}\label{planGHP}
If there is an {\sf ec}- or an {\sf ed}-planarizer of $G$ of size at most $k$, then there is a ${\sf vr}$-planarizer of $G$ of size at most $k$.
\end{proposition}

\begin{proof}[Proof of \autoref{jnvcdjkvse}]
Let $φ$ be an FOL-formula.
By \autoref{dnklsdnvkla}, $φ$ is equivalent to a Gaifman sentence $φ'$.
Using the planarization algorithm of~\autoref{planJLS},  
we compute,  
in $2^{{\cal O}(k\log k)}\cdot n$ steps, a ${\sf vd}$-planarizer $S$ of $G$ of size at most $k$.
If $\boxtimes={\sf ea}$, then $S:=\emptyset$,
while if $\boxtimes\in\{{\sf vd}, {\sf ed}, {\sf ec}\}$, then if such a set does not exist, we safely return a negative answer
(for the case of $\boxtimes={\sf ed}, {\sf ec}$, this is due to 
the  fact that, due to~\autoref{planGHP},
if there exists an {\sf ec}- or an {\sf ed}-planarizer of $G$  of size at most $k$ then also a  ${\sf vd}$-planarizer of $G$ of size at most $k$  exists).
We are now in position to apply recursively the algorithm {\bf Reduce\_Instance}$(k,G,S,R)$ of \autoref{sdmgflsdskgfnaksdffdglsgklaamgfrlka} until either an answer or the third case appears.
In the first case, we either return a negative answer, if $\boxtimes\in\{{\sf ed}, {\sf ec}, {\sf ea}\}$, or set $(k,G,S,R):=(k-1,G\setminus v, S\setminus \{v\}, R)$ if $\boxtimes={\sf vd}$, while in the second case we set $(k,G,S,R):=(k,G\setminus v,S,R\setminus X)$.
In the third case we have that $\tw(G)\leq \funref{lksgjreklgjrnjighpotrr}(k,|φ'|)$. Recall that the property that $(G,R,k)$ is a $(φ,\boxtimes)$-triple can be expressed in MSOL, thus the status of the final equivalent instance  $(G,R,k)$  can be evaluated  in ${\cal O}_{k,|φ|}(n)$ steps by applying Courcelle's theorem.
As the recursion takes at most $n$ steps, we obtain the claimed running time.
\end{proof}

\section{The algorithm}\label{dsfskjsdlfkjdks}
In this section, we aim to present the proof of \autoref{sdmgflsdskgfnaksdffdglsgklaamgfrlka}.
In \autoref{sadgfsfghdfjgsfdhgfgshsdgf}, we present the two main lemmata (\autoref{sdmgflsadmgfrlka} and \autoref{wedndlsakfnlkdsafnkalen}) that support the proof of \autoref{sdmgflsdskgfnaksdffdglsgklaamgfrlka} and in \autoref{asfdfgjhfgr435ertyhgs} we sketch the proof of \autoref{wedndlsakfnlkdsafnkalen}, which contains the core of the arguments of this paper.

\subsection{Two main lemmata}
\label{sadgfsfghdfjgsfdhgfgshsdgf}
We now give two lemmata, whose combination gives the proof of \autoref{sdmgflsdskgfnaksdffdglsgklaamgfrlka}.
Before we state them, we give a series of definitions. Some of them will be given on an intuitive level, while their formal versions are postponed to \autoref{fenflknflkas}.
The proofs of the two lemmata are postponed to \autoref{dnksldngklfdnglkad} and \autoref{dksdlgdsgjkloew}, respectively.

Let $\boxtimes\in{\sf OP}$, $G$ be a graph, $k\in\Bbb{N}$,
and let $S$ be a $\boxtimes$-planarizer of $G$.
We say that $S$ is an 
{\em inclusion-minimal $\boxtimes$-planarizer}  of $G$
if none of its proper subsets is a $\boxtimes$-planarizer of $G$.
Notice that, in the special case where $\boxtimes={\sf ea}$, the unique inclusion-minimal $\boxtimes$-planarizer of $G$ is the empty set of edges.
We say that  a set $Q\subseteq V(G)$ is {\em $\boxtimes$-planarization irrelevant} if for every inclusion-minimal $\boxtimes$-planarizer $S$ of $G$ that has size at most $k$, it holds that $A(S) \cap Q = \emptyset$.
We say that a graph $G$ is {\em partially disk-embedded in some closed disk $\Delta$}, 
if there is some subgraph $K$ of $G$ that is embedded in $\Delta$ whose boundary, denoted by $\bd(\Delta)$,  is a cycle of $K$  and
no vertex in the interior of $\Delta$ is adjacent to a vertex not in $\Delta$.
We use the term {\em partially $\Delta$-embedded graph $G$}
to denote that a graph $G$ is  partially disk-embedded in some closed disk $\Delta$. We also call the graph $K$
{\em compass}
of the partially $\Delta$-embedded graph $G$ and we always assume that we accompany
a partially $\Delta$-embedded graph $G$ together with an embedding of its compass in $\Delta$ that is the set $G\cap \Delta$.

The concept of  $q$-wall, where $q$ is odd, is visualized in \autoref{asfdsfdsfasdfsdfdsf}. In the same figure 
are depicted the {\em layers} (in red and blue) and
 the {\em perimeter} (the outermost layer)  of a $q$-wall (the formal definitions are postponed to \autoref{fenflknflkas}).
 Also the {\em branch vertices} are depicted in yellow.
 \begin{figure}[ht]
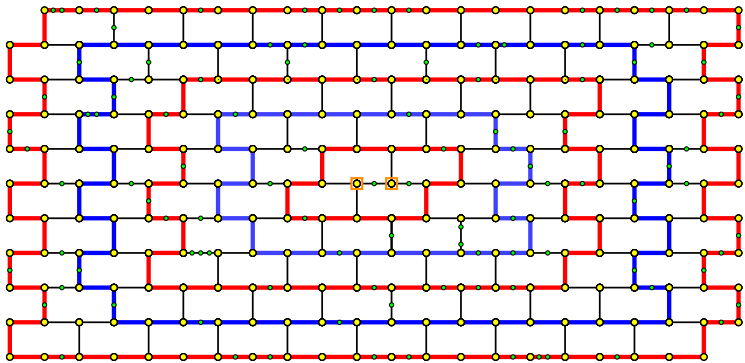

	\centering
\resizebox{10cm}{!}{

}
	\caption{An $11$-wall and its 5 layers.}
	\label{asfdsfdsfasdfsdfdsf}
\end{figure}  
Let $W$ be a wall of a graph $G$. We use $\perim(W)$ to denote the perimeter of $W$.
The two branch vertices of $W$ that do not belong to any layer and are connected by a path that does not intersect any layer are called the {\em central vertices} of $W$  (depicted by two orange squared vertices in \autoref{asfdsfdsfasdfsdfdsf}).
We denote the central vertices of $W$ by $\cen(W)$.
Let  $K'$ be the connected component of $G\setminus \perim(W)$ that contains $W\setminus \perim(W)$.
The {\em compass} of $W$, denoted by $\comp(W)$, is the graph $G[V(K')\cup V(\perim(W))]$. Observe that $W$ is a subgraph of $\comp(W)$ and $\comp(W)$ is connected.
In what follows we will always consider walls that are drawn 
inside the disk of a partially $\Delta$-embedded graph. Therefore, we can see the compass of $W$ as the part 
of the graph that is drawn inside the closed disk boundary the perimeter of $W$.
We are now in position to state the following two lemmata.

\begin{lemma}\label{sdmgflsadmgfrlka}
Given a Gaifman sentence $φ$ and a $\boxtimes\in {\sf OP}$, there exist two
functions $\funref{lksgjreklgjrnjighpotrr},\newfun{jsdfjbnvjfdak}:\Bbb{N}^{2}\to\Bbb{N}$, 
and an algorithm with the following specifications:\smallskip
	
	\noindent{\bf Find\_Area}$(k,q,G,S)$\\
	\noindent {\sl Input:} a $k\in\Bbb{N}$, an odd $q\in\Bbb{N}_{\geq 1}$, a graph  $G$, and a set $S\subseteq V(G)$ that is a ${\sf vd}$-planarizer of $G$ of size at most $k$.
	
	\noindent{\sl Output:} One of the following:
		
	\begin{enumerate}
		\item  
	\begin{itemize}
	\item if $\boxtimes\in \{{\sf ed}, {\sf ec},{\sf ea}\}$: a report that $(G,k)$ is a \no-instance of \mnb{\mnb{\sc \sc  G$\boxtimes$MP$φ$}}.
		\item if $\boxtimes={\sf vd}$: a  vertex $u\in S$ such that $S\setminus \{u\}$ is a  ${\sf vd}$-planarizer of $G\setminus u$ of size at most $k-1$ and   $(G,k)$ and $(G\setminus u, k-1)$ are equivalent instances  of 
		\mnb{\mnb{\sc \sc  G$\boxtimes$MP$φ$}}.
	\end{itemize}	
		
		\item  a $q$-wall $W$ of $G$ and a closed disk $\Delta$ such that
		\begin{itemize}
			\item the compass of $W$ has treewidth at most $\funref{jsdfjbnvjfdak}(k,q)$,
			\item $G$ is partially $\Delta$-embedded, where $G\cap \Delta=\comp(W)$, $\bd(\Delta)=\perim(W)$,
			\item $V(\comp(W))$ is $\boxtimes$-planarization irrelevant, and
			\item $N_{G}(S)\cap V(\comp(W))=\emptyset$, or
		\end{itemize}

		\item a tree decomposition of $G$ of width at most $\funref{lksgjreklgjrnjighpotrr}(k,q)$.
	\end{enumerate}
	
\noindent Moreover, this algorithm runs in ${\cal O}_{k,q}(n)$ steps.
\end{lemma}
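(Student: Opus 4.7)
The plan is to combine a linear-time planar treewidth approximation with the constructive grid/wall theorem for planar graphs, and then apply an irrelevant-vertex style argument inside a disk-embedded subregion. First I would run a linear-time constant-factor treewidth approximation on the planar graph $G\setminus S$. If the returned width is at most some function $h(k,q)$, then reinserting the at most $k$ vertices of $S$ into every bag yields a tree decomposition of $G$ of width at most $h(k,q)+k$, so I output case~3 with $\funref{lksgjreklgjrnjighpotrr}(k,q):=h(k,q)+k$; this branch runs in $\mathcal{O}_{k,q}(n)$ time.

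If $\tw(G\setminus S)>h(k,q)$, then by the planar grid/wall theorem in its constructive, linear-time form I would extract a flat wall $W'$ of $G\setminus S$ whose order $g(k,q)$ can be made arbitrarily large by choosing $h$ appropriately, together with a planar embedding of $G\setminus S$ in which $\perim(W')$ bounds a closed disk. I would then tile $W'$ into many pairwise vertex-disjoint $q$-subwalls arranged as bricks, each wrapped in several concentric ``buffer'' layers of $W'$ that separate it from the rest of the graph. Using that $|S|\leq k$, a pigeonhole argument over the tiles --- exploiting the fact that $N_{G\setminus S}(S)$ can meet only boundedly many tile-compasses without witnessing an obstruction to every planarizer of size $k$ --- would produce a $q$-subwall $W$ whose compass is disjoint from $N_G(S)$ and is surrounded by sufficiently many free concentric layers.

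Given such a $W$, the disk $\Delta$ enclosed by $\perim(W)$ in the planar embedding witnesses the partial $\Delta$-embedding with $G\cap\Delta=\comp(W)$; the buffer layers around $W$ are what ensure $\comp(W)$ has treewidth at most $\funref{jsdfjbnvjfdak}(k,q)$, since the compass is essentially a bounded concentric neighbourhood of the $q$-wall inside the disk. The $\boxtimes$-planarization irrelevance of $V(\comp(W))$ is the delicate step: for $\boxtimes={\sf vr}$, any inclusion-minimal planarizer $S^*$ of size $\leq k$ that hits $V(\comp(W))$ can be replaced by $S^*\setminus V(\comp(W))$, which still leaves a planar graph because $\comp(W)$ is already planar and lies in a disk with no external edges into its interior; for $\boxtimes\in\{{\sf er},{\sf ec}\}$, any edge modification of $S^*$ touching a perimeter vertex can be ``rerouted'' through the buffer layers to an equivalent modification strictly outside $V(\comp(W))$ (this is where the concentric buffer layers are essential), contradicting minimality; and for $\boxtimes={\sf ea}$ the claim is automatic because the only inclusion-minimal ${\sf ea}$-planarizer is $\emptyset$. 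If the pigeonhole step instead fails, the topological obstruction either pinpoints a vertex $u\in S$ forced into every planarizer (output case~1 with $u$ for $\boxtimes={\sf vr}$) or certifies that no $\boxtimes$-planarizer of size $\leq k$ exists (output \no{} for $\boxtimes\in\{{\sf er},{\sf ec},{\sf ea}\}$). The main obstacle I foresee is precisely the rerouting argument for edge-modification operations near $\perim(W)$: this is what forces the buffer-layers construction and requires careful Robertson--Seymour-style surgery adapted to the partially disk-embedded setting, so that all four requirements of case~2 (bounded-treewidth compass, the disk embedding, the no-neighbourhood condition, and $\boxtimes$-irrelevance) are satisfied simultaneously by a single choice of subwall $W$ within $W'$.
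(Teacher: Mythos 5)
Your high-level architecture matches the paper's: linear-time treewidth approximation on $G\setminus S$ for the bounded-treewidth branch, a flat-wall extraction with concentric buffer layers, a pigeonhole over tiles to land on a subwall avoiding $N_G(S)$, and a gluing argument to certify $\boxtimes$-planarization irrelevance. However, two load-bearing mechanisms in the dichotomy are missing or misconceived.

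First, your ``if the pigeonhole step instead fails, the topological obstruction either pinpoints a vertex $u\in S$ forced into every planarizer or certifies a \no{}-instance'' is precisely the part that needs a concrete argument, and you do not have one. The paper's mechanism is: take the planar graph $G'=G\setminus S$, contract it to the triangulated grid $\Gamma_b$ (\autoref{bidimensionaltri}), and count how many grid cells the models of $N_{G'}(u)$ hit for each $u\in S$. If some $u$ hits $\geq \ell^4$ cells in the central subgrid, \autoref{localtw} extracts an $\ell$-grid minor all of whose branch sets see $u$; with $\ell=4\lceil\sqrt{k+1}\rceil-1$ this yields a $(K_5,k+1)$-star minor centered at $u$, which is exactly what forces $u$ into every ${\sf vr}$-planarizer of size $\leq k$ (or certifies \no{} for ${\sf er},{\sf ec}$, since $k$ edge operations cannot destroy $k+1$ disjoint $K_5$-minors). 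Without this quantitative minor-extraction step, ``pigeonhole'' does not tell you which branch to take, and the claim that $N_{G\setminus S}(S)$ meeting too many tile-compasses ``witnesses an obstruction'' is unsupported.

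Second, your handling of $\boxtimes$-planarization irrelevance for ${\sf er}$ and ${\sf ec}$ by ``rerouting'' edge modifications near the perimeter is a misconception of what is needed, and is likely to break. The correct move (and the paper's) is: since $|A(Z)|\leq 2k$ and there are $\geq 2k+1$ disjoint $3$-wall-annuli in the buffer, pick one, $Y$, with $\ann(Y)$ disjoint from $A(Z)$; this makes $(K,Y,\Bbb{A})$ an annulus-embedded separator, so $Z$ splits into $Z_{\sf in}$ and $Z_{\sf out}$ with nothing of $Z$ crossing the annulus. The inner piece $G_{\sf in}$ is already planar (it is a subgraph of $G\setminus S$), and a gluing lemma (the paper's \autoref{dnsfkanklda}: under an annulus-embedded separator, $G$ is planar iff both sides are) shows that discarding $Z_{\sf in}$ entirely still yields a planarizer, contradicting inclusion-minimality. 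No rerouting of edges is performed or needed; the same discard argument handles ${\sf vr}$, ${\sf er}$ and ${\sf ec}$ uniformly. Your ${\sf vr}$ argument (``replace by $S^*\setminus V(\comp(W))$'') gestures at the right conclusion, but it too rests on this gluing lemma, which you have not established -- matching the two embeddings along the perimeter cycle is exactly the nontrivial topological content. Finally, a small point: for $\boxtimes={\sf ea}$ the paper first checks whether $G$ itself is planar and returns \no{} otherwise; your proposal never mentions this check, though it is the only way case 1 is triggered for ${\sf ea}$.
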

By $N_G(S)$ we denote the vertices in $G\setminus S$ that are adjacent, in $G$, with vertices in $S$.
In the first possible output of the algorithm of \autoref{sdmgflsadmgfrlka} we have either a negative answer to the \mnb{\mnb{\sc \sc  G$\boxtimes$MP$φ$}} problem
or an equivalent instance of   \mnb{\mnb{\sc \sc  G$\boxtimes$MP$φ$}}  with reduced value of $k$.

The proof of \autoref{sdmgflsadmgfrlka}
is in~\autoref{dnksldngklfdnglkad} and its main steps are the following.
In case, $\boxtimes={\sf ea}$ we first check whether $G$ 
is planar. If not, we report a negative answer, otherwise
we find a wall $W$ in $G$ whose size is a ``big-enough'' function of $k$ and whose compass has ``small-enough'' treewidth using~\cite[Lemma 4.2]{GolovachKMT17thep}. 
This wall contains
an (also ``big-enough'') subwall of $W$ whose compass is not affected by $S$. In case $\boxtimes=\{{\sf vd},{\sf ed},{\sf ec}\}$, we consider the neighbors of $S$ in the planar graph $G'$, this is the set $N_{G}(S)$. Moreover, we consider a ``big-enough'' triangulated grid $Γ$ as a contraction of 
$G'$ (using~\cite[Theorem 3]{FominGT11cont}) and the set $N_{Γ}$ 
of the ``contraction-heirs'' of the vertices of $N_{G}(S)$  in $Γ$.
If $|N_{Γ}|$ is ``big-enough'', then we prove, using the main technical result of~\cite{DemaineFHT04bidi},  that 
some of the vertices of $S$ should be affected by every possible solution, in case $\boxtimes={\sf vd}$, or that 
we have a \no-instance, in case $\boxtimes\in \{{\sf ed},{\sf ec}\}$. If  $|N_{Γ}|$ is ``small-enough'', then we can find a ``big-enough'' wall $W$ in $G$ whose compass is not affected by $S$ (again using the previously mentioned  result of \cite{GolovachKMT17thep}). The proof is completed by proving  that 
this wall contains some ``big-enough'' subwall that is not affected by any inclusion-minimal $\boxtimes$-planarizer.

The next lemma deals with the second possible output of the algorithm of \autoref{sdmgflsadmgfrlka} and contains the ``core arguments'' of this paper.

\begin{lemma}\label{wedndlsakfnlkdsafnkalen}
Given a Gaifman sentence $φ$ and a $\boxtimes\in{\sf OP}$, there exist a function
$\newfun{sngklargnklrangl}:\Bbb{N}^{2}\to \Bbb{N}$, whose images are odd integers, and an algorithm with the following specifications:\smallskip
	
	\noindent{\bf Find\_Vertex}$(k,\Delta,G,R,\tilde{W})$\\
	\noindent {\sl Input:}  a $k\in \Bbb{N}$, a partially $\Delta$-embedded graph $G$, a set of (annotated) vertices $R\subseteq V(G)$, and a $q$-wall $\tilde{W}$ of $G$ such that 
	\begin{itemize}
	\item $q = \funref{sngklargnklrangl}(k,|φ|)$,

		\item the compass of $\tilde{W}$ has treewidth at most $\funref{jsdfjbnvjfdak}(k,q)$ (where $\funref{jsdfjbnvjfdak}$ is the function of~\autoref{sdmgflsadmgfrlka}),
		\item $G\cap \Delta=\comp(\tilde{W})$, $\bd(\Delta)=\perim(\tilde{W})$,
		\item $V(\comp(\tilde{W}))$ is $\boxtimes$-planarization irrelevant, and 
			\end{itemize}
	\noindent{\sl Output:} a vertex set  $X\subsetneq V(\comp(\tilde{W}))$
	and a vertex $v\in X$  such that $(G,R,k)$ is a $(φ,\boxtimes)$-triple if and only if $(G\setminus v,R\setminus X, k)$ is a $(φ,\boxtimes)$-triple.
	\smallskip
	
	\noindent{Moreover, this algorithm runs in ${\cal O}_{k,|φ|}(n)$ steps.}
\end{lemma}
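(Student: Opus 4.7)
The plan is to combine the locality of Gaifman sentences (\autoref{dnklsdnvkla}) with the homogeneity of deep subwalls to identify an \emph{irrelevant} central vertex $v$ of $\tilde{W}$ together with a controlled neighborhood $X$ around it whose un-annotation preserves the $(φ,\boxtimes)$-triple status. Let $r^{\ast}$ and $\ell^{\ast}$ denote, respectively, the maximum locality radius $r_{h}$ and the maximum arity $\ell_{h}$ appearing in the basic sentences of $φ$; both are bounded in terms of $|φ|$. I would choose $\funref{sngklargnklrangl}(k,|φ|)$ to be an odd integer $q$ large enough so that any $q$-wall $\tilde{W}$ meeting the hypotheses hosts a deep central subwall $W_{0}$ together with a family of $N$ branch vertices $v_{1},\ldots,v_{N}$ (with $N$ an appropriate polynomial in $k$ and $\ell^{\ast}$) such that (i) each $N_{G}^{(r^{\ast})}(v_{i})$ lies strictly in the interior of $\comp(\tilde{W})$, (ii) the $v_{i}$ are pairwise at distance strictly greater than $4r^{\ast}$ in $G$, and (iii) the rooted $r^{\ast}$-balls $(G[N_{G}^{(r^{\ast})}(v_{i})], v_{i})$ are pairwise isomorphic via isomorphisms respecting the $R$-indicator. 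Property (i) is a consequence of the partial disk embedding (no interior vertex has neighbors outside the disk), (ii) is guaranteed by spreading the $v_{i}$ among the branch vertices of $W_{0}$, and (iii) follows from the locally grid-like structure of deep subwalls combined with a pigeonhole on the two-valued $R$-indicator. I would then set $v:=v_{1}$ and $X:=N_{G}^{(r^{\ast})}(v_{1})$, obtaining $v\in X\subsetneq V(\comp(\tilde{W}))$.

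The backward direction of the required equivalence is the easier one. A solution $S'\subseteq\boxtimes\langle G\setminus v, R\setminus X\rangle$ lies in $\boxtimes\langle G, R\rangle$; since $V(\comp(\tilde{W}))$ is $\boxtimes$-planarization irrelevant, $S'$ is disjoint from the compass, so $G\boxtimes S'$ is obtained from $(G\setminus v)\boxtimes S'$ by reinserting $v$ into its original planar neighborhood, preserving planarity. Any witnessing tuple $(x_{1},\ldots,x_{\ell_{h}})\in (R\setminus X)^{\ell_{h}}$ of a basic sentence $\tilde{φ}_{h}$ in $((G\setminus v)\boxtimes S', R\setminus X)$ remains a witness in $(G\boxtimes S', R)$: every $x_{i}$ satisfies $d_{G}(x_{i},v)>r^{\ast}\ge r_{h}$, so its $r_{h}$-ball is untouched by the reinsertion of $v$, and any shortcut between two $x_{i}$'s through $v$ would have length $> 2r^{\ast}\ge 2r_{h}$, preserving the pairwise distance requirement.

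The forward direction carries the main technical weight. Starting from a solution $S\subseteq \boxtimes\langle G, R\rangle$ with $|S|\le k$, my plan is to perform two reductions: (a) a \emph{clean-up} showing that $S$ may be taken disjoint from $\boxtimes\langle G, V(\comp(\tilde{W}))\rangle$, and (b) a \emph{witness substitution} rewriting every satisfying assignment of $\tilde{φ}$ so as to avoid $X\cup\{v\}$. For (a), any inclusion-minimal $\boxtimes$-planarizer $S_{0}\subseteq S$ has size $\le k$ and therefore avoids $V(\comp(\tilde{W}))$ by hypothesis; extraneous operations in $S\setminus S_{0}$ affect only a bounded-radius region whose effect on $\tilde{φ}$ can be mirrored outside the compass using the homogeneity of $W_{0}$, with a short case analysis on $\boxtimes\in{\sf OP}$. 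For (b), a basic sentence $\tilde{φ}_{h}$ uses at most $\ell^{\ast}$ variables; since the $v_{j}$ are pairwise at distance $>4r^{\ast}$, each other $x_{i}$ is within distance $2r^{\ast}$ of at most one $v_{j}$, so at most $\ell^{\ast}-1$ of the $v_{j}$ are ``blocked'', and for $N\ge \ell^{\ast}$ a valid replacement in $R$ for any witness lying in $X\cup\{v\}$ always exists; the $R$-preserving isomorphism on $r^{\ast}$-balls guarantees that both the basic local formula and the distance conditions carry over. Together, (a) and (b) turn $S$ into a solution for the reduced instance.

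For the running time, once $\tilde{W}$ is provided together with the $\funref{jsdfjbnvjfdak}(k,q)$-bound on the treewidth of $\comp(\tilde{W})$, locating $W_{0}$, the $v_{i}$'s, their $r^{\ast}$-balls, and the induced $R$-types can be done in ${\cal O}_{k,|φ|}(n)$ time by direct inspection of the standard wall coordinates coupled with a Courcelle-style evaluation on $\comp(\tilde{W})$ to compute $r^{\ast}$-ball isomorphism types; the output $(v, X)$ is then produced in linear time. I expect step (a) of the forward direction to be the chief obstacle: even though planarization irrelevance only controls inclusion-minimal planarizers, one must show that any extraneous compass-touching operations of a solution can be rerouted without breaking any basic sentence in the Boolean combination defining $\tilde{φ}$.
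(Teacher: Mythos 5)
The proposal captures the right high-level intent---find a ``homogeneous'' region deep inside $\tilde W$ and use homogeneity to reroute both compass-touching modifications and Gaifman witnesses---but the mechanism you propose for homogeneity is far too weak to carry the argument, and the step you yourself flag as ``the chief obstacle'' is in fact where the paper's entire machinery lives; it cannot be deferred.

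Concretely, the gap is in the forward direction, steps (a) and (b). You take as your homogeneity invariant the isomorphism type of $R$-labelled $r^{\ast}$-balls around $N$ well-separated branch vertices $v_1,\dots,v_N$ of a central subwall $W_0$. But a solution $S$ may include $\boxtimes$-operations inside $\comp(\tilde W)$ that are not needed for planarity yet are needed for $\tilde\varphi$ (e.g.\ contracting an edge to create a high-degree vertex, or deleting vertices to remove $R$-annotated witnesses). To ``mirror'' such operations elsewhere while preserving the truth value of every basic sentence $\tilde\varphi_h$, you must control, simultaneously, (i) where $A(S)$ lands, (ii) which vertices of the \emph{modified} graph $\comp(\tilde W)\boxtimes S$ remain in $R$, and (iii) which $r_h$-local formulas those vertices satisfy in the modified graph --- as a function of the budget spent inside. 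An isomorphism of unmodified $r^{\ast}$-balls around branch vertices says nothing about any of this: it is computed before $S$ is applied, it only sees a bounded-radius neighbourhood of a branch vertex rather than the whole region where $A(S)\cap\comp(\tilde W)$ and the inside witnesses may lie, and it does not quantify over all possible $S$. This is precisely what the paper's $(\varphi,\boxtimes)$-characteristic $\text{\sf $(\varphi,\boxtimes)$-char}(\frak K_W,R)$ records: for each subwall $W$, for every feasible inside-budget $s$ and every choice of how deeply $A(S)$ penetrates, the complete signature of scattered annotated witness sets realizable in the modified compass. Computing that invariant needs the bounded-treewidth hypothesis on $\comp(\tilde W)$ and a Courcelle-style evaluation over each candidate subwall --- there is no shortcut through ball isomorphism. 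Your pigeonhole over $R$-labelled $r^{\ast}$-ball types would also need to range over vastly more types (the number is doubly exponential in $|\varphi|$ once modifications are allowed), and you never fix $N$ to accommodate this.

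There is a second, smaller but genuine, error in the backward direction. You claim that $S'\subseteq\boxtimes\langle G\setminus v,R\setminus X\rangle$ is ``disjoint from the compass'' because $V(\comp(\tilde W))$ is $\boxtimes$-planarization irrelevant. That notion only constrains inclusion-minimal planarizers of $G$; it says nothing about an arbitrary solution of the $(\varphi,\boxtimes)$-triple condition for $G\setminus v$, which may deliberately touch the compass for the sake of $\tilde\varphi$. Consequently the claim that ``reinserting $v$ preserves planarity'' is not justified as stated; the paper avoids the issue by first proving the stronger Claim that the inside and outside parts of $S$ can be decoupled (using the wall-annulus gluing lemma and the characteristic), so that planarity of $G\boxtimes S'$ follows from \autoref{dnsfkanklda} rather than from an ad hoc reinsertion argument. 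In short: the paper's route through the characteristic and wall-annuli is not an optimization of your sketch but the content it is missing.
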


Notice that the above algorithm produces 
a $(φ,\boxtimes)$-triple where both $R$ and $G$ are reduced.  
Given~\autoref{sdmgflsadmgfrlka} and~\autoref{wedndlsakfnlkdsafnkalen}, we proceed to prove~ \autoref{sdmgflsdskgfnaksdffdglsgklaamgfrlka}.

\begin{proof}[Proof of~\autoref{sdmgflsdskgfnaksdffdglsgklaamgfrlka}.]
We describe the algorithm {\bf Reduce\_Instance} for input $(k,G,S,R)$.
First, we call the algorithm {\bf Find\_Area} of~\autoref{sdmgflsadmgfrlka} for input $(k,q,G,S)$ which returns one of the following:
\begin{enumerate}
		\item  
	\begin{itemize}
	\item if $\boxtimes\in \{{\sf ed}, {\sf ec},{\sf ea}\}$: a report that $(G,k)$ is a \no-instance of \mnb{\mnb{\sc \sc  G$\boxtimes$MP$φ$}}.
		\item if $\boxtimes={\sf vd}$: a  vertex $u\in S$ such that $S\setminus \{u\}$ is a  ${\sf vd}$-planarizer of $G\setminus u$ of size at most $k-1$ and   $(G,k)$ and $(G\setminus u, k-1)$ are equivalent instances  of 
		\mnb{\mnb{\sc \sc  G$\boxtimes$MP$φ$}}.
	\end{itemize}	
		
		\item  a $q$-wall $W$ of $G$ and a closed disk $\Delta$ such that
		\begin{itemize}
			\item the compass of $W$ has treewidth at most $\funref{jsdfjbnvjfdak}(k,q)$,
			\item $G$ is partially $\Delta$-embedded, where $G\cap \Delta=\comp(W)$, $\bd(\Delta)=\perim(W)$,
			\item $V(\comp(W))$ is $\boxtimes$-planarization irrelevant, and
			\item $N_{G}(S)\cap V(\comp(W))=\emptyset$, or
		\end{itemize}

		\item a tree decomposition of $G$ of width at most $\funref{lksgjreklgjrnjighpotrr}(k,q)$.
	\end{enumerate}
If {\bf Find\_Area}$(k,q,G,S)$ returns either the first or the third possible output, then our algorithm terminates by returning the corresponding output.
In the second possible output,
we call the algorithm {\bf Find\_Vertex} of \autoref{wedndlsakfnlkdsafnkalen} for input $(k,\Delta,G,R,W)$, which outputs a  vertex set  $X\subsetneq V(\comp({W}))$
and a vertex $v\in X$  such that  $(G,R,k)$ is a $(φ,\boxtimes)$-triple if and only if $(G\setminus v,R\setminus X, k)$ is a $(φ,\boxtimes)$-triple.
Observe that since $N_{G}(S)\cap V(\comp({W}))=\emptyset$, then $S\subseteq R\setminus X$.
We insist that while in the output of {\bf Find\_Area} we demand that $N_{G}(S)\cap V(\comp({W}))=\emptyset$, this is used  only to guarantee that $S\subseteq R\setminus X$.
For the overall running time of our algorithm,
recall that the two algorithms of \autoref{sdmgflsadmgfrlka} and \autoref{wedndlsakfnlkdsafnkalen} run in ${\cal O}_{k,|φ|}(n)$ steps.
\end{proof}
	
\subsection{Sketch of the proof of \autoref{wedndlsakfnlkdsafnkalen}}
\label{asfdfgjhfgr435ertyhgs}

In order to prove \autoref{wedndlsakfnlkdsafnkalen},
we first find a ``large-enough'' collection ${\cal W}$ of subwalls of $\tilde{W}$ each with $\rho$
layers (where $\rho$ is ``big-enough''),
whose compasses are pairwise vertex-disjoint.
We keep in mind that every wall in ${\cal W}$ has height $2\rho+1$ and $\rho$ layers.

The key idea is to define a ``characteristic''
of each wall $W\in{\cal W}$ that encodes 
all possible ways that a $\boxtimes$-planarizer $S$ of $G$ affects $\comp(W)$ along with the different ways a vertex assignment to the basic variables of the Gaifman formula $φ$ in $\comp(W)$ can certify $G\boxtimes S\modelsφ$.
Recall that 
$\tilde{φ}$  is  a  Boolean combination 
of sentences $\tilde{φ}_{1}, \ldots, \tilde{φ}_{m}$ so that for every $h\in[m]$, 	
\begin{eqnarray*}
\tilde{φ}_{h}=\exists x_{1}\ldots\exists x_{\ell_{h}}\big(\bigwedge_{i\in [\ell_{h}]}x_{i}\in R\wedge \bigwedge_{1\leq i<j\leq \ell_{h}} d(x_{i}, x_{j})> 2r_{h}\wedge \bigwedge_{i\in [\ell_{h}]}\psi_{h}(x_{i})\big),\label{gaidsfd2}
\end{eqnarray*}
where $\ell_{h},r_{h}\geq 1$ and $\psi_{h}$ is an $r_{h}$-local formula with one free variable and that $\tilde{φ}$ is evaluated on annotated graphs of the form $(G,R)$.
Clearly, $\tilde{φ}$ is a sentence in Monadic Second Order Logic, in short, an MSOL-sentence. 
We set $r:=\max_{h\in[m]}\{r_{h}\}, \ell:=\sum_{h\in[m]}\ell_{h}$, and $d:=2(r+(\ell+1)r+r)$.

As a first step, let 
${\sf SIG}= 2^{[\ell_1]}\times\cdots\times 2^{[\ell_{m}]}\times[\rho]$.
Also, for every wall $W\in {\cal W}$, let $K:=\comp(W)$, for every $t\in [\rho]$, let $K^{(t)}:= \comp(W^{(2t+1)})$ and $P^{(t)}:=V(\perim(W^{(2t+1)}))$. Here, by $W^{(t)}$ we denote the subwall of $W$
that has height $t$, whose layers are the innermost $\frac{t-1}{2}$ layers of $W$, and which has the same center as $W$.
We set  ${\bf K} = (V(K^{(1)}),\ldots, V(K^{(\rho)}))$. We call the tuple $\mathfrak{K}_W = (K,{\bf K})$ the {\em panelled compass} of the wall $W$ in $G$.
Given the panelled compass $\mathfrak{K}_W$ of a wall $W\in {\cal W}$ in $G$, a set $R\subseteq V(\comp(W))$, an integer $z\in [d,\rho]$, and a set $S\subseteq \boxtimes\langle K,R\rangle$ such that $A(S)\subseteq V(K^{(z-d+1)})\cap R$, we define

\begin{eqnarray*}
{{\sf sig}}_{φ,\boxtimes}(\mathfrak{K}_W, R, z,S) & =&  \{(Y_{1},\ldots,Y_{m},t)\in {\sf SIG}  \mid  t\leq z \mbox{~and~} \exists \ (\tilde{X}_{1},\ldots,\tilde{X}_m)\mbox{~such that~} \forall h\in[m]\  \\
& &~~~~~~~~~~~~~~~~~~~~~~~~~~~~~~~~~~~~~~~~ \tilde{X}_{h}=\{x_{i}^{h}\mid i\in Y_{h}\},\\
& &~~~~~~~~~~~~~~~~~~~~~~~~~~~~~~~~~~~~~~~~\tilde{X}_{h}\subseteq V((K^{(t-r+1)}\boxtimes S)\setminus P^{(t-r+1)})\cap R,\\
 & &~~~~~~~~~~~~~~~~~~~~~~~~~~~~~~~~~~~~~~~~\tilde{X}_h \text{~is~} (|Y_{h}|, r_{h})\text{-scattered in }K^{(t)}\boxtimes S,\text{ and }\\
& &~~~~~~~~~~~~~~~~~~~~~~~~~~~~~~~~~~~~~~~~K^{(t)}\boxtimes S\models \bigwedge_{x\in \tilde{X}_h}\psi_{h}(x)\}.\\
\end{eqnarray*}

In the above definition, a set $X$ of vertices is {\em $(\alpha,\beta)$-scattered},
if  $|X|=\alpha$ and there are no two vertices in $X$ 
within distance $\leq 2β$.
Intuitively, $(Y_{1},\ldots,Y_{m},t)\in {{\sf sig}}_{φ,\boxtimes}(\mathfrak{K}_W, R, z,S)$ if the application of 
the operation $\boxtimes$ on $G$ as defined by $S$
gives rise to the existence of a collection of scattered sets $(\tilde{X}_1, \ldots, \tilde{X}_m)$
in $(K^{(t-r+1)}\boxtimes S)\setminus P^{(t-r+1)}$ (one scattered set for each basic sentence $φ_h$) so that when the vertices of $\tilde{X}_h$  are assigned to the basic variables of $φ_{h}$  corresponding to $Y_{h}$,  the local basic formula 
$\psi_{h}$ is satisfied for each $x\in \tilde{X}_h$ in the modified graph.
Let us elaborate more on the properties that
the sets $(\tilde{X}_1, \ldots, \tilde{X}_m)$ are asked to satisfy.
First, we ask that, for every $h\in[m]$, the set $\tilde{X}_{h}$ is 
$(|Y_h|,r_h)$-scattered in $K^{(t)}\boxtimes S$ and is a subset of 
$V((K^{(t-r+1)}\boxtimes S)\setminus P^{(t-r+1)})$.
Therefore,
for each $h\in[m]$ and each vertex $x\in \tilde{X}_h$, every
vertex of $G\boxtimes S$ of distance at most $r$ from $x$ is in $V(K^{(t)}\boxtimes S)$.
This implies that the satisfaction of the local basic formula 
$\psi_{h}$ for each $x\in \tilde{X}_h$ can be checked in the graph
$K^{(t)}\boxtimes S$.
Also, notice that $(Y_{1},\ldots,Y_{m},t)\in {{\sf sig}}_{φ,\boxtimes}(\mathfrak{K}_W, R, z,S)$ only if $r\leq t$.
Given that $t\leq z$, we have that $V(K^{(t-r+1)}\boxtimes S)\subseteq V(K^{(t)}\boxtimes S)\subseteq V(K^{(z)}\boxtimes S)$ and therefore 
for every $h\in [m]$, $\tilde{X}_{h}\subseteq V(K^{(z)}\boxtimes S)$.

It is now time to define the characteristic of a wall $W\in{\cal W}$.
Given the panelled compass $\mathfrak{K}_W$ of a wall $W\in {\cal W}$ in $G$ and a set $R\subseteq V(\comp(W))$, we define the {\em $(φ,\boxtimes)$-characteristic} of $(\mathfrak{K}_W, R)$
as follows
\begin{eqnarray*}
\text{\sf  $(φ,\boxtimes)$-char}(\mathfrak{K}_W,R) & =& \{(z,{\sf sig},s)\in  [d,\rho]\times 2^{{\sf SIG}}\times [0,k]\mid \exists S\subseteq\boxtimes\langle K,R\rangle\mbox{~such that},\\
& &\hspace{7cm} A(S)\subseteq V(K^{(z-d+1)})\cap R,\\ 
& &\hspace{7cm}  |S|=s, K\boxtimes S \text{~is planar, and}   \\
& &\hspace{7cm}  {{\sf sig}}_{φ,\boxtimes}(\mathfrak{K},R,z,S)={\sf sig}\}.
\end{eqnarray*}

Notice that all queries in the definition of 
$(φ,\boxtimes)\text{\sf -char}(\mathfrak{K}_W, R)$ can be expressed in MSOL.
Indeed, this is easy to see when 
$\boxtimes\in \{{\sf vd}, {\sf ed}, {\sf ec}\}$, as in this case
the query ``$\comp(W)\boxtimes S$ is planar'' is trivially true, since 
$V(\comp(\tilde{W}))$ is $\boxtimes$-planarization irrelevant.
In the case where $\boxtimes={\sf ea}$, the MSOL expressibility
is proved in \autoref{fenflknflkas} (\autoref{kdsafksdfkalsdfkla}).
As each $W\in {\cal W}$ has treewidth bounded by a function of $k$ and $|φ|$, it follows by the 
theorem of Courcelle that $(φ,\boxtimes)\text{-char}(\mathfrak{K}_W,R)$ can be computed 
in ${\cal O}_{k,|φ|}(n)$ time.

For every wall $W_i\in {\cal W}$, we set $K_i : = \comp(W_i)$, for every $j\in [\rho]$, $K_i^{(j)}: = \comp(W_i^{(2j+1)})$ and $P_i^{(j)}: = V(\perim(W_i^{(2j+1)}))$, $\mathfrak{K}_i: = \mathfrak{K}_{W_i}$ and $R_i := R\cap V(\comp(W_i))$.
We say that two walls $W_1, W_2$ are {\em$(φ,\boxtimes)$-equivalent} if $(\mathfrak{K}_{1}, R_1)$ and $(\mathfrak{K}_{2},R_2)$ have the same $(φ,\boxtimes)$-characteristic.
Since the collection ${\cal W}$ contains ``many-enough'' walls, we can find a, still ``large-enough'', collection ${\cal W}'\subseteq {\cal W}$ of walls that are pairwise equivalent.
We fix a wall $W_{1}\in {\cal W}'$ and we set $X:=V(\comp(W_{1}^{(r)}))$, where $r=\max_{h\in[m]}\{r_{h}\}$, and $v\in\cen(W_{1})$.

In what follows, we highlight the ideas of the proof of the fact that if $(G,R,k)$ is a $(φ,\boxtimes)$-triple, then $(G\setminus v,R\setminus X, k)$ is a $(φ,\boxtimes)$-triple.
We first consider a set $S\subseteq \boxtimes\langle G,R\rangle$ of size at most $k$ that certifies that $(G,R,k)$ is a $(φ,\boxtimes)$-triple.
Then, we pick a wall $W_{2}\in{\cal W}'\setminus \{W_{1}\}$ whose compass is not affected by $S$.
We are allowed to pick this wall since there are ``many-enough'' walls equivalent to $W_{1}$ in ${\cal W}'$.
Our strategy is to use the fact that $W_{1}$ and $W_{2}$ are $(φ,\boxtimes)$-equivalent in order to state a ``replacement argument'':
we can find a $z\in[\rho]$, such that the subset $S_{\sf in}$ of $S$ that affects $K_{1}^{(z)}$ and the set $X$ of vertices of $K_{1}^{(z)}$ that are assigned to the basic variables of $φ$ in order to certify that $G\boxtimes S\models φ$, can be replaced by their ``equivalent'' sets $\tilde{S}$ and $\tilde{X}$ in $K_{2}^{(z)}$.
As a consequence of this, for every possible solution $S$ and vertex assignment to the basic variables of $φ$, we can find both a new solution and a new vertex assignment that ``avoid'' the ``inner part'' of $W_{1}$.
This implies that the validity of any basic local formula of $φ$ does not depend on the central vertices of $W_{1}$.
Thus, we can declare one of them ``irrelevant'' and safely remove it from $G$, while storing  (by reducing $R$ to $R\setminus X$) the fact that  every possible solution $S$ and vertex assignment to the basic variables of $φ$ can ``avoid'' the ``inner part'' of $W_{1}$.

To further inspect how this ``replacement'' is achieved, we need to dive deeper into  
the technicalities of the proof (through an intuitive perspective). Given a wall $W$,
 we refer to
a {\em wall-annulus} of $W$ as the subgraph of $W$ that is obtained from $W$ after removing from $W$ all its layers, except a fixed number of consecutive layers. We think of every wall $W\in{\cal W}$ as divided in 
consecutive wall-annuli of fixed size.
Since $\rho$ is ``big-enough'', then we can find also ``many enough'' such wall-annuli. We denote each one of them by $A_{i}(W)$.
Given a $W\in{\cal W}$, every wall-annulus $A_{i}(W)$ is divided in some regions as
depicted in \autoref{ndajkgndskngdasklgnasd}.
\begin{figure}[ht]
	\centering
	\includegraphics[width=13cm]{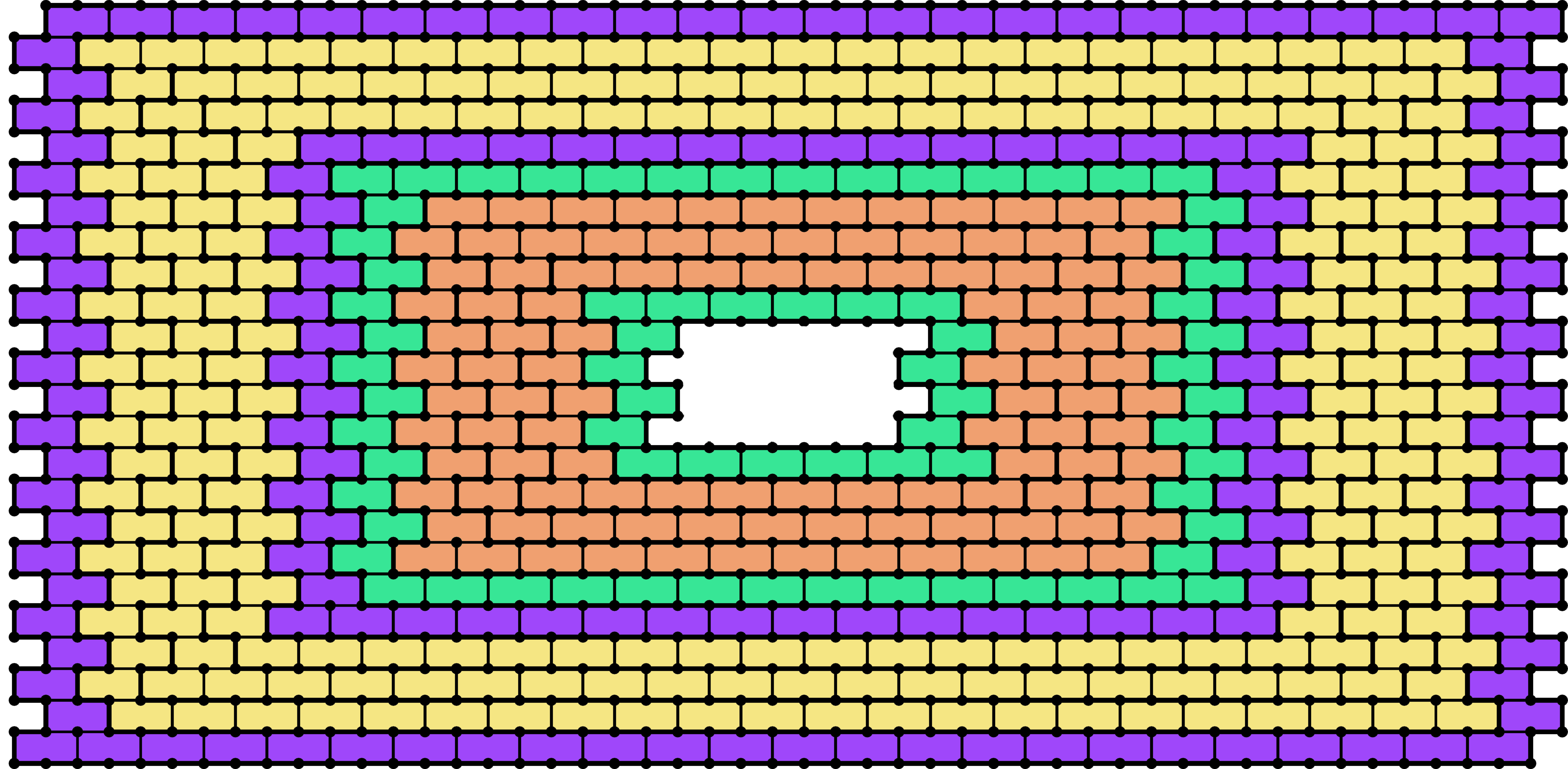}
	\caption{An example of a wall-annulus $A_{i}(W)$ of a wall $W\in\mathcal{W}$, together with its regions referred in the proof of \autoref{wedndlsakfnlkdsafnkalen}.}
	\label{ndajkgndskngdasklgnasd}
\end{figure}
The regions depicted in purple and green are consisting of $r$ layers of the wall $W$ (recall that $r=\max_{h\in[m]}\{r_{h}\}$).
The regions depicted in yellow and orange are both ``big-enough'' so as to be able to find, in each one of them, an also ``big-enough'' wall-annulus that ``avoids'' a given vertex assignment to the basic variables of $φ$.

Since $\rho$ is ``big-enough'', then we can find a wall-annulus $A_{i}(W_{1})$ that is not affected by $S$.
This allows us to partition $S$ in two sets, $S_{\sf in}$ and $S_{\sf out}$ in the obvious way.
The fact that $W_{1}$ and $W_{2}$ are $(φ,\boxtimes)$-equivalent implies the existence of a set $\tilde{S}$ in $W_{2}$ certifying that these two walls have the same characteristic.
Thus, by setting $S':=\tilde{S}\cup S_{\sf out}$, we have that $S'\subseteq\boxtimes\langle G,R'\rangle$, $|S'|=|S|$, and $G\boxtimes S'$ is planar.
The latter is guaranteed by the fact that $V(\comp(\tilde{W}))$ is $\boxtimes$-planarization irrelevant, in the case $\boxtimes\in\{{\sf vd},{\sf ed}, {\sf ec}\}$, while in the case that $\boxtimes={\sf ea}$, the existence of the outer purple buffer of  $A_{i}(W_{1})$ (resp. $A_{i}(W_{2})$) allows us to treat $S_{\sf in}$ (resp. $\tilde{S}$) and $S_{\sf out}$ separately, while not spoiling planarity.
The last part of the proof requires to prove that $(G\boxtimes S,R)\models \tilde{φ}\iff (G\boxtimes S',R')\models \tilde{φ}$.

For simplicity, here we only argue why $(G\boxtimes S,R)\models \tilde{φ}_{h}\implies (G\boxtimes S',R')\models \tilde{φ}_{h}$ holds, as the arguments in the proof of the inverse direction are completely symmetrical.
Therefore, given an $(\ell_{h}, r_{h})$-scattered set $X$ such that $φ_{h}$ is satisfied if the vertices of $X$ are assigned to the basic variables of $φ_{h}$, we aim to find a $t\in[\rho]$ in order to ``replace'' the vertices in $X\cap V(K_{1}^{(t)})$ with a set $\tilde{X}$ of vertices in $K_{2}^{(t)}$ such that the resulting vertex set $X^\star$ is $(\ell_{h}, r_{h})$-scattered and  $φ_{h}$ is satisfied if the vertices of $X^\star$ are assigned to the basic variables of $φ_{h}$.
Notice that for every $h\in[m]$ such that $(G\boxtimes S,R)\models \tilde{φ}_{h}$, these ``replacement arguments'' are pairwise independent.

We first deal with the possibility that the given scattered set $X$ intersects some ``inner part'' of $\comp(W_{2})$.
Thus, in order to ``clean'' the ``inner part'' of $\comp(W_{2})$, we find a wall $W_{3}\in{\cal W}'\setminus\{W_{1}, W_{2}\}$ that ``avoids'' both $S$ and $X$ (for different $h\in[m]$, the choice of $W_{3}$ may coincide).
\begin{figure}[ht]
	\centering
	{\resizebox{7cm}{!}{
	\begin{tikzpicture}[ipe stylesheet]
  \filldraw[shift={(32, 816)}, xscale=0.7896, yscale=0.9688, rgb color={fill=0.588 0.176 1}]
    (0, 0) rectangle (224, -128);
  \filldraw[shift={(35.892, 811.456)}, xscale=0.8077, yscale=1.0309, rgb color={fill=1 0.918 0.443}]
    (0, 0) rectangle (208, -112);
  \filldraw[shift={(68, 788)}, xscale=0.7222, yscale=0.85, rgb color={fill=0.588 0.176 1}]
    (0, 0) rectangle (144, -80);
  \filldraw[line cap=round, rgb color={fill=0.255 0.796 0.322}]
    (116, 760)
     -- (116, 760);
  \draw[shift={(188, 711.998)}, yscale=1.1053, gold, ipe pen ultrafat]
    (0, 0) rectangle (-136, 76);
  \draw[shift={(48, 800.002)}, yscale=1.0952, gold, ipe pen ultrafat]
    (0, 0) rectangle (144, -84);
  \draw[gold, ipe pen ultrafat]
    (44, 804) rectangle (196, 704);
  \filldraw[fill=white]
    (72, 784) rectangle (168, 724);
  \filldraw[fill=white]
    (72, 784) rectangle (168, 724);
  \filldraw[fill=white]
    (72, 784) rectangle (168, 724);
  \pic[ipe mark large, rgb color={draw=0.502 0 0}]
     at (52, 824) {ipe cross};
  \pic[ipe mark large, rgb color={draw=0.502 0 0}]
     at (16, 764) {ipe cross};
  \pic[ipe mark large, rgb color={draw=0.502 0 0}]
     at (192, 824) {ipe cross};
  \pic[ipe mark large, rgb color={draw=0.502 0 0}]
     at (216, 756) {ipe cross};
  \pic[ipe mark large, rgb color={draw=0.502 0 0}]
     at (200, 700) {ipe cross};
  \pic[ipe mark large, rgb color={draw=0 0 0.502}]
     at (24, 804) {ipe disk};
  \pic[ipe mark large, red]
     at (60, 784) {ipe disk};
  \pic[ipe mark large, red]
     at (176, 784) {ipe disk};
  \pic[ipe mark large, red]
     at (120, 756) {ipe disk};
  \pic[ipe mark large, red]
     at (152, 736) {ipe disk};
  \pic[ipe mark large, rgb color={draw=0 0 0.502}]
     at (160, 824) {ipe disk};
  \pic[ipe mark large, rgb color={draw=0 0 0.502}]
     at (76, 684) {ipe disk};
  \filldraw[shift={(232, 816)}, xscale=0.7896, yscale=0.9688, rgb color={fill=0.588 0.176 1}]
    (0, 0) rectangle (224, -128);
  \filldraw[shift={(235.892, 811.456)}, xscale=0.8077, yscale=1.0309, rgb color={fill=1 0.918 0.443}]
    (0, 0) rectangle (208, -112);
  \filldraw[shift={(268, 788)}, xscale=0.7222, yscale=0.85, rgb color={fill=0.588 0.176 1}]
    (0, 0) rectangle (144, -80);
  \filldraw[line cap=round, rgb color={fill=0.255 0.796 0.322}]
    (316, 760)
     -- (316, 760);
  \draw[shift={(388, 711.998)}, yscale=1.1053, gold, ipe pen ultrafat]
    (0, 0) rectangle (-136, 76);
  \draw[shift={(248, 800.002)}, yscale=1.0952, gold, ipe pen ultrafat]
    (0, 0) rectangle (144, -84);
  \draw[gold, ipe pen ultrafat]
    (244, 804) rectangle (396, 704);
  \filldraw[fill=white]
    (272, 784) rectangle (368, 724);
  \filldraw[fill=white]
    (272, 784) rectangle (368, 724);
  \filldraw[fill=white]
    (272, 784) rectangle (368, 724);
  \pic[ipe mark large, darkred]
     at (228, 828) {ipe cross};
  \pic[ipe mark large, darkred]
     at (380, 824) {ipe cross};
  \pic[ipe mark large, darkred]
     at (252, 684) {ipe cross};
  \pic[ipe mark large, darkred]
     at (420, 764) {ipe cross};
  \pic[ipe mark large, blue]
     at (264, 824) {ipe disk};
  \pic[ipe mark large, blue]
     at (320, 820) {ipe disk};
  \pic[ipe mark large, blue]
     at (304, 684) {ipe disk};
  \pic[ipe mark large, blue]
     at (416, 724) {ipe disk};
  \node[ipe node, font=\huge]
     at (168, 672) {$W_2$};
  \node[ipe node, font=\huge]
     at (368, 672) {$W_3
$};
  \draw[ipe pen ultrafat, ->]
    (160, 760)
     .. controls (200, 786.6667) and (248, 788) .. (304, 764);
\end{tikzpicture}}
	}~~~\raisebox{1cm}{$\rightarrow$}~~~
	{\resizebox{7cm}{!}{\begin{tikzpicture}[ipe stylesheet]
  \filldraw[shift={(32, 816)}, xscale=0.7896, yscale=0.9688, rgb color={fill=0.588 0.176 1}]
    (0, 0) rectangle (224, -128);
  \filldraw[shift={(35.892, 811.456)}, xscale=0.8077, yscale=1.0309, rgb color={fill=1 0.918 0.443}]
    (0, 0) rectangle (208, -112);
  \filldraw[shift={(68, 788)}, xscale=0.7222, yscale=0.85, rgb color={fill=0.588 0.176 1}]
    (0, 0) rectangle (144, -80);
  \filldraw[line cap=round, rgb color={fill=0.255 0.796 0.322}]
    (116, 760)
     -- (116, 760);
  \draw[shift={(188, 711.998)}, yscale=1.1053, gold, ipe pen ultrafat]
    (0, 0) rectangle (-136, 76);
  \draw[shift={(48, 800.002)}, yscale=1.0952, gold, ipe pen ultrafat]
    (0, 0) rectangle (144, -84);
  \draw[gold, ipe pen ultrafat]
    (44, 804) rectangle (196, 704);
  \filldraw[fill=white]
    (72, 784) rectangle (168, 724);
  \filldraw[fill=white]
    (72, 784) rectangle (168, 724);
  \filldraw[fill=white]
    (72, 784) rectangle (168, 724);
  \pic[ipe mark large, rgb color={draw=0.502 0 0}]
     at (52, 824) {ipe cross};
  \pic[ipe mark large, rgb color={draw=0.502 0 0}]
     at (16, 764) {ipe cross};
  \pic[ipe mark large, rgb color={draw=0.502 0 0}]
     at (192, 824) {ipe cross};
  \pic[ipe mark large, rgb color={draw=0.502 0 0}]
     at (216, 756) {ipe cross};
  \pic[ipe mark large, rgb color={draw=0.502 0 0}]
     at (200, 700) {ipe cross};
  \pic[ipe mark large, rgb color={draw=0 0 0.502}]
     at (24, 804) {ipe disk};
  \pic[ipe mark large, rgb color={draw=0 0 0.502}]
     at (160, 824) {ipe disk};
  \pic[ipe mark large, rgb color={draw=0 0 0.502}]
     at (76, 684) {ipe disk};
  \filldraw[shift={(232, 816)}, xscale=0.7896, yscale=0.9688, rgb color={fill=0.588 0.176 1}]
    (0, 0) rectangle (224, -128);
  \filldraw[shift={(235.892, 811.456)}, xscale=0.8077, yscale=1.0309, rgb color={fill=1 0.918 0.443}]
    (0, 0) rectangle (208, -112);
  \filldraw[shift={(268, 788)}, xscale=0.7222, yscale=0.85, rgb color={fill=0.588 0.176 1}]
    (0, 0) rectangle (144, -80);
  \filldraw[line cap=round, rgb color={fill=0.255 0.796 0.322}]
    (316, 760)
     -- (316, 760);
  \draw[shift={(388, 711.998)}, yscale=1.1053, gold, ipe pen ultrafat]
    (0, 0) rectangle (-136, 76);
  \draw[shift={(248, 800.002)}, yscale=1.0952, gold, ipe pen ultrafat]
    (0, 0) rectangle (144, -84);
  \draw[gold, ipe pen ultrafat]
    (244, 804) rectangle (396, 704);
  \filldraw[fill=white]
    (272, 784) rectangle (368, 724);
  \filldraw[fill=white]
    (272, 784) rectangle (368, 724);
  \filldraw[fill=white]
    (272, 784) rectangle (368, 724);
  \pic[ipe mark large, darkred]
     at (228, 828) {ipe cross};
  \pic[ipe mark large, darkred]
     at (380, 824) {ipe cross};
  \pic[ipe mark large, darkred]
     at (252, 684) {ipe cross};
  \pic[ipe mark large, darkred]
     at (420, 764) {ipe cross};
  \pic[ipe mark large, blue]
     at (264, 824) {ipe disk};
  \pic[ipe mark large, blue]
     at (320, 820) {ipe disk};
  \pic[ipe mark large, blue]
     at (304, 684) {ipe disk};
  \pic[ipe mark large, blue]
     at (416, 724) {ipe disk};
  \node[ipe node, font=\huge]
     at (172, 672) {$W_2$};
  \node[ipe node, font=\huge]
     at (372, 672) {$W_3
$};
  \pic[ipe mark large, red]
     at (264, 776) {ipe disk};
  \pic[ipe mark large, red]
     at (348, 768) {ipe disk};
  \pic[ipe mark large, red]
     at (312, 740) {ipe disk};
  \pic[ipe mark large, red]
     at (320, 776) {ipe disk};
\end{tikzpicture}}
}
\caption{The ``cleaning'' of the ``inner part'' of $\comp(W_{2})$.
Left:
The set $A(S)$ is depicted in cross vertices, the set $X\setminus X_{\sf in}$ is depicted in blue, and the set $X_{\sf in }$ is depicted in red.
Right:
The set $A(S)$ is depicted in cross vertices, the set $X'\setminus X_{\sf in}$ is depicted in blue, and the set $\tilde{X}$ is depicted in red.
}
\label{mkrpptenhjm}
\end{figure}
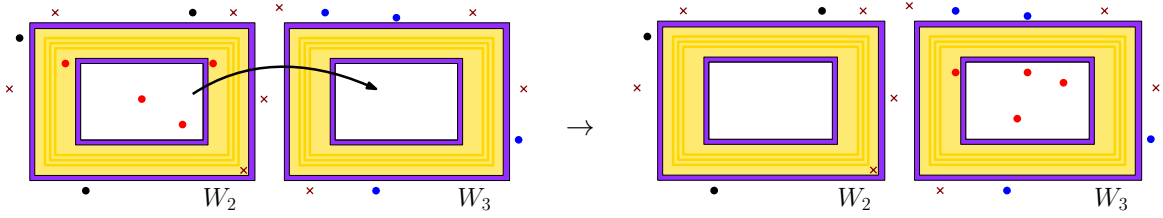
Also, we consider a $\tilde{t}\in[\rho]$ corresponding to a layer in the yellow region of the wall-annulus $A_{i}(W_{2})$ such that the annulus of the wall-annulus of $A_{i}(W_{2})$ bounded by the $(\tilde{t}-r+1)$-th and $\tilde{t}$-th layer of $W_{2}$ is not intersected by $X$.
Then, we ``replace'' the vertices of $X$ in $K_{2}^{(\tilde{t}-r+1)}\setminus P_2^{(\tilde{t}-r+1)}$, call it $X_{\sf in}$, with an ``equivalent'' vertex set $\tilde{X}$ in $K_{3}^{(\tilde{t}-r+1)}\setminus P_3^{(\tilde{t}-r+1)}$
(notice that this is achieved by arguing for $S:=\emptyset$ in the notion of $(φ,\boxtimes)$-characteristic).
This results to an $(\ell_{h}, r_{h})$-scattered set $X'$ that does not intersect $K_{2}^{(\tilde{t})}$ and  $G\boxtimes S\models \bigwedge_{x\in X'}\psi_{h}(x)$ (see \autoref{mkrpptenhjm}).

Now, we are allowed to pick a $t\in[\rho]$ corresponding to an ``orange'' layer of $A_{i}(W_{1})$ such that the annulus of the wall-annulus of $A_{i}(W_{1})$ bounded by the $(t-r+1)$-th and $t$-th layer of $W_{1}$ is not intersected by $X$.
If we set $Z$ to be the set of vertices of $X'$ in $K_{1}^{(t'-r+1)}\setminus P_1^{(t'-r+1)}$ ($P_1^{(t'-r+1)}$ is an ``extremal'' cycle of $A_{i}(W_{1})$ and therefore $X'$ does not intersect it), then
since ${{\sf sig}}_{φ,\boxtimes}(\mathfrak{K}_1, R_1, z,S_{\sf in})= {{\sf sig}}_{φ,\boxtimes}(\mathfrak{K}_2, R_2, z,\tilde{S})$, then there exists a set $\tilde{Z}$ in $K_{2}^{(t'-r+1)}\setminus P_2^{(t' -r+1)}$ that is ``equivalent'' to $Z$ (see \autoref{mkrpptdburgireenhjm}).
\begin{figure}[ht]
	\centering
{\resizebox{7cm}{!}{
\begin{tikzpicture}[ipe stylesheet]
  \filldraw[shift={(32, 816)}, xscale=0.7717, yscale=0.9688, rgb color={fill=0.588 0.176 1}]
    (0, 0) rectangle (224, -128);
  \filldraw[shift={(35.804, 811.456)}, xscale=0.7893, yscale=1.0309, rgb color={fill=1 0.918 0.443}]
    (0, 0) rectangle (208, -112);
  \filldraw[shift={(51.545, 800.201)}, xscale=0.9324, yscale=1.155, rgb color={fill=0.588 0.176 1}]
    (0, 0) rectangle (144, -80);
  \filldraw[shift={(115.41, 762.15)}, xscale=1.321, yscale=1.3588, draw=darkgreen, line cap=round, rgb color={fill=0.255 0.796 0.322}]
    (0, 0)
     -- (0, 0);
  \filldraw[shift={(56.709, 794.761)}, xscale=1.291, yscale=1.3588, rgb color={fill=0.173 0.878 0.208}]
    (0, 0) rectangle (96, -60);
  \filldraw[shift={(61.874, 789.325)}, xscale=1.0144, yscale=1.472, fill=orange]
    (0, 0) rectangle (112, -48);
  \filldraw[shift={(88.989, 767.591)}, xscale=1.211, yscale=1.6985, rgb color={fill=0.255 0.796 0.322}]
    (0, 0) rectangle (48, -16);
  \filldraw[shift={(94.273, 762.153)}, xscale=1.1889, yscale=2.0382, draw=darkgreen, fill=white]
    (0, 0) rectangle (40, -8);
  \filldraw[ipe pen ultrafat, fill=darkgray]
    (132, 820) rectangle (132, 820);
  \draw[shift={(67.182, 784)}, xscale=0.9773, rgb color={draw=0.992 0.4 0.008}, ipe pen ultrafat]
    (0, 0) rectangle (104, -60);
  \draw[shift={(71.091, 780)}, xscale=0.9773, rgb color={draw=0.992 0.4 0.008}, ipe pen ultrafat]
    (0, 0) rectangle (96, -52);
  \draw[shift={(75, 776)}, xscale=0.9773, rgb color={draw=0.992 0.4 0.008}, ipe pen ultrafat]
    (0, 0) rectangle (88, -44);
  \filldraw[shift={(232, 816)}, xscale=0.7717, yscale=0.9688, rgb color={fill=0.588 0.176 1}]
    (0, 0) rectangle (224, -128);
  \filldraw[shift={(235.804, 811.456)}, xscale=0.7893, yscale=1.0309, rgb color={fill=1 0.918 0.443}]
    (0, 0) rectangle (208, -112);
  \filldraw[shift={(251.546, 800.201)}, xscale=0.9324, yscale=1.155, rgb color={fill=0.588 0.176 1}]
    (0, 0) rectangle (144, -80);
  \filldraw[shift={(315.41, 762.15)}, xscale=1.321, yscale=1.3588, line cap=round, rgb color={fill=0.255 0.796 0.322}]
    (0, 0)
     -- (0, 0);
  \pic[ipe mark large, rgb color={draw=0.502 0 0}]
     at (252, 824) {ipe cross};
  \pic[ipe mark large, rgb color={draw=0.502 0 0}]
     at (216, 764) {ipe cross};
  \pic[ipe mark large, rgb color={draw=0.502 0 0}]
     at (388.3634, 824) {ipe cross};
  \pic[ipe mark large, rgb color={draw=0.502 0 0}]
     at (396.1816, 700) {ipe cross};
  \pic[ipe mark large, rgb color={draw=0 0 0.502}]
     at (224, 804) {ipe disk};
  \pic[ipe mark large, red]
     at (320.6938, 756.7151) {ipe disk};
  \pic[ipe mark large, red]
     at (362.9671, 729.5397) {ipe disk};
  \pic[ipe mark large, rgb color={draw=0 0 0.502}]
     at (357.0907, 824) {ipe disk};
  \pic[ipe mark large, rgb color={draw=0 0 0.502}]
     at (274.9997, 684) {ipe disk};
  \node[ipe node, font=\huge]
     at (364.909, 672) {$W_2$};
  \filldraw[shift={(256.71, 794.761)}, xscale=1.291, yscale=1.3588, rgb color={fill=0.173 0.878 0.208}]
    (0, 0) rectangle (96, -60);
  \filldraw[shift={(261.874, 789.325)}, xscale=1.0144, yscale=1.472, fill=orange]
    (0, 0) rectangle (112, -48);
  \filldraw[shift={(288.989, 767.591)}, xscale=1.211, yscale=1.6985, rgb color={fill=0.255 0.796 0.322}]
    (0, 0) rectangle (48, -16);
  \filldraw[shift={(294.273, 762.153)}, xscale=1.1889, yscale=2.0382, fill=white]
    (0, 0) rectangle (40, -8);
  \filldraw[ipe pen ultrafat, fill=darkgray]
    (332, 820) rectangle (332, 820);
  \draw[shift={(267.182, 784)}, xscale=0.9773, rgb color={draw=0.992 0.4 0.008}, ipe pen ultrafat]
    (0, 0) rectangle (104, -60);
  \draw[shift={(271.091, 780)}, xscale=0.9773, rgb color={draw=0.992 0.4 0.008}, ipe pen ultrafat]
    (0, 0) rectangle (96, -52);
  \draw[shift={(275, 776)}, xscale=0.9773, rgb color={draw=0.992 0.4 0.008}, ipe pen ultrafat]
    (0, 0) rectangle (88, -44);
  \node[ipe node, font=\huge]
     at (164, 672) {$W_1$};
  \pic[ipe mark large, darkred]
     at (24, 820) {ipe cross};
  \pic[ipe mark large, darkred]
     at (120, 828) {ipe cross};
  \pic[ipe mark large, darkred]
     at (88, 680) {ipe cross};
  \pic[ipe mark large, darkred]
     at (144, 704) {ipe cross};
  \pic[ipe mark large, darkgreen]
     at (84, 768) {ipe cross};
  \pic[ipe mark large, darkgreen]
     at (136, 756) {ipe cross};
  \pic[ipe mark large, darkgreen]
     at (84, 740) {ipe cross};
  \pic[ipe mark large, blue]
     at (48, 824) {ipe disk};
  \pic[ipe mark large, blue]
     at (148, 828) {ipe disk};
  \pic[ipe mark large, blue]
     at (192, 808) {ipe disk};
  \pic[ipe mark large, blue]
     at (20, 712) {ipe disk};
  \pic[ipe mark large, blue]
     at (176, 720) {ipe disk};
  \pic[ipe mark large, red]
     at (128, 736) {ipe disk};
  \pic[ipe mark large, red]
     at (112, 772) {ipe disk};
  \pic[ipe mark large, red]
     at (152, 768) {ipe disk};
  \draw[ipe pen ultrafat, ->]
    (152, 752)
     .. controls (224, 780) and (244, 780) .. (300, 756);
\end{tikzpicture}
}}~~~\raisebox{1cm}{$\ \rightarrow$}~~~
{\resizebox{7cm}{!}{
\begin{tikzpicture}[ipe stylesheet]
  \filldraw[shift={(32, 816)}, xscale=0.7717, yscale=0.9688, rgb color={fill=0.588 0.176 1}]
    (0, 0) rectangle (224, -128);
  \filldraw[shift={(35.804, 811.456)}, xscale=0.7893, yscale=1.0309, rgb color={fill=1 0.918 0.443}]
    (0, 0) rectangle (208, -112);
  \filldraw[shift={(51.545, 800.201)}, xscale=0.9324, yscale=1.155, rgb color={fill=0.588 0.176 1}]
    (0, 0) rectangle (144, -80);
  \filldraw[shift={(315.411, 762.15)}, xscale=1.321, yscale=1.3588, draw=darkgreen, line cap=round, rgb color={fill=0.255 0.796 0.322}]
    (0, 0)
     -- (0, 0);
  \filldraw[shift={(56.709, 794.761)}, xscale=1.291, yscale=1.3588, rgb color={fill=0.173 0.878 0.208}]
    (0, 0) rectangle (96, -60);
  \filldraw[shift={(61.874, 789.325)}, xscale=1.0144, yscale=1.472, fill=orange]
    (0, 0) rectangle (112, -48);
  \filldraw[shift={(288.989, 767.591)}, xscale=1.211, yscale=1.6985, rgb color={fill=0.255 0.796 0.322}]
    (0, 0) rectangle (48, -16);
  \filldraw[shift={(294.273, 762.153)}, xscale=1.1889, yscale=2.0382, draw=darkgreen, fill=white]
    (0, 0) rectangle (40, -8);
  \filldraw[ipe pen ultrafat, fill=darkgray]
    (132, 820) rectangle (132, 820);
  \draw[shift={(67.182, 784)}, xscale=0.9773, rgb color={draw=0.992 0.4 0.008}, ipe pen ultrafat]
    (0, 0) rectangle (104, -60);
  \draw[shift={(71.091, 780)}, xscale=0.9773, rgb color={draw=0.992 0.4 0.008}, ipe pen ultrafat]
    (0, 0) rectangle (96, -52);
  \draw[shift={(75, 776)}, xscale=0.9773, rgb color={draw=0.992 0.4 0.008}, ipe pen ultrafat]
    (0, 0) rectangle (88, -44);
  \filldraw[shift={(232, 816)}, xscale=0.7717, yscale=0.9688, rgb color={fill=0.588 0.176 1}]
    (0, 0) rectangle (224, -128);
  \filldraw[shift={(235.804, 811.456)}, xscale=0.7893, yscale=1.0309, rgb color={fill=1 0.918 0.443}]
    (0, 0) rectangle (208, -112);
  \filldraw[shift={(251.546, 800.201)}, xscale=0.9324, yscale=1.155, rgb color={fill=0.588 0.176 1}]
    (0, 0) rectangle (144, -80);
  \filldraw[shift={(315.41, 762.15)}, xscale=1.321, yscale=1.3588, line cap=round, rgb color={fill=0.255 0.796 0.322}]
    (0, 0)
     -- (0, 0);
  \pic[ipe mark large, rgb color={draw=0.502 0 0}]
     at (252, 824) {ipe cross};
  \pic[ipe mark large, rgb color={draw=0.502 0 0}]
     at (216, 764) {ipe cross};
  \pic[ipe mark large, rgb color={draw=0.502 0 0}]
     at (388.3634, 824) {ipe cross};
  \pic[ipe mark large, rgb color={draw=0.502 0 0}]
     at (396.1816, 700) {ipe cross};
  \pic[ipe mark large, rgb color={draw=0 0 0.502}]
     at (224, 804) {ipe disk};
  \pic[ipe mark large, red]
     at (320.6938, 756.7151) {ipe disk};
  \pic[ipe mark large, red]
     at (362.9671, 729.5397) {ipe disk};
  \pic[ipe mark large, rgb color={draw=0 0 0.502}]
     at (357.0907, 824) {ipe disk};
  \pic[ipe mark large, rgb color={draw=0 0 0.502}]
     at (274.9997, 684) {ipe disk};
  \node[ipe node, font=\huge]
     at (364.909, 672) {$W_2$};
  \filldraw[shift={(256.71, 794.761)}, xscale=1.291, yscale=1.3588, rgb color={fill=0.173 0.878 0.208}]
    (0, 0) rectangle (96, -60);
  \filldraw[shift={(261.874, 789.325)}, xscale=1.0144, yscale=1.472, fill=orange]
    (0, 0) rectangle (112, -48);
  \filldraw[shift={(288.989, 767.591)}, xscale=1.211, yscale=1.6985, rgb color={fill=0.255 0.796 0.322}]
    (0, 0) rectangle (48, -16);
  \filldraw[shift={(294.273, 762.153)}, xscale=1.1889, yscale=2.0382, fill=white]
    (0, 0) rectangle (40, -8);
  \filldraw[ipe pen ultrafat, fill=darkgray]
    (332, 820) rectangle (332, 820);
  \draw[shift={(267.182, 784)}, xscale=0.9773, rgb color={draw=0.992 0.4 0.008}, ipe pen ultrafat]
    (0, 0) rectangle (104, -60);
  \draw[shift={(271.091, 780)}, xscale=0.9773, rgb color={draw=0.992 0.4 0.008}, ipe pen ultrafat]
    (0, 0) rectangle (96, -52);
  \draw[shift={(275, 776)}, xscale=0.9773, rgb color={draw=0.992 0.4 0.008}, ipe pen ultrafat]
    (0, 0) rectangle (88, -44);
  \node[ipe node, font=\huge]
     at (164, 672) {$W_1$};
  \pic[ipe mark large, darkred]
     at (24, 820) {ipe cross};
  \pic[ipe mark large, darkred]
     at (120, 828) {ipe cross};
  \pic[ipe mark large, darkred]
     at (88, 680) {ipe cross};
  \pic[ipe mark large, darkred]
     at (144, 704) {ipe cross};
  \pic[ipe mark large, darkgreen]
     at (328, 764) {ipe cross};
  \pic[ipe mark large, darkgreen]
     at (312, 748) {ipe cross};
  \pic[ipe mark large, darkgreen]
     at (284, 740) {ipe cross};
  \pic[ipe mark large, blue]
     at (48, 824) {ipe disk};
  \pic[ipe mark large, blue]
     at (148, 828) {ipe disk};
  \pic[ipe mark large, blue]
     at (192, 808) {ipe disk};
  \pic[ipe mark large, blue]
     at (20, 712) {ipe disk};
  \pic[ipe mark large, blue]
     at (176, 720) {ipe disk};
  \pic[ipe mark large, red]
     at (344, 748) {ipe disk};
  \pic[ipe mark large, red]
     at (304, 760) {ipe disk};
  \pic[ipe mark large, red]
     at (308, 740) {ipe disk};
  \filldraw[shift={(88.989, 767.591)}, xscale=1.211, yscale=1.6985, rgb color={fill=0.255 0.796 0.322}]
    (0, 0) rectangle (48, -16);
  \filldraw[shift={(94.273, 762.153)}, xscale=1.1889, yscale=2.0382, fill=white]
    (0, 0) rectangle (40, -8);
\end{tikzpicture}
}}
\caption{The last part of the proof.
Left:
The set $A(S_{\sf out})$ is depicted in red cross vertices, the set $A(S_{\sf in})$ is depicted in green cross vertices, the set $Y\setminus Y_{\sf in}$ is depicted in blue, and the set $Y_{\sf in }$ is depicted in red.
Right:
The set $A(S_{\sf out})$ is depicted in red cross vertices, the set $A(\tilde{S})$ is depicted in green cross vertices, the set $X'\setminus Z$ is depicted in blue, and the set $\tilde{Z}$ is depicted in red.
}
\label{mkrpptdburgireenhjm}
\end{figure}
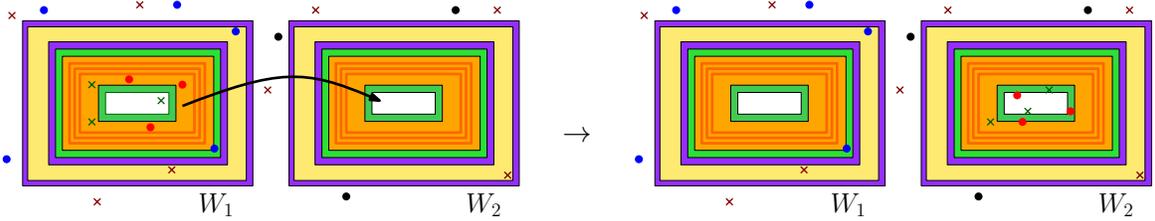
Therefore, since $\tilde{Z}$ is in the orange region of $\comp(W_{2})$ and $X'$ is ``avoiding''  $K_{2}^{(\tilde{t})}$, then we can derive that
$X'$ and $\tilde{Z}$ are ``separated'' by a green and a purple region of $A_{i}(W_{2})$.
Thus,
 $X^\star:=(X'\setminus Z) \cup \tilde{Z}$ is an $(\ell_{h}, r_{h})$-scattered set of $G\boxtimes S'$ that ``avoids'' $K_{1}^{(r)}$.
Moreover, $φ_{h}$ is satisfied given that the vertices of $X^\star$ of $G\boxtimes S'$ are assigned to the basic variables of $φ_{h}$.
The proof is concluded.

\section{Definitions and Preliminaries}\label{fenflknflkas}

We denote by $\Bbb{N}$ the set of all non-negative  integers. Given an $n\in\Bbb{N}$, we denote by $\Bbb{N}_{\geq n}$ the set containing 
all integers equal or greater than $n$.
Given  two integers $x$ and $y$, we define $[x,y]=\{x,x+1,\ldots,y-1,y\}$. Given an $n\in\Bbb{N}_{\geq 1}$, we also define $[n]=[1,n]$.
For a set $S$, we denote by $2^{S}$ the set of all subsets of $S$.

\subsection{Graphs, Walls, Wall-annuli, and Treewidth}

\paragraph{Basic concepts on Graphs.} 
All graphs in this paper are undirected,  finite, and they do not  have loops or multiple edges. Given a graph $G$, we denote by $V(G)$ and $E(G)$ the set of its vertices and edges, respectively. If $S\subseteq V(G)$, then we denote by $G\setminus S$ the graph obtained by $G$ after removing from it all vertices in $S$, together with their incident edges. 
Also, we denote by $G\setminus v$ the graph $G\setminus \{v\}$, for some $v\in V(G)$. We also denote by $G[S]$ the graph $G\setminus (V(G)\setminus S)$.
Given a graph $G$, we say that a pair $(A,B)\in 2^{V(G)}\times 2^{V(G)}$ is a {\em separation} of $G$ 
if $A\cup B=V(G)$ and there is no edge in $G$ with one endpoint in $A\setminus B$ 
and the other in $B\setminus A$.
A {\em path} ({\em cycle}) in a graph $G$ is a connected subgraph with all vertices of degree 
at most (exactly) 2.
Given a graph $G$, we define the {\em distance} $d_{G}(u,v)$ between two vertices $u,v$ of $G$, as the minimum number of edges of a path between $u$ and $v$ in $G$.
For $r\in\Bbb{N}_{\geq 1}$ and $u\in V(G)$ we define the {\em $r$-neighborhood} $N_{G}^{(\leq r)}(u)$ of $u$ in $G$ by $N_{G}^{(\leq r)}(u):=\{v\in V(G)\mid d_{G}(u,v)\leq r\}$.
We say that a set $S\subseteq V(G)$ is {\em $(\ell,r)$-scattered} if $|S|=\ell$ and for every $u,v\in V(G), u\neq v$ it holds that $d_{G}(u,v)>2r$.
An {\em annotated graph} is a pair $(G,R)$ where $G$ is a graph and $R\subseteq V(G)$.

\paragraph{Disks, annuli and partially disk-embedded graphs.}
In this paper, we consider embeddings or partial embeddings of graphs on the plane and several subsets of it.
We define a {\em closed disk} (resp. {\em open disk}) to be a subset of the plane homeomorphic to the set $\{(x,y)\in \Bbb{R}^2\mid x^2+y^2\leq 1\}$
(resp. $\{(x,y)\in \Bbb{R}^2\mid x^2+y^2< 1\}$) and
a {\em closed annulus} (resp. {\em open annulus}) to be a subset of the plane that is homeomorphic to the set $\{(x,y)\in \Bbb{R}^2 \mid 1\leq x^2+y^2\leq 2\}$ (resp. $\{(x,y)\in \Bbb{R}^2 \mid 1< x^2+y^2< 2\}$).
Given a closed disk or a closed annulus $X$, we use $\bd(X)$ to denote the boundary of $X$ (i.e., the set of points of $X$ for which every neighborhood around them contains some point not in $X$).
Notice that if $X$ is a closed disk then $\bd(X)$ is a subset of the plane homeomorphic to the set $\{(x,y)\in \Bbb{R}^2 \mid x^2+y^2 = 1\}$, while if $X$ is a closed annulus then $\bd(X)=C_{1}\cup C_{2}$ where $C_{1}, C_{2}$ are the two unique connected components of $\bd(X)$, that are two disjoint subsets of the plane, each one homeomorphic to the set $\{(x,y)\in \Bbb{R}^2 \mid x^2+y^2 = 1\}$.
We call these sets {\em boundaries} of $X$.
 Also given a closed disk (resp. closed annulus) $X$, we use $\inter(X)$ to denote the open disk $X\setminus \bd(X)$. When we embed a graph $G$ in a closed disk or in a closed annulus, we treat G as a set of points.
 This permits us to make set operations between graphs and sets of points.

%

We say that a graph $G$ is {\em partially disk-embedded in some closed disk $\Delta$}, 
if there is some subgraph $K$ of $G$ that is embedded in $\Delta$
such that $\bd(\Delta)$ is a cycle of $K$  and $(V(G)\cap \Delta,V(G)\setminus\inter(\Delta))$
is a separation of $G$. From now on, we use the term {\em partially $\Delta$-embedded graph $G$}
to denote that a graph $G$ is  partially disk-embedded in some closed disk $\Delta$. We also call the graph $K$
{\em compass}
of the partially $\Delta$-embedded graph $G$ and we always assume that we accompany
a partially $\Delta$-embedded graph $G$ together with an embedding of its compass in $\Delta$, that is the set $G\cap \Delta$.

\paragraph{Grids and walls.}
Let  $k,r\in\Bbb{N}.$ The
\emph{$(k\times r)$-grid} is the Cartesian product of two paths on $k$ and $r$ vertices respectively.
We use the term {\em $k$-grid} for the $(k\times k)$-grid.
An  \emph{elementary $r$-wall}, for some odd integer $r\geq 3,$ is the graph obtained from a
$(2 r\times r)$-grid with vertices $(x,y),$
$x\in[2r]\times[r],$ after the removal of the
``vertical'' edges $\{(x,y),(x,y+1)\}$ for odd $x+y,$ and then the removal of
all vertices of degree one. 
Notice that, as $r\geq 3$,  an elementary $r$-wall is a planar graph
that has a unique (up to topological isomorphism) embedding in the plane
such that all its finite faces are incident to exactly six edges.  
The {\em perimeter} of an elementary $r$-wall is the cycle bounding its infinite face,
while the cycles bounding its finite faces are called {\em bricks}. 
Given an elementary wall $\overline{W},$ a {\em vertical path} of $\overline{W}$ is  one whose 
vertices, in ordering of appearance, are $(i,1),(i,2),(i+1,2),(i+1,3),
(i,3),(i,4),(i+1,4),(i+1,5),
(i,5),\ldots,(i,r-2),(i,r-1),(i+1,r-1),(i+1,r)$, for some $i\in \{1,3,\ldots,2r-1\}$.
Also  an {\em horizontal path} of $\overline{W}$
is the one whose 
vertices, in ordering of appearance, are $(1,j),(2,j),\ldots,(2r,j)$, for some  $j\in[2,r-1]$, 
or  $(1,1),(2,1),\ldots,(2r-1,1)$ or  $(2,r),(2,r),\ldots,(2r,r)$.

%

An {\em $r$-wall} is any graph $W$ obtained from an elementary $r$-wall $\overline{W}$
by subdividing edges (see \autoref{asfdsfdsfasdfsdfdsf}).
We call the vertices that where added after the subdivision operations {\em subdivision vertices}, while we call the rest of the vertices (i.e., those of $\overline{W}$) {\em branch vertices}.
The {\em perimeter} of $W$, denoted by $\perim(W)$, is the cycle of $W$ whose non-subdivision vertices are the vertices of the perimeter of $\overline{W}$. Also, a vertical (resp. horizontal) path of $W$ is a subdivided vertical (resp. horizontal) path of $\overline{W}$.

A graph $W$ is a {\em wall} if it is an $r$-wall for some odd integer $r\geq 3$ and we refer to $r$ as the {\em height} of $W$.
Given a graph $G$, a {\em wall of $G$} is a subgraph of $G$ that is a wall.
We insist that, for every $r$-wall, the number $r$ is always odd.

Let $W$ be a wall of a graph $G$ and $K'$ be the connected component of $G\setminus \perim(W)$ that contains $W\setminus \perim(W)$.
The {\em compass} of $W$, denoted by $\comp(W)$, is the graph $G[V(K')\cup V(\perim(W))]$. Observe that $W$ is a subgraph of $\comp(W)$ and $\comp(W)$ is connected.

The {\em layers} of an $r$-wall $W$  are recursively defined as follows.
The first layer of $W$ is its perimeter. For $i=2,\ldots,(r-1)/2,$ the $i$-th layer of $W$ is the $(i-1)$-th layer of the subwall $W'$ obtained from $W$ after removing from $W$ its perimeter and all occurring vertices of degree one. Notice that each $(2r+1)$-wall has $r$ layers (see \autoref{asfdsfdsfasdfsdfdsf}). The {\em central  vertices} of $W$, denoted by $\cen(W)$, are the two branch vertices of $W$ that do not belong to any of its layers and that are connected by a path of $W$ that does not intersect any layer.

\paragraph{Treewidth.}A \emph{tree decomposition} of a graph~$G$
is a pair~$(T,\chi)$ where $T$ is a tree and $\chi: V(T)\to 2^{V(G)}$
such that
\begin{enumerate}
	\item $\bigcup_{t \in V(T)} \chi(t) = V(G)$,
	\item for every edge~$e$ of~$G$ there is a $t\in V(T)$ such that 
	$\chi(t)$
	contains both endpoints of~$e$, and
	\item for every~$v \in V(G)$, the subgraph of~${T}$
	induced by $\{t \in V(T)\mid {v \in \chi(t)}\}$ is connected.
\end{enumerate}
The \emph{width} of $(T,\chi)$ is defined as
$\w(T,\chi):= 
\max\big\{\left|\chi(t)\right|-1\ \bigmid t\in V(T)\big\}.$
The \emph{treewidth of $G$} is defined as
\[
\tw(G):= 
\min\big\{\w(T,\chi)\ \bigmid (T,\chi) \text{ is a tree decomposition of }G\big\}.
\]

The following result from \cite{GolovachKMT17thep} intuitively states that given an odd $q\in \Bbb{N}_{\geq 3}$ and a graph $G$  of ``big-enough'' treewidth, we can find a $q$-wall of $G$ whose compass has ``small enough'' treewidth.

\begin{proposition}[\cite{GolovachKMT17thep}]
	\label{something_good}
	There exists a constant $\newcon{sdafsdfsd}$  and an algorithm with the following specifications:

	\smallskip\noindent {\bf Find\_Wall}$(G,q)$\\
	\noindent{\sl Input:} a planar graph $G$ and an odd $q\in \Bbb{N}_{\geq 3}$.
	
	\noindent{\sl Output:} 	
	\begin{enumerate}
		\item A  $q$-wall $W$ of $G$ whose  compass  has treewidth at most $\conref{sdafsdfsd}\cdot q$ or  
		\item a tree decomposition of $G$ of width at most $\conref{sdafsdfsd}\cdot q$.
	\end{enumerate}
	Moreover, this algorithm runs in ${\cal O}_{q}(n)$ steps.
\end{proposition}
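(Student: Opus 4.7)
The plan is to combine a linear-time constant-factor treewidth approximation for planar graphs with the planar grid-minor theorem, and to use planarity to promote any obtained wall into one whose compass has small treewidth. First, I would run a linear-time approximation that either returns a tree decomposition of $G$ of width at most $c \cdot q$ (giving Case~2 directly), or certifies that $\tw(G) > c \cdot q$ for a suitably chosen constant $c$. In the latter case, the planar grid-minor theorem (Robertson--Seymour--Thomas, sharpened to a linear relation on planar graphs) implies the existence of a $q'$-wall $W_0$ in $G$ with $q' \gg q$, and such a wall can be extracted in linear time by standard planar algorithms.

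The nontrivial step is refining $W_0$ into a $q$-subwall $W$ whose compass has treewidth at most $c \cdot q$. Since $W_0$ is drawn inside the plane embedding of $G$, its subwalls of height $q$ can be indexed so that their perimeters bound pairwise internally disjoint closed disks in $\Bbb{R}^2$; in particular, the interiors of the corresponding compasses are pairwise disjoint. If one such subwall $W$ satisfies $\tw(\comp(W)) \leq c \cdot q$, we are done; otherwise, the offending compass contains an obstruction (e.g., a bramble of order $> c \cdot q$), and applying the planar grid-minor theorem inside that compass yields a fresh $q'$-wall strictly inside the corresponding disk. Iterating this descent strictly decreases the number of vertices lying inside the distinguished disk, so the recursion must terminate at a subwall whose compass satisfies the required bound.

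The principal technical obstacle is guaranteeing that the descent both produces the correct treewidth calibration and runs in total ${\cal O}_q(n)$ time. For correctness, one chooses $c$ large enough that the linear planar grid-minor bound forces a new $q'$-wall in every compass of treewidth exceeding $c \cdot q$; this is what makes the descent well-defined and finite. For the running time, each planar subroutine (approximate treewidth, wall extraction, planar embedding) runs in linear time on its input, and since successive recursive calls operate on regions of $G$ whose interiors are pairwise disjoint, the total work telescopes to ${\cal O}_q(n)$. Finally, when the descent terminates in Case~1 with a $q$-wall whose compass has treewidth at most $c \cdot q$, the output $W$ satisfies all the requirements of the statement; otherwise the very first tree-decomposition approximation already produced Case~2.
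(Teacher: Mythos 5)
This proposition is imported from \cite{GolovachKMT17thep}; the present paper cites it without proof, so there is no internal argument to compare yours against, and your proposal has to be judged purely on its own correctness.

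Your overall plan (treewidth approximation, then planar grid theorem, then a descent through compasses of subwalls) is a sensible one, but the running-time analysis contains a genuine error. You write that ``successive recursive calls operate on regions of $G$ whose interiors are pairwise disjoint, [so] the total work telescopes to ${\cal O}_q(n)$.'' This is not the case: when you recurse into the compass of a subwall $W_1$ and then, inside it, extract a new wall and look at one of its subwall compasses, that new compass is a \emph{subset} of $\comp(W_1)$. Successive regions are nested, not disjoint, so the straightforward bound on total work is $\sum_i |R_i|$, which is ${\cal O}(n \cdot \mathrm{depth})$; with only the strict-decrease guarantee you state, the depth can be $\Omega(n)$ and the total running time $\Omega(n^2)$. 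The disjointness you correctly observe holds only among the several $q$-subwall compasses produced \emph{within a single level}.

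The fix is to use that within-level disjointness quantitatively: ensure the big wall $W_0$ is tall enough to yield at least two $q$-subwalls whose compasses are pairwise internally disjoint, and then always recurse into the \emph{smallest} of them. Disjointness at that level forces the smallest compass to have at most half as many vertices as the current region, so the region sizes decrease geometrically and the total work over all levels is indeed ${\cal O}_q(n)$. Two smaller points also need care: (i) a subwall chosen inside $\comp(W_1)$ must be taken a few layers away from $\perim(W_1)$ so that its compass in $G$ is genuinely contained in $\comp(W_1)$; and (ii) the constant $c$ in the initial threshold must be calibrated against the linear planar grid-minor bound so that treewidth $> c\cdot q$ actually yields a wall of height $q'$ much larger than $q$, which your write-up asserts but does not pin down.
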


\subsection{Definitions and preliminary results on logic}

\paragraph{First-order and monadic second-order logic on graphs.} In this paper we deal with logic formulas on graphs. In particular we deal with formulas of first-order logic (FOL) and monadic second-order logic (MSOL).
The syntax of FOL-formulas includes the logical connectives $\vee, \wedge, \neg$, a set of variables for vertices, the quantifiers $\forall, \exists$ that are applied to these variables, the predicate $u\sim v$, where $u$ and $v$ are vertex variables and whose interpretation is that $u$ and $v$ are adjacent, and the equality of variables representing vertices.
A MSOL-formula, in addition to the variables for vertices of FOL-formulas, may also contain variables for subsets of vertices or subsets of edges.
The syntax of MSOL-formulas is obtained by enhancing the syntax of FOL-formulas
so to further allow quantification on subsets of vertices or subsets of edges 
and introducing the predicates $v\in S$  (resp. $e\in F$) whose 
interpretation is that the vertex  $v$ belongs in the vertex set $S$ (resp. the edge $e$ belongs in the edge set $F$).

An FOL-formula $φ$ is in {\em prenex normal form} if it is written as  $φ=Q_{1}x_{1}\ldots Q_{n}x_{n}\psi$ such that for every $i\in[n]$, $Q_{i}\in\{\forall, \exists\}$ and $\psi$ is a quantifier-free formula such that $x_{1},\ldots,x_{n}$ appear as variables in $\psi$.
Then $Q_{1}x_{1}\ldots Q_{n}x_{n}$ is referred as the {\em prefix} of $φ$.
For the rest of the paper, when we mention the term ``FOL-formula'', we mean an FOL-formula on graphs that is in prenex normal form.
Given an FOL-formula $φ$,  we say that a variable $x$ is a {\em free variable} in $φ$ if it does not occur in the prefix of  $φ$.
We write $φ(x_{1},\ldots, x_{r})$ to denote that $φ$ is a formula with free variables $x_{1},\ldots, x_{r}$.
We call a formula without free variables a {\em sentence}.
For a sentence $φ$ and a graph $G$, we write $G\models φ$ to denote that $φ$ evaluates to {\em true} on $G$.
Also, for a sentence $φ$ we denote its length by $|φ|$.
\medskip

We now prove that the property whether a given (planar) graph remains planar after making adjacent some given pairs of vertices can be expressed by an MSOL-formula.

\begin{lemma}\label{kdsafksdfkalsdfkla}
	Let $\boxtimes={\sf ea}$, $G$ be a graph, and $S\subseteq \boxtimes\langle G,V(G)\rangle$ where $S=\{\{v_{1},u_{1}\}, \ldots, \{v_{r},u_{r}\}\}$.
	Then there exists an MSOL-formula $φ_{\cal P,S}$ that is evaluated on structures of type $(G,x_{1},y_{1},\ldots,x_{r},y_{r})$ such that \[G\boxtimes S\text{ is a planar graph}\iff (G,v_{1},u_{1},\ldots,v_{r},u_{r}) \models φ_{{\cal P},S}.\]
\end{lemma}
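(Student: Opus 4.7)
The plan is to realize $φ_{{\cal P},S}$ as a syntactic MSOL interpretation of $G \boxtimes S$ in the structure $(G, x_1, y_1, \ldots, x_r, y_r)$. The starting point is Wagner's theorem: a graph $H$ is planar if and only if $H$ excludes both $K_5$ and $K_{3,3}$ as a minor. Existence of such a minor in $H$ is MSOL-expressible on $H$ using only vertex and vertex-set quantifiers: existentially guess the branch sets (five or six disjoint non-empty vertex subsets), verify that each branch set is connected (a standard MSOL property, obtained by forbidding a non-trivial bipartition with no crossing edge), and verify that every required pair of branch sets is joined by an edge. Hence there is an MSOL sentence $\pi$ over graphs such that $H \models \pi$ if and only if $H$ is planar.

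The key observation is that $G \boxtimes S$ and $G$ share the same vertex set, and the edge set $E(G \boxtimes S) = E(G) \cup S$ differs from $E(G)$ in only the $r := |S|$ pairs $\{v_1, u_1\}, \ldots, \{v_r, u_r\}$, whose endpoints are precisely the distinguished constants of the expanded structure. Over $(G, x_1, y_1, \ldots, x_r, y_r)$ one can therefore define an auxiliary binary formula
\[
\mathrm{adj}'(x,y) \;\equiv\; (x \sim y)\;\vee\; \bigvee_{i=1}^{r}\bigl((x = x_i \wedge y = y_i) \vee (x = y_i \wedge y = x_i)\bigr),
\]
which holds on $(G, v_1, u_1, \ldots, v_r, u_r)$ exactly for pairs of vertices adjacent in $G \boxtimes S$. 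I would then obtain $φ_{{\cal P},S}$ by starting from $\pi$ and replacing every atomic occurrence of the adjacency predicate $\sim$ by $\mathrm{adj}'$, while leaving all vertex and vertex-set quantifiers unchanged (which is legal since $V(G \boxtimes S) = V(G)$). By construction, $(G, v_1, u_1, \ldots, v_r, u_r) \models φ_{{\cal P},S}$ if and only if $G \boxtimes S \models \pi$, which is equivalent to $G \boxtimes S$ being planar.

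The main subtlety to watch out for is the choice of planarity sentence: if one uses an MSOL$_2$-style formulation of planarity (quantifying over edges or edge sets), then the added edges of $S$ would need to be simulated, since they are not objects in the base structure. The Wagner-based formulation of $\pi$ outlined above sidesteps this issue entirely by using only vertex-set quantifiers together with the adjacency predicate, so the translation reduces to the simple substitution $\sim \mapsto \mathrm{adj}'$. Since $r = |S|$ is a fixed parameter of the lemma, the size of $φ_{{\cal P},S}$ is bounded by a computable function of $r$, which completes the construction.
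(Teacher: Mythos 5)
Your proof is essentially identical to the paper's: both start from an MSOL planarity sentence over graphs (the paper cites the forbidden topological-minor characterization, you invoke Wagner's forbidden-minor theorem, but these yield the same kind of vertex-set-only MSOL sentence), define the same modified adjacency predicate $x\sim'y$ incorporating the $r$ distinguished constant pairs, and substitute it for $\sim$ throughout. Your explicit remark about steering clear of edge-set quantification, so the substitution lives entirely over the unchanged vertex universe, is a subtlety the paper leaves implicit but handles in exactly the same way.
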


\begin{proof}
	Notice that there exists an MSOL-formula $φ_{{\cal P}}$ on graphs such that $G$ is planar if and only if $G\models φ_{{\cal P}}$ (this holds since planarity is characterized by a finite set of forbidden topological minors, see also~\cite[Corollary 1.15]{CourcelleE12grap}).

	Now, modify the formula $φ_{{\cal P}}$ in order to transform it to a formula evaluated on structures of type $(G,v_{1},u_{1},\ldots,v_{r},u_{r})$. We define a new predicate $x\sim'y$, where $x,y$ are vertex variables such that
	\[x\sim'y := (x\sim y)\vee\bigvee_{i\in[r]}\big((x=v_{i} \wedge y=u_{i})\vee(x=u_{i} \wedge y=v_{i})\big)
	\]
	and replace in $φ_{{\cal P}}$ every occurrence of the predicate $x\sim y$ with $x\sim'y$.
	In other words, given two vertices $u,v$ of $G$ and two variables $x,y$ in $φ_{{\cal P},S}$,  where the variables $x,y$ are interpreted as the vertices $u,v$,  the predicate $x\sim'y$ is true if and only if $u,v$ are adjacent or $\{u,v\}\in S$.
	This implies that $G\boxtimes S\text{ is a planar graph}\iff (G,v_{1},u_{1},\ldots,v_{r},u_{r})\models φ_{{\cal P},S}.$
\end{proof}

\section{Proof of \autoref{sdmgflsadmgfrlka}}\label{dnksldngklfdnglkad}

In the proof \autoref{sdmgflsadmgfrlka}, the most intriguing part after finding a ``big-enough'' wall $W$ in $G$ such that $G\cap \comp(W)$ is a ``flat'' part of $G$,  is to prove that every inclusion-minimal planarizer of $G$ ``avoids'' the compass of $W$. In order to prove the latter, we define some notions regarding graphs that are ``partially embedded'' in an annulus and prove that we can ``glue'' together two such planar graphs on a way that the resulting graph is planar. This is materialized in \autoref{dnsfkanklda}
that we state and prove before we proceed to the proof of \autoref{sdmgflsadmgfrlka}.

\paragraph{Central subwalls and wall-annuli.}
Let $W$ be an $r$-wall of $G$, for some odd integer $r\geq 3$, and $L_{1},\ldots, L_{(r-1)/2}$ be the layers of $W$.
Let $q$ be an odd integer in $[3,r]$. We define the {\em central $q$-subwall} of $W$, which we denote by $W^{(q)}$, to be the graph obtained from $W$ after removing from $W$ its first $(r-q)/2$ layers and all occurring vertices of degree one (see \autoref{malkegmklrgnklg} for an example).

\begin{figure}[ht]
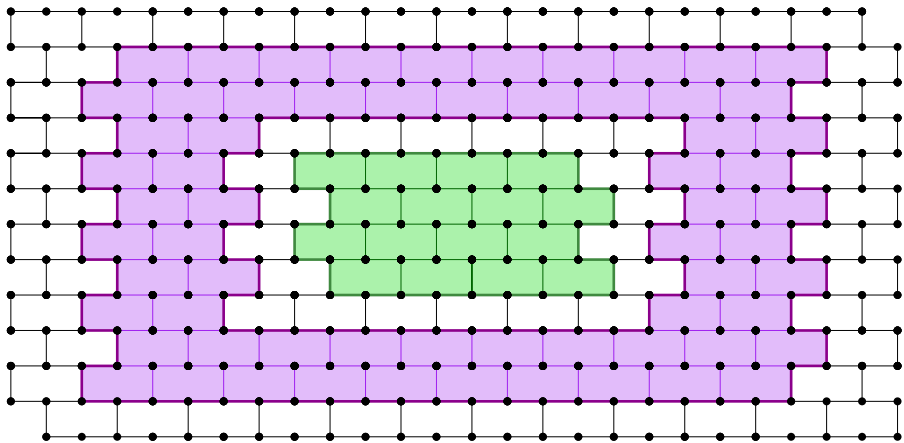

\centering
\resizebox{12cm}{!}{

}
\caption{A $13$-wall $W$, the central $5$-subwall $W^{(5)}$ of $W$ (depicted in green), and the $(5,3)$-wall-annulus $\mathcal{A}_{5}^{(3)}(W)$ of $W$ (depicted in green).}
\label{malkegmklrgnklg}
\end{figure}
Let $r\in \Bbb{N}_{\geq 7}$ be an odd integer, $p\in[3,(r-1)/2]$ and $\ell\in [3,p]$.
We define the {\em $(p,\ell)$-wall-annulus} of $W$, denoted by ${\cal A}_{p}^{(\ell)}(W)$, to be the graph obtained from  $W^{(2p+1)}$ after removing the vertices of $W^{(2(p-\ell)+1)}$ and all occurring vertices of degree one (see \autoref{malkegmklrgnklg} for an example).
Observe that, for every $i\in[p-\ell+1,p]$, ${\cal A}_{p}^{(\ell)}(W)$ contains the $i$-th layer of $W$ as a subgraph.
A {\em brick} of the $(p,\ell)$-wall-annulus ${\cal A}_{p}^{(\ell)}(W)$ of $W$ is a subgraph of ${\cal A}_{p}^{(\ell)}(W)$ that is also a brick of $W$.
A {\em 3-wall-annulus} of $W$ is a $(p,3)$-wall-annulus of $W$ for some $p\in \left[3,(r-1)/2\right]$.
Notice that every $(p,\ell)$-wall-annulus contains two ``boundary'' cycles that we call its {\em extremal cycles}.
Since $\ell\geq 3$, then ${\cal A}_{p}^{(\ell)}(W)$ is a subdivision of a 3-connected graph and therefore has a unique embedding in the plane.
Thus, given the embedding of  ${\cal A}_{p}^{(\ell)}(W)$ in the plane, we define the {\em annulus} of ${\cal A}_{p}^{(\ell)}(W)$, denoted by $\ann({\cal A}_{p}^{(\ell)}(W))$, to be the closed annulus in the plane bounded by the two extremal cycles of ${\cal A}_{p}^{(\ell)}(W)$.
%
%
%

%
%
	

\paragraph{Oriented annuli.}
An {\em oriented closed annulus} is a triple $\Bbb{A}=(A, C_{\sf in}, C_{\sf out})$ where $A$ is a closed annulus and $C_{\sf in}, C_{\sf out}$ are its boundaries, such that the connected component of $\Bbb{R}^{2}\setminus C_{\sf in}$ that does not intersect $A$, which we call the {\em inner compass} of $\Bbb{A}$ and we denote by $\comp_{\sf in}(\Bbb{A})$, is an open disk.
Also, we define the {\em outer compass} of $\Bbb{A}$ as the connected component of $\Bbb{R}^{2}\setminus C_{\sf out}$ that intersects $C_{\sf in}$ and denote it by $\comp_{\sf out}(\Bbb{A})$.
Given an oriented annulus $\Bbb{A}=(A, C_{\sf in}, C_{\sf out})$  we define ${\bf rev}(\Bbb{A}) =(A, C_{\sf out}, C_{\sf in} )$.

\paragraph{Annulus-boundaried graphs.}
An {\em annulus-boundaried graph} is a quadruple $(G,K,Y,\Bbb{A})$ (see \autoref{dnfsalknlksadnfklasdnflk}), where
\begin{itemize}
	\item $G$ is a graph,
	\item $K$ is a connected subgraph of $G$,
	\item $Y$ is a $3$-wall-annulus that is a subgraph of $K$,
	\item $\Bbb{A}=(A, C_{\sf in}, C_{\sf out})$ is an oriented closed annulus,
	\item $Y$ is embedded in $A$ such that $C_{\sf in}$ and $C_{\sf out}$ are the two extremal cycles of $Y$, and $G\cap A=K$.
\end{itemize}
We call the cycle of $Y$ that is identical to $C_{\sf in}$ (resp. $C_{\sf out}$) the {\em inner} (resp. {\em outer}) cycle of $(G,K,Y,\Bbb{A})$.

\paragraph{Wall-components of annulus-boundaried graphs}
Let $(G,K,Y,\Bbb{A})$ be an annulus-boundaried graph.  We now define the notion of a {\em wall-component} of $(G,K,Y,\Bbb{A})$.
We define two types of wall-components: edges of the form $e=uv\in E(G)\setminus E(Y)$ such that $u,v\in V(Y)$ and subgraphs of $K$ that are maximal connected components of $K\setminus V(Y)$.
A wall-component $Q$ is {\em attached} to a vertex $v\in V(Y)$ if it has a vertex adjacent to $v$, or (if $Q$ is an edge) one of its endpoints is $v$.
We say that a wall-component $Q$ of $(G,K,Y,\Bbb{A})$ is a {\em brick-component} if there exists a brick $B$ of $Y$ such that $Q$ is attached only to vertices in $V(B)$.
Given a subgraph $H$ of $Y$, let ${\sf att}(H)$ denote the subgraph of $G$ induced by the vertices of $H$ and the vertices of the wall-components which are only attached to $H$.

\begin{figure}[ht]
\centering
\includegraphics[width=9cm]{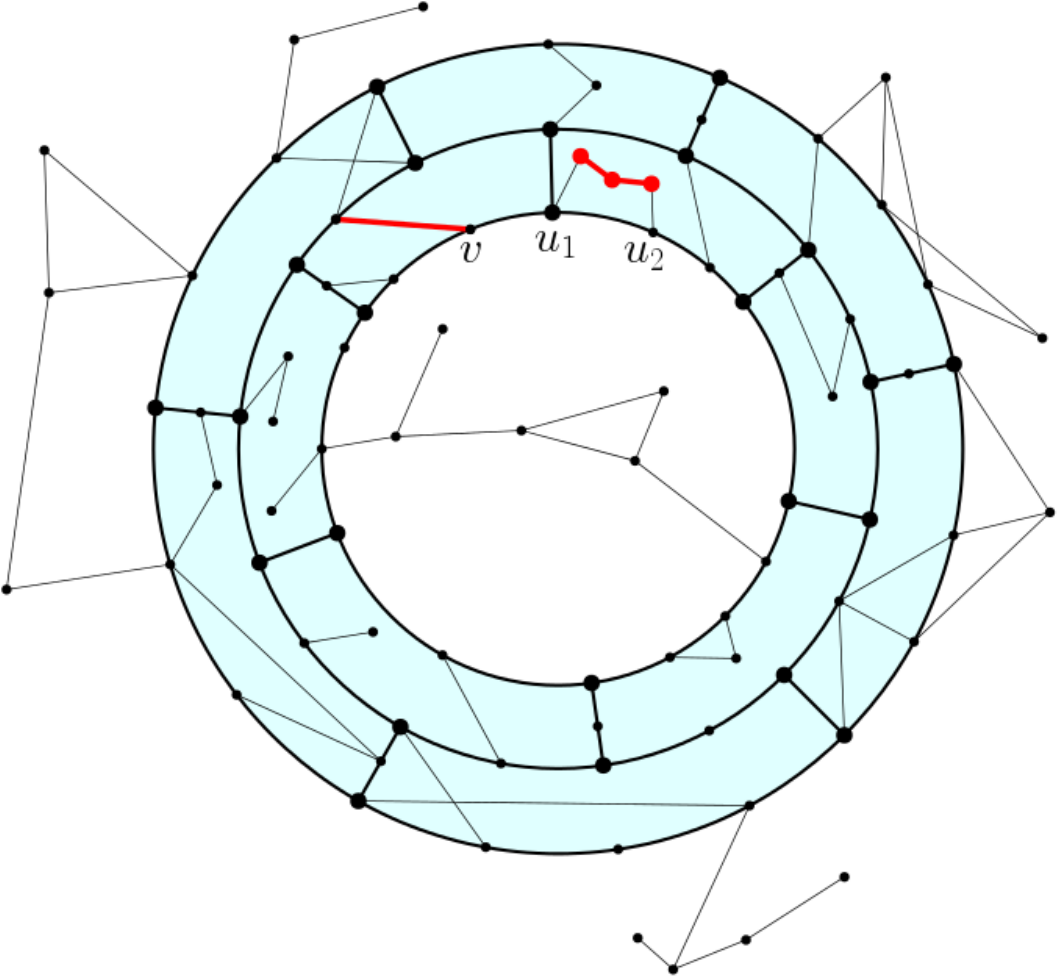}
\caption{An example of an annulus-boundaried graph $(G,K,Y,\Bbb{A})$. The annulus $A$ is depicted in blue. Two wall-components of $(G,K,Y,\Bbb{A})$  are depicted in red: An edge attached to $v\in V(G)$ and a subgraph of $K$ attached to $u_{1},u_{2}\in V(G)$.}
\label{dnfsalknlksadnfklasdnflk}
\end{figure}

The fact that $Y$ is a subdivision of a 3-connected graph and all embedding of the latter are equivalent implies the following result:

\begin{observation}\label{obsasfaffafgre}
	Let $(G,K,Y,\Bbb{A})$ be an annulus-boundaried graph and let $H$ be a subgraph of $Y$. If ${\sf att}(H)$ is planar, then every wall-component of $(G,K,Y,\Bbb{A})$ that is a subgraph of ${\sf att}(H)$ is either attached only to vertices of the inner/outer cycle of $(G,K,Y,\Bbb{A})$ or is a brick-component.
\end{observation}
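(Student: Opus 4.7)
The plan is to show that the attachment set $V_Q$ of any wall-component $Q$ contained in ${\sf att}(H)$ lies on the boundary of a single face of the unique planar embedding of $Y$. Since $Y$ is a subdivision of a $3$-connected planar graph, its planar embedding is unique up to reflection; its faces in the plane are exactly the bricks of $Y$, the disk bounded by $C_{\sf in}$, and the exterior of $C_{\sf out}$. Hence this single-face containment immediately yields the claimed trichotomy: $V_Q\subseteq V(B)$ for some brick $B$, or $V_Q\subseteq V(C_{\sf in})$, or $V_Q\subseteq V(C_{\sf out})$.

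First I would dispose of subgraph-type wall-components and edge-type wall-components whose edge lies in $E(K)$: by the fixed embedding of $K$ in $A$, such a $Q$ lies entirely inside the region of a single brick $B$ of $Y$, so its attachments via edges of $E(K)$ are contained in $V(B)$. This immediately yields brick-containment for such $Q$, modulo any additional attachments via $E(G)\setminus E(K)$.

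The delicate step is handling attachments (or whole wall-components) arising from edges in $E(G)\setminus E(K)$, on which the embedding of $K$ in $A$ imposes no direct geometric constraint. Here I plan to use the planarity of ${\sf att}(H)$: fix a planar embedding $\pi$ of ${\sf att}(H)$ and extend or modify $\pi$ so as also to accommodate the missing edges of $Y\setminus H$, thereby producing a planar embedding of $Y\cup{\sf att}(H)$. Once this is achieved, the uniqueness of $Y$'s planar embedding forces $Q$ to lie in a single face of $Y$, so $V_Q$ lies on the boundary of that face and the trichotomy follows.

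The main obstacle will be the extension step: since $H$ need not be $2$-connected, the embedding of $H$ induced by $\pi$ may differ from the restriction of $Y$'s unique embedding to $H$, so $\pi$ does not obviously extend to an embedding of all of $Y\cup{\sf att}(H)$. I would handle this by applying Whitney-type flips to $\pi$ at the $2$-separators of $H$ inside ${\sf att}(H)$ in order to bring the induced embedding of $H$ into agreement with the one inherited from $Y$, exploiting the fact that the subgraph-type wall-components lying in $K$ (whose positions relative to the bricks are rigidly fixed by the embedding of $K$ in $A$) already pin down the embedding of $H$ locally around each brick, and that the $3$-connectivity of $Y$'s underlying graph propagates this local rigidity into global compatibility. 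The alternative route, should the flip analysis become unwieldy, is a direct Kuratowski argument: assume $V_Q$ is not contained in any single brick boundary nor in a single extremal cycle, then use $3$-connectivity of $Y$ to extract disjoint paths in $H$ that, together with $Q$, realize a $K_{3,3}$ or $K_5$-minor inside ${\sf att}(H)$, contradicting its planarity.
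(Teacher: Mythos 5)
Your central idea — that $Y$ is a subdivision of a $3$-connected planar graph with an essentially unique embedding, so a wall-component should be confined to a single face of $Y$ — is exactly the one-line justification the paper gives, and your first step (subgraph-type wall-components and edges of $E(K)\setminus E(Y)$ are drawn inside $A$ by the fixed embedding of $K$, hence inside a single brick) is sound.

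The gap is in the ``delicate step.'' You propose to take a planar embedding $\pi$ of ${\sf att}(H)$ and extend or flip it into a planar embedding of $Y\cup{\sf att}(H)$, and only then invoke uniqueness of $Y$'s embedding. But planarity of ${\sf att}(H)$ does \emph{not} imply planarity of $Y\cup{\sf att}(H)$: if $H$ is a sparse subgraph of $Y$ (say, the two extremal cycles plus a single vertical path), then ${\sf att}(H)$ can easily be planar while containing an edge-type wall-component $e=uv$ with $u\in V(C_{\sf in})$ and $v\in V(C_{\sf out})$; adding back the rest of $Y$ then destroys planarity. Your two fallbacks do not repair this. Whitney flips preserve planarity, so they can reconcile two embeddings of the \emph{same} planar graph but cannot make $Y\cup{\sf att}(H)$ planar when it is not. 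The Kuratowski route asks to ``use $3$-connectivity of $Y$ to extract disjoint paths in $H$,'' but the paths must lie in ${\sf att}(H)$, and an arbitrary subgraph $H\subseteq Y$ inherits none of $Y$'s connectivity; for a sparse $H$ the needed $K_{3,3}$- or $K_5$-minor simply is not there.

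What actually makes the observation go through (and what is silently available in its only use, inside the proof of the annulus-gluing lemma) is the geometric consequence of the condition $G\cap A=K$: any edge of $E(G)\setminus E(K)$ that touches $V(K)$ can only attach at $V(C_{\sf in})\cup V(C_{\sf out})$, and when the ambient graph is actually drawn with $K$ embedded in $A$, the two boundary circles bound disjoint components of the complement of $A$, so such an edge cannot join $C_{\sf in}$ to $C_{\sf out}$. Combined with your step one, this yields the trichotomy directly, without ever needing to re-embed $H$ or invoke Whitney flips. In short: you correctly located the difficulty, but the planarity of ${\sf att}(H)$ alone is too weak a hypothesis to resolve it by the extension argument you propose; the resolution has to come from the annulus-separation structure, not from re-embedding.
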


%


\paragraph{Annulus-embedded separators.}
Let $G$ be a graph. Let also $(K,Y,\Bbb{A})$ be a triple where $K$ is a graph, $Y$ is a subgraph of $K$ and $\Bbb{A}$ is an oriented closed annulus.
We say that $(K,Y, \Bbb{A})$ is an {\em annulus-embedded separator} of $G$
if there are two subgraphs $G_{\sf in}$ and $G_{\sf out}$ of $G$ such that 
$V(G_{\sf in})\cup V(G_{\sf out})= V(G)$, $V(G_{\sf in}) \cap V(G_{\sf out}) = V(K)$, $(V(G_{\sf in}),V(G_{\sf out}))$ is a separation of $G$, and
both $(G_{\sf in}, K, Y, \Bbb{A})$ and $(G_{\sf out}, K, Y, {\bf rev}(\Bbb{A}))$ are annulus-boundaried graphs.
We call $G_{\sf in}$ (resp. $G_{\sf out}$) the {\em inner} (resp. {\em outer}) {\em component} of $(K,Y,\Bbb{A})$  in $G$.

We now prove the following result:

\begin{lemma}
	\label{dnsfkanklda}
	Let $G$ be a graph and let $(K,Y,\Bbb{A})$ be an annulus-embedded separator of $G$.
	Let also $G_{\sf in}$ and $G_{\sf out}$ be the inner and outer component of $(K,Y,\Bbb{A})$  in $G$, respectively.
	Then $G$ is a planar graph if and only if $G_{\sf in}$ and $G_{\sf out}$ are planar graphs.
\end{lemma}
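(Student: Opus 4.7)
The forward direction is immediate since planarity is preserved under taking subgraphs, so both $G_{\sf in}$ and $G_{\sf out}$ are planar whenever $G$ is. The entire work lies in the converse.

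The pivotal structural ingredient is that $Y$, being a $3$-wall-annulus, is a subdivision of a $3$-connected planar graph. By Whitney's theorem, $Y$ therefore admits a unique planar embedding on the sphere up to reflection; in particular, in every such embedding the two extremal cycles $C_{\sf in}$ and $C_{\sf out}$ bound two distinct faces of $Y$. The plan is to construct a planar embedding of $G=G_{\sf in}\cup G_{\sf out}$ by first aligning planar embeddings of $G_{\sf in}$ and $G_{\sf out}$ along their common subgraph $K$, and then placing their non-shared portions into disjoint complementary disks of the sphere.

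Concretely, fix arbitrary planar embeddings $\eta_{\sf in}$ and $\eta_{\sf out}$ of $G_{\sf in}$ and $G_{\sf out}$ on the sphere and, by reflecting $\eta_{\sf out}$ if necessary, make the rotation systems of $Y$ coincide in both. Next, I align the embeddings of $K$: within each of $\eta_{\sf in}$ and $\eta_{\sf out}$, \autoref{obsasfaffafgre} (applied to $H=Y$) forces every wall-component of $K$ to be either a brick-component or attached only to one of the extremal cycles, and since the combinatorial embedding of $Y$ is rigid up to reflection, each such wall-component can be independently flipped between the two sides of its bounding brick or extremal cycle by a sphere isotopy that fixes $Y$ pointwise. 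Performing the appropriate flips in each embedding brings the induced embeddings of $K$ in $\eta_{\sf in}$ and in $\eta_{\sf out}$ into coincidence with the prescribed embedding of $K$ in the closed annulus $A$ specified by the annulus-boundaried structure (using ${\bf rev}(\Bbb{A})$ for $G_{\sf out}$).

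After this alignment, the non-$K$ part of $G_{\sf in}$ must lie outside the closed annulus $A$, that is, inside the disjoint union $\comp_{\sf in}(\Bbb{A})\cup\comp_{\sf out}(\Bbb{A})$; each connected component of $G_{\sf in}\setminus K$ sits entirely inside one of these two complementary open disks and can be flipped between them by a further isotopy that fixes $K$, so without loss of generality all of $G_{\sf in}\setminus K$ lies in $\comp_{\sf in}(\Bbb{A})$. Symmetrically, all of $G_{\sf out}\setminus K$ is placed in $\comp_{\sf out}(\Bbb{A})$. Since $\comp_{\sf in}(\Bbb{A})\cap\comp_{\sf out}(\Bbb{A})=\emptyset$ and the two embeddings agree on $K$, their union is a planar embedding of $G$. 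The main technical obstacle is exactly the alignment step, where the rigidity of $Y$ given by Whitney's theorem has to be pushed forward to a full agreement on $K$; this is precisely what \autoref{obsasfaffafgre} enables, by controlling how wall-components can be attached and therefore which local flips are available.
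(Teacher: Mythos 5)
Your overall plan — make the two embeddings agree on the shared subgraph $K$, then place the private parts in complementary disks — is a natural alternative to the paper's argument, but it is not the route the paper takes, and as written it has a genuine gap. The paper never tries to make the embeddings of $G_{\sf in}$ and $G_{\sf out}$ coincide on all of $K$. Instead it fixes $\theta$ for $G_{\sf out}$ and $\varphi$ for $G_{\sf in}$, introduces the two rings of bricks $R_{\sf in}$ and $R_{\sf out}$ near the extremal cycles, sets $Q_{\sf in}:={\sf att}(R_{\sf in})$, and isolates the small ``interface'' vertex set $U\subseteq V(R_{\sf in})\cap V(R_{\sf out})$. Observation~\ref{obsasfaffafgre} is then used exactly once, to prove $U\subseteq V(R_{\sf in})\cap V(R_{\sf out})$, after which both $\theta|_{G_1}$ and $\varphi|_{G_2}$ are shown to have a face whose boundary carries $U$, and the gluing happens along that single face. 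By cutting deep inside $Y$, the paper sidesteps all questions about reconciling the two embeddings of $K$.

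The gap in your argument is precisely the step you flag as ``the main technical obstacle.'' Whitney's theorem gives rigidity of $Y$, but it does not follow that the induced embeddings of $K$ in $\eta_{\sf in}$ and $\eta_{\sf out}$ can be brought to agreement by flipping wall-components. Observation~\ref{obsasfaffafgre} is a \emph{restriction} on how wall-components can attach (each is a brick-component or attaches only to an extremal cycle); it is not a statement that wall-components can be freely relocated, nor that the internal combinatorial embedding of each wall-component can be matched between the two sides. A wall-component attached to a single vertex $v$ can sit in any of the faces of $Y$ at $v$, which is generally more than ``two sides''; a wall-component that is not a subdivision of a $3$-connected graph can itself carry several inequivalent embeddings, and you never argue that these can be reconciled. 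Since the rest of your argument hinges on the two restricted embeddings of $K$ coinciding exactly, this alignment claim would need a separate, careful proof.

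A second, smaller problem: with the paper's definitions, $\comp_{\sf in}(\Bbb{A})$ and $\comp_{\sf out}(\Bbb{A})$ are \emph{not} disjoint; in fact $\comp_{\sf out}(\Bbb{A})$ is the side of $C_{\sf out}$ that contains $C_{\sf in}$, hence $\comp_{\sf in}(\Bbb{A})\subseteq\comp_{\sf out}(\Bbb{A})$. The two disjoint complementary disks you actually want are $\comp_{\sf in}(\Bbb{A})$ and $\comp_{\sf in}({\bf rev}(\Bbb{A}))$. This is a terminology slip rather than a logical one, but as written the final sentence of your gluing step rests on a false set-theoretic premise.
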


\begin{proof}
	Observe that if $G$ is a planar graph then, trivially, $G_{\sf in}$ and $G_{\sf out}$ are planar graphs.
	We now prove that if $G_{\sf in}$ and $G_{\sf out}$ are planar graphs, then $G$ is also planar.
	
Suppose that $G_{\sf in}$ and $G_{\sf out}$ are planar graphs and also keep in mind that, since $G_{\sf in}$ and $G_{\sf out}$ are the inner and outer component of $(K,Y,\Bbb{A})$  in $G$, both $(G_{\sf in}, K, Y, \Bbb{A})$ and $(G_{\sf out}, K, Y, {\bf rev}(\Bbb{A}))$ are annulus-boundaried graphs.
Also, let $R_{\sf in}$ (resp. $R_{\sf out}$) be the subgraph of $G$ induced the union of the vertex sets of all bricks of $Y$ that intersect the inner (resp. outer) cycle of $(G,K,Y,\Bbb{A})$.

We begin by fixing a planar embedding $\theta$ of $G_{\sf out}$. 
Keep in mind that since $Y$ is a subdivision of a 3-connected planar graph, then all its plane embeddings are equivalent.
Observe that $\theta(R_{\sf out})$ is a region that divides the plane in two other regions (one finite and one infinite).
Assume that the graph $G_{\sf out}\setminus K$ is embedded in the infinite region.
	
Let $Q_{\sf in} := {\sf att}(R_{\sf in})$ and let $U$ denote the vertices of $Q_{\sf in}$ that are adjacent to some vertex of $G_{\sf out}\setminus V(Q_{\sf in})$. For more intuition, notice that $U$ is a subset of $V(R_{\sf in})\cap V(R_{\sf out})$.
To prove the latter, suppose towards a contradiction that there is  a vertex $v\in U$ that is not in $V(R_{\sf in}) \cap V(R_{\sf out})$.
Observe that $v$ is a vertex of a wall-component $H$ of $(G_{\sf out}, K, Y, {\bf rev}(\Bbb{A}))$ that is also a subgraph of $Q_{\sf in}$.
Since $v\in U$, there exists a vertex $u$ of $G_{\sf out}\setminus V(Q_{\sf in})$ such that $v$ and $u$ are adjacent.
Notice that by the definition of wall-component, it follows that $u\in V(Y)$.
But then $H$ is attached to $u$ and since $u\notin V(Q_{\sf in})$, we arrive to a contradiction to the definition of $Q_{\sf in}$ and \autoref{obsasfaffafgre}.
Observe that the restriction of $\theta$ to $G_{1}:=G_{\sf out}\setminus V(Q_{\sf in}\setminus U)$ has a face whose boundary contains $U$.

	Now let $φ$ be a planar embedding of $G_{\sf in}$ and let us restrict $φ$ to $G_{2}:=G\setminus V(G_{1}\setminus U)$. Observe that $Q_{\sf in}\subseteq V(G_{2})$.
	Note that $U$ contains only vertices that are adjacent to some vertex in $R_{\sf out}$ or are adjacent to brick-components belonging to a brick of $R_{\sf out}$. But $φ$ embeds $R_{\sf out}$ and its brick-components also, and therefore the restriction of $φ$ to $G_{2}$ results in a face whose boundary contains $U$.
	
	Now observe that by combining $\theta$ and $φ$ in such a way that we embed $G_{1}$ according to $\theta$ and  $G_{2}$  according to $φ$ and then ``match'' them by identifying $\theta(u)$ and $φ(u)$ for all $u\in U$, we get a planar embedding of $G$.
\end{proof}

%

Before we proceed with the proof of \autoref{sdmgflsadmgfrlka}, we need some more definitions.

\paragraph{Graph contractions.}
Let $G$ and $H$ be graphs and let $\rho : V(G)\rightarrow V(H)$ be a surjective mapping such that:
\begin{enumerate}
	\item for every vertex $v\in V(H)$, its codomain $\rho^{-1}(v)$ induces a connected graph $G[\rho^{-1}(v)]$,
	\item for every edge $\{u,v\}\in E(H)$, the graph $G[\rho^{-1}(u)\cup \rho^{-1}(v)]$ is connected, and
	\item for every edge $\{u,v\}\in E(G)$, either $\rho(u)=\rho(v)$ or $\{\rho(u), \rho(v)\}\in E(H)$.
\end{enumerate}
We say that {\em $H$ is a contraction of $G$ (via $\rho$)} and for a vertex $v\in V(H)$ we call the codomain $\rho^{-1}(v)$ the {\em model of $v$} in $G$.

\paragraph{Central grids.}
Let  $k,r\in\Bbb{N}_{\geq 2}.$
We define the {\em perimeter} of a $(k\times r)$-grid to be the unique cycle of the grid of length at least three that that does not contain vertices of degree four.
Let $r\in \Bbb{N}_{\geq 2}$ and $H$ be an $r$-grid.
Given an $i\in\lceil \frac{r}{2}\rceil,$ we define the {\em $i$-th layer} of $H$ recursively as follows.
The first layer of $H$ is its perimeter, while, if $i\geq 2,$ the $i$-th layer of $H$ is
the $(i-1)$-th layer of the grid created if we remove from $H$ its perimeter.
Given two  odd integers $q,r\in\Bbb{N}_{\geq 3}$ such that $q\leq r$ and an $r$-grid $H,$
we define the {\em central $q$-grid} of $H$ to be the graph obtained from $H$
if we remove from $H$ its $\frac{r-q}{2}$ first layers.

\paragraph{Triangulated grids.}
We now define the {\em triangulated $k$-grid} $\Gamma_{k}$.
Consider a plane embedding of the $k$-grid such that all external vertices are on the boundary of the infinite face.
We triangulate the internal faces of the $k$-grid (the faces that are incident to exactly four edges) such that all internal vertices have degree $4$ in the obtained graph and all non-corner external vertices have degree $4$.
Finally, one corner of degree $2$ is joined by edges with all the extremal vertices and we call this vertex {\em loaded} (see example in \autoref{ndslkfvnadsklf}). We refer to the initial $k$-grid as the {\em underlying grid of $\Gamma_{k}$}.
\begin{figure}[ht]
\centering
	\includegraphics[width=6cm]{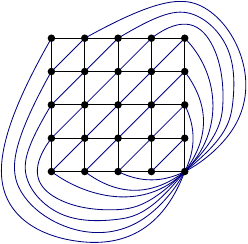}
\caption{The graph $\Gamma_{5}$.}
\label{ndslkfvnadsklf}
\end{figure}

Before we proceed to the proof of \autoref{sdmgflsadmgfrlka}, we need two results that will be useful.

\begin{proposition}[~\cite{FominGT11cont}]\label{bidimensionaltri}
	Let $G$ be a connected planar graph and $k$ be a positive integer. There is a constant $\newcon{dsjkfnajk}$ such that if $\tw(G)>\conref{dsjkfnajk}\cdot k$, then $G$ contains $\Gamma_{k}$ as a contraction.
\end{proposition}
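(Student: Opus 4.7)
The plan is to derive the triangulated-grid contraction from the Planar Excluded Grid Theorem of Robertson, Seymour, and Thomas, using the connectedness of $G$ and its planar embedding to realize the extra edges of $\Gamma_k$ (the diagonals of each internal $4$-face and the loaded-corner edges) via judicious extensions of the branch sets of an underlying grid minor.

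First, I would invoke the Planar Excluded Grid Theorem to obtain, for some constant $c_1$, an $h$-grid minor $H$ of $G$ with $h=c_2 k$, as soon as $\tw(G)>c_1 h$. Fix a planar embedding of $G$, and with it an embedding of $H$ whose vertex branch sets $\{\rho^{-1}(v)\}_{v\in V(H)}$ are pairwise-disjoint connected subgraphs of $G$ together with a connecting edge between each pair of adjacent branch sets. By relabelling $h\mapsto \Theta(h)$ I may assume $h$ is odd so that $H$ has a well-defined central sub-grid with a distinguished corner. The aim is now to enlarge the branch sets into new branch sets $\{\tilde\rho^{-1}(v)\}_{v\in V(\Gamma_h)}$ that witness $\Gamma_h$ as a contraction of $G$ on a somewhat smaller sub-grid of $H$.

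Next, for each bounded face $f$ of $H$ (which is bounded by a $4$-cycle $v_1 v_2 v_3 v_4$ of $H$), planarity places the \emph{non-grid} portion of $G$ lying in $f$ in a disk whose boundary meets only the four branch sets $\rho^{-1}(v_1),\ldots,\rho^{-1}(v_4)$. By fixing a diagonal (say $v_1 v_3$) and shifting to $v_1$ and $v_3$ all the $G$-vertices lying in that disk, while leaving $v_2,v_4$ untouched, I produce, after contraction, the required diagonal edge of $\Gamma_h$, which is what turns the $4$-face into two triangles. Doing this coherently across all internal faces needs a choice of diagonals (fix the same direction in every face) and a check that the disks associated with distinct faces of $H$ are internally disjoint so the reallocation of interior $G$-vertices is consistent; this is immediate from the planar embedding. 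For the loaded corner, I use that $G$ is \emph{connected}: the portion of $G$ lying in the outer face of $H$, together with the branch sets along the perimeter of $H$, forms a connected subgraph meeting every extremal branch set. Collapsing all this material (minus the perimeter branch sets themselves) into the branch set of a single corner vertex yields a vertex whose neighbourhood in the contraction includes every perimeter vertex of $H$, i.e., the loaded corner.

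The hard part will be precisely the coherence of these two operations simultaneously on the same grid minor. The branch sets that produce diagonals on internal faces and the branch set that produces the loaded corner must be pairwise disjoint, must each remain connected, and together must cover $V(G)$. The standard way around this is to start with a grid minor of height $\Theta(h)$, and perform the triangulation construction only on a strictly smaller central sub-grid, using the outer layers of the original grid as a connected ``buffer'' that absorbs any vertices of $G$ that would otherwise create conflicts (in particular, the material in the outer face of the sub-grid is then automatically connected via this buffer to every perimeter vertex, enabling the loaded-corner contraction). With this slack the two constructions decouple, and the resulting surjection $\tilde\rho:V(G)\to V(\Gamma_h)$ satisfies the three conditions in the definition of contraction, proving the proposition with $\conref{dsjkfnajk}$ obtained from $c_1$ and the constant loss incurred by passing to the central sub-grid.
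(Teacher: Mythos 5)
The paper does not prove this proposition; it is imported verbatim from Fomin, Golovach, and Thilikos~\cite{FominGT11cont}, so there is no ``paper's own proof'' to compare against. I will therefore assess your argument on its own merits.

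Your plan (planar excluded grid minor $+$ planarity $+$ connectedness $\Rightarrow$ triangulated-grid contraction) is aimed in the right direction, and the loaded-corner step is essentially sound: absorbing the outer rings of a larger grid (and whatever else lies in the outer face) into one corner's branch set does make that corner adjacent to every perimeter vertex after contraction. The genuine gap is in how you propose to create the diagonals. You say the diagonal of a face $v_1v_2v_3v_4$ of the grid minor $H$ is produced by ``shifting to $v_1$ and $v_3$ all the $G$-vertices lying in that disk.'' But nothing forces such vertices to exist: take $G$ to be the $h$-grid itself. Then the excluded-grid minor can be taken with singleton branch sets, every internal face of $H$ is an empty $4$-face of $G$, there is nothing to shift, and no diagonal ever appears in the contraction --- yet $\tw(G)=h$ is large, so the proposition must still apply. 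Even when the disk is nonempty, the interior material may be attached only to $v_2$ and $v_4$, so no reallocation produces the edge $v_1v_3$. In short, the diagonals cannot in general be harvested from excess material sitting inside faces.

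The correct mechanism, which your proof is missing, is to build the diagonals into the \emph{shape} of the branch sets rather than hoping to find them in the interior of faces. For instance, inside an $\Theta(k)\times\Theta(k)$ subgrid one can partition the rows into horizontal dominoes laid in a brick-wall pattern (odd rows offset by one column relative to even rows). Each domino is then adjacent not only to its left/right and up/down neighbours but also to one diagonal neighbour in each of the rows above and below, which after contraction yields exactly the uniformly-oriented diagonals of $\Gamma_k$. Combining this reshaped contraction on a central subgrid with your buffer-plus-loaded-corner step, and checking that the union of all branch sets exhausts $V(G)$ (which is where connectedness of $G$ is used), gives a complete argument. As written, though, your proof has no way to produce the diagonal edges when $G$ is grid-like, and that is a failure of the argument, not a missing detail.
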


\begin{proposition}[~\cite{DemaineFHT04bidi}]
	\label{localtw}
	Let $H$ be the $m$-grid and a subset $U$ of vertices in the central $(m-2\ell)$-grid $\hat{H}$ of $H$, where $|U|=s$ and $\ell=\lfloor \sqrt[4]{s}\rfloor$. Then $H$ contains the $\ell$-grid $R$ as a minor such that the model of each vertex of $R$ intersects $U$.
\end{proposition}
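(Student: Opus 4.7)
The plan is a double-pigeonhole partitioning of $U$ followed by a disjoint-paths routing inside $H$. Since $|U|=s\geq \ell^4$, I would first sort $U$ by row index (breaking ties by column) and cut it into $\ell$ consecutive blocks $U_1,\ldots,U_\ell$, each of size at least $\lfloor s/\ell\rfloor \geq \ell^3$; let $\rho_i$ denote the range of row indices spanned by $U_i$. By discarding at most one vertex per block one can assume the ranges $\rho_i$ are pairwise disjoint. Next, inside each $U_i$, I would sort by column index and cut into $\ell$ sub-blocks $U_{i,1},\ldots,U_{i,\ell}$ of size at least $\ell^2$, then pick a representative $u_{i,j}\in U_{i,j}$ for every $i,j\in[\ell]$. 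The resulting array of representatives is monotone: the rows strictly increase in $i$, and within every fixed row strip the columns strictly increase in $j$. All $u_{i,j}$ lie in $\hat{H}$.

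The second step is to build the $\ell$-grid minor $R$ out of $\ell$ horizontal ``rows'' and $\ell$ vertical ``columns''. For each $i$, the horizontal path $\pi_i$ would be a left-to-right path living inside the horizontal strip of $\hat{H}$ spanned by $\rho_i$ and visiting $u_{i,1},u_{i,2},\ldots,u_{i,\ell}$ in this order; this exists because they already appear in increasing column order. For each $j$, the vertical connector $\sigma_j$ has to hit $\pi_i$ at the column of $u_{i,j}$ for every $i$ and to remain pairwise vertex-disjoint from the other $\sigma_{j'}$. Between two consecutive strips indexed by $i$ and $i+1$, the entry columns $c_{i,1}<\cdots<c_{i,\ell}$ and the exit columns $c_{i+1,1}<\cdots<c_{i+1,\ell}$ are both sorted, so the required matching is the identity permutation, and the connectors can be drawn as $\ell$ non-crossing paths by a standard planar routing argument. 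When two row strips are adjacent and leave no room for the horizontal shifts between them, I would route each $\sigma_j$ through one of the $\ell$ rows of the outer buffer of $H$ around $\hat{H}$, execute the shift there, and re-enter $\hat{H}$ in the next strip. The branch set $B_{i,j}$ is then taken to be the piece of $\pi_i$ between its crossings with $\sigma_j$ and $\sigma_{j+1}$, glued with the segment of $\sigma_j$ between $\pi_i$ and $\pi_{i+1}$; it is connected and contains $u_{i,j}\in U$.

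The main obstacle will be the disjoint routing of the vertical connectors, because a priori the column of $u_{i,j}$, viewed as a function of $i$ for fixed $j$, may zigzag arbitrarily across $[1,m]$. What saves the argument is precisely the intra-strip monotonicity of the representatives' columns: it forces the permutation that the $\sigma_j$'s have to realise across any two consecutive strips to be the identity, hence non-interleaving, so planar routing becomes feasible. The $\ell$-layer width of the outer buffer of $H$ around $\hat{H}$ is exactly what is needed to host up to $\ell$ simultaneous horizontal detours between adjacent row strips, and this is where the hypothesis $U\subseteq V(\hat{H})$ enters the proof. Once the $\pi_i$'s and $\sigma_j$'s have been placed, the branch sets $B_{i,j}$ are pairwise vertex-disjoint, adjacent in the pattern of the $\ell$-grid, and each contains an element of $U$, certifying the claimed minor.
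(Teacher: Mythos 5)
This proposition is not proved in the paper at all: it is imported as a black box from the cited reference of Demaine, Fomin, Hajiaghayi and Thilikos, so the only question is whether your argument stands on its own. It does not, because of a genuine gap at the very first step. You sort $U$ by row index, cut it into $\ell$ consecutive blocks, and assert that ``by discarding at most one vertex per block one can assume the ranges $\rho_i$ are pairwise disjoint.'' This is false: two consecutive blocks $U_i,U_{i+1}$ may share a row that contains many vertices of $U$ --- possibly an entire block's worth --- and in the extreme case where all of $U$ lies in a single row of $\hat H$ (which is perfectly consistent with the hypotheses, since the statement places no spreading condition on $U$), no discarding whatsoever produces $\ell$ blocks with disjoint row ranges. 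The same defect recurs at the second level: within a block whose row range is large, an entire sub-block $U_{i,j}$, or even the whole block $U_i$, can sit in a single column, so the representatives $u_{i,1},\ldots,u_{i,\ell}$ need not have strictly increasing columns. Everything downstream (the horizontal strips $\pi_i$, the identity-permutation routing of the $\sigma_j$'s) relies on this staircase structure, so the construction collapses exactly in the concentrated configurations.

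These degenerate cases are not a removable technicality; they are the reason the exponent in $\ell=\lfloor\sqrt[4]{s}\rfloor$ is $1/4$ and the reason the branch sets cannot be confined to disjoint horizontal strips. A standard way to repair the argument is to first observe that $U$ must meet at least $\ell^2$ distinct rows or at least $\ell^2$ distinct columns (otherwise $|U|\le(\ell^2-1)^2<\ell^4\le s$), reduce by symmetry to, say, $\ell^2$ vertices of $U$ in pairwise distinct columns, group them into $\ell$ consecutive column bands of $\ell$ vertices each, and then build the grid minor with comb- or L-shaped branch sets that use the vertical segments of those distinct columns to reach $\ell$ prescribed horizontal levels; the width-$\ell$ buffer of $H$ around $\hat H$ is what lets the $\ell$ rows and $\ell$ columns of the minor close up around the region occupied by $U$. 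If you want to present a self-contained proof, you need this case analysis (or an equivalently flexible branch-set construction); as written, your proof only covers inputs where $U$ is already well spread over both coordinates.
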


In the following proof, we use $\conref{sdafsdfsd},\conref{dsjkfnajk}$ to denote the constants in \autoref{something_good} and \autoref{bidimensionaltri}, respectively.

\begin{proof}[Proof of \autoref{sdmgflsadmgfrlka}]
	
We set $m = 3 \cdot (2k+1)$,
\begin{align*}
r & :=  2\cdot (2m+q)+1,&
z &:=\conref{sdafsdfsd}\cdot r+2,&
\funref{jsdfjbnvjfdak}(k,q) & :=z-2,\\
\ell & :=4\lceil\sqrt{k+1}\rceil -1,&
{b} & :=2\ell + \sqrt{\ell^{4}\cdot k } \cdot {z},~\mbox{and}&
\funref{lksgjreklgjrnjighpotrr}(k,q) & :=\max\{\conref{dsjkfnajk}\cdot {b} + k, \conref{sdafsdfsd}\cdot q\}.
\end{align*}
We begin with the case where $\boxtimes={\sf ea}$.
Observe that if $G$ is not planar, then $(G,k)$ is a \no-instance of \mnb{\mnb{\sc \sc  G$\boxtimes$MP$φ$}}.
If $G$ is planar and if it is the case that $\tw(G)>\funref{lksgjreklgjrnjighpotrr}(k,q)\geq  \conref{sdafsdfsd}\cdot q$, we call  the algorithm {\bf Find\_Wall}$(G,q)$ of \autoref{something_good} and  we get a $q$-wall $W$ of $G$ whose  compass  has treewidth at most $\conref{sdafsdfsd}\cdot q$.
Since $\conref{sdafsdfsd}\cdot q<\funref{jsdfjbnvjfdak}(k,q)$, the claimed bound on the treewidth of $\comp(W)$ follows.
We also set $\Delta:=\perim(W)\cup J$, where $J$ is the connected component of $\Bbb{R}^{2}\setminus \perim(W)$ that contains $W\setminus V(\perim(W))$.
Observe that $\Delta$ is a closed disk and therefore $G$ is partially $\Delta$-embedded, where $G\cap \Delta=\comp(W)$ and $\bd(\Delta)=\perim(W)$.

Therefore, in the rest of the proof we consider the case where $\boxtimes\in\{{\sf vd}, {\sf ed}, {\sf ec}\}$.
We consider an embedding $\theta$ of $G\setminus S$ in the plane.
Suppose that $\tw(G)>\funref{lksgjreklgjrnjighpotrr}(k,q)$.
Let $G'$ be a connected component of $G\setminus S$ (if $G\setminus S$ is connected, $G' := G\setminus S$) such that $\tw(G')= \tw(G\setminus S)$.
Therefore, we have that $\tw(G')>\funref{lksgjreklgjrnjighpotrr}(k,q)-k\geq \conref{dsjkfnajk}\cdot {b}$.
Then, by \autoref{bidimensionaltri}, $G'$ contains $\Gamma_{{b}}$ as a contraction.
Let $H$ be the underlying grid of $\Gamma_{{b}}$ and $\hat{H}$ be the central $({b}-2\ell)$-grid of $H$.
 
For every vertex $u\in S$, let \[N_{u}:=\{v\in V(\hat{H}) \mid ~\mbox{the model of}~v~\mbox{intersects}~N_{G'}(u)\}.\]
Let $N:=\bigcup_{u\in S}N_{u}$.
	We consider the following cases, concerning the size of $N$:\medskip
	
	\noindent{\bf Case 1:} $|N|\geq\ell^{4}\cdot k$.\smallskip
	
	In this case, there exists a vertex $u\in S$ such that $|N_{u}|\geq \ell^{4}$.
	Let $U$ be a subset of $N_{u}$ such that $|U|=\ell^{4}$.
	Then, by \autoref{localtw}, $H$ contains the $\ell$-grid as a minor and every vertex of the latter is adjacent to $u$.
	This, together with the fact that  $\ell=4\lceil\sqrt{k+1}\rceil -1$, implies that $G$ contains a $(K_{5},k+1)$-star\footnote{Given an $r\geq 1$ we define the graph $(K_{5},r)$-star as the graph obtained by taking $r$ copies of $K_{4}$ (that is the complete graph on $4$ vertices) and a vertex $v$ and making $v$ adjacent to all vertices of the $r$ copies of $K_{4}$. We call $v$ the {\em central vertex of the $(K_{5},r)$-star}.}
 as a minor with $u$ as its central vertex. 
Observe that if $\boxtimes={\sf ed},{\sf ec}$, the latter implies that $(G,k)$ is a \no-instance (since we can not eliminate all $k+1$ copies of $K_{5}$ from $G$ by deleting/contracting $k$ edges),
while if $\boxtimes={\sf vd}$, for every ${\sf vd}$-planarizer $S'$ of $G$ of size at most $k$ it holds that $u\in S'$
 (intuitively, $u$ is an ``obligatory'' vertex for every ${\sf vd}$-planarizer of $G$ of size at most $k$).
Also, observe that $S\setminus \{u\}$ is a ${\sf vd}$-planarizer of $G\setminus u$ of size at most $k-1$ and notice that $(G,k)$ and $(G\setminus u, k-1)$ are equivalent instances of \mnb{\mnb{\sc \sc  G$\boxtimes$MP$φ$}}.
The above consitute the first possible output of the algorithm {\bf Find\_Area}$(k,q,G,S)$ of \autoref{sdmgflsadmgfrlka} and this concludes Case 1.\medskip

\noindent{\bf Case 2:} $|N|< \ell^{4}\cdot k$.\smallskip
	
In this case, we first argue that the following holds:\medskip

\noindent{\em Claim 1:} There exists a wall $\tilde{W}$ of $G'$ of height $2m+q$ such that $N_{G'}(S)\cap V(\comp(\tilde{W}))=\emptyset$.\smallskip

\noindent{\em Proof of Claim 1:}
Since $|N|< \ell^{4}\cdot k$, $\hat{H}$ is a $(b-2\ell)$-grid, and $b - 2\ell = \sqrt{\ell^{4}\cdot k}\cdot z$,
there exists a $z$-grid $H'$ that is a subgraph of $\hat{H}$ such that $N\cap V(H')=\emptyset$.
	
Let $w$ denote some corner of $H'$.
Consider a surjective mapping $\rho: V(\Gamma_{{b}}) \rightarrow V(H')$ that maps every vertex in $V(H')$ to itself and every vertex in $V(\Gamma_{{b}})\setminus V(H')$ to $w$. This results to a graph $R$ that is a contraction of $G'$ (via $\rho$).
Notice that $R\cong \Gamma_{ {z}}$, where the model of its loaded vertex $w$ contains $N_{G'}(S)$, and $\tw(R)\geq  {z}$.
	
Consider now the set
$V_{w}:=\left\{v\in V(G')\mid v~\mbox{is in the model of}~ w\right\}$, and observe that $G'[V_{w}]$ is a connected graph.
Since $\tw(R)\geq  {z} $, then $\tw(G'\setminus V_{w})\geq {z}-1>\conref{sdafsdfsd}\cdot r $.
By applying the algorithm {\bf Find\_Wall}$(G,q)$ of \autoref{something_good} for $G'\setminus V_{w}$ and $r$, we get a $r$-wall $W'$ of $G'\setminus V_{w}$ whose  compass  has treewidth at most $\conref{sdafsdfsd} \cdot r=\funref{jsdfjbnvjfdak}(k,q)$.
Notice that, since $G'[V_{w}]$ is connected and $G'$ is planar, then $N_{G'}(S)$ (being a subset of $V_w$) is entirely contained in a unique face of $W'$ (recall that since we fixed an embedding $\theta$ of $G'$, we can treat the vertices of $G'$ as points on the plane).
Therefore, since $W'$ has height $r = 2\cdot (2m +q)+1$, there exists a subwall $\tilde{W}$ of $W'$ of height $2m + q$ that is a wall of $G'$ and $N_{G'}(S)\cap V(\comp(\tilde{W}))=\emptyset$.
Claim 1 follows.\hfill$\diamond$\medskip
	
	By Claim 1, there exists a wall $\tilde{W}$ of $G'$ of height $2m + q$ such that  $N_{G'}(S)\cap V(\comp(\tilde{W}))=\emptyset$.
	Therefore, by restricting the embedding $\theta$ of $G\setminus S$ in $\comp(\tilde{W})$, we get that $\comp(\tilde{W})$ is a planar graph.
	Let $W$ be the central $q$-subwall of $\tilde{W}$.
	We now argue that the following holds:\medskip
	
	\noindent{\em Claim 2:} The set $V(\comp(W))$ is $\boxtimes$-planarization irrelevant.\medskip
	
	\noindent{\em Proof of Claim 2:}
	Suppose, towards a contradiction, that there is a set $Z\subseteq\boxtimes\langle G,V(G)\rangle$ such that $Z$ is an inclusion-minimal $\boxtimes$-planarizer and $A(Z)\cap V(\comp(W))\neq \emptyset$.
	
Since $\tilde{W}$ is a wall of height $2m+ q$, it has at least $m$ layers.
For every $i\in[m]$, let $C_{i}$ be the $i$-th layer of $\tilde{W}$.
For every $i\in[m-2]$, let $A_{i}$ be the finite region of $\Bbb{R}^{2}$ bounded by $φ(C_{i})$ and $φ(C_{i+2})$ 	(the wall $\tilde{W}$ is a subdivision of a 3-connected graph and therefore all its embeddings in the plane are equivalent) and let $\Bbb{A}_{i}:=(A_{i}, C_{i+2}, C_{i})$.
	
	Since $|A(Z)|\leq 2k$ and $m = 3\cdot (2k+1)$, then there exists an $i_{Z}\in [m-2]$ 
	and a subgraph $Y$ of $\tilde{W}$ such that $A(Z)\cap A_{i_{Z}}=\emptyset$ and $Y$ is a $3$-wall-annulus whose extremal cycles are $C_{i_{Z}}$, $C_{i_{Z}+2}$.
	For simplicity, we denote $A:=A_{i_{Z}}$ and $\Bbb{A}:=\Bbb{A}_{i_{Z}}$.

Let $K$  be the maximal connected subgraph of $G$ such that $G\cap A=K$.
We denote by $G_{\sf in}$ the graph
$G[(V(G)\cap \comp_{\sf out}(\Bbb{A}))\cup V(C_{i_{Z}+2})]$ and
with $G_{\sf out}$ the graph $G\setminus (V(G) \cap \comp_{\sf in}(\Bbb{A}))$ 
and consider the annulus-boundaried graphs $(G_{\sf in}, K, Y, \Bbb{A})$ and 
$(G_{\sf out}, K, Y, {\bf rev}(\Bbb{A}))$.
Notice that $(K, Y, \Bbb{A})$ is an annulus-embedded separator of $G$.
Also, since $S$ is a ${\sf vd}$-planarizer of $G$ and
$N_{G'}(S)\cap V(\comp(\tilde{W}))=\emptyset$, then $G_{\sf in}$
is planar (since $G_{\sf in}$ is a subgraph of $G'$ and $G'$ is planar).

Notice that since $A(Z)\cap A=\emptyset$, $\comp(\tilde{W})$ is planar and 
$Y$ is a $3$-wall-annulus of $G'$ whose extremal cycles are the boundaries 
of $A$, then
there is no $x\in Z$ that affects vertices of $G$ in both connected components 
of $\Bbb{R}^{2}\setminus A$.
In other words, $Z$ is partitioned in two sets $Z_{\sf in}$ and $Z_{\sf out}$, 
where $A(Z_{\sf in})$ is in $\comp_{\sf in}(\Bbb{A})$ and  $A(Z_{\sf out})$ is in 
$\Bbb{R}^{2}\setminus \comp_{\sf out}(\Bbb{A})$.
Now, observe that since $\comp(W)$ is a graph embedded in a subset of
$\comp_{\sf in}(\Bbb{A})$ and $A(Z)\cap V(\comp(W))\neq \emptyset$, then 
$Z_{\sf in}\neq \emptyset$.
Thus $Z_{\sf out}$ is a proper subset of $Z$.
Also, the fact that $Z$ is a
$\boxtimes$-planarizer of $G$, implies that $Z_{\sf out}$
is a $\boxtimes$-planarizer of $G_{\sf out}$.
Hence, $G_{\sf out}\boxtimes Z_{\sf out}$ is planar.
Moreover, $(K,Y, \Bbb{A})$ is an annulus-embedded separator of $G\boxtimes Z_{\sf out}$.
	
	Therefore, since $(K,Y, \Bbb{A})$ is an annulus-embedded separator of $G\boxtimes Z_{\sf out}$ and $G_{\sf in}$ and $G_{\sf out}\boxtimes Z_{\sf out}$ are planar graphs, by \autoref{dnsfkanklda} we have that $G\boxtimes Z_{\sf out}$ is a planar graph, a contradiction to the minimality of $Z$. Claim 2 follows.\hfill$\diamond$\medskip
	
Following Claim 2, $W$ is a $q$-wall of $G$ 
whose compass has treewidth at most $\funref{jsdfjbnvjfdak}(k,q)$ and  $V(\comp(W))$ is $\boxtimes$-planarization irrelevant.
Keep in mind that $\comp(W)$ is a planar graph, since it is a subgraph of the planar graph $\comp(\tilde{W})$.
Now, let $J$ be the connected component of $\Bbb{R}^{2}\setminus \perim(W)$ that contains $W\setminus\perim(W)$. Observe that $\Delta:=\perim(W)\cup J$ is a closed disk and therefore $G$ is partially $\Delta$-embedded, where $G\cap \Delta=\comp(W)$.
Therefore, the algorithm {\bf Find\_Area}$(k,q,G,S)$ of \autoref{sdmgflsadmgfrlka} returns $W$ and $\Delta$ and this completes the proof of the lemma.
\end{proof}

\section{Proof of \autoref{wedndlsakfnlkdsafnkalen}}\label{dksdlgdsgjkloew}

In this section we present the proof of \autoref{wedndlsakfnlkdsafnkalen}, that is the main technical result of this paper.
In~\autoref{subsec6_1}, we define the notion of characteristic of the panelled compass of a wall,
that encodes 
all possible ways that a $\boxtimes$-planarizer $S$ of $G$ affects $\comp(W)$ along with the different ways a vertex assignment to the basic variables of the Gaifman formula $φ$ in $\comp(W)$ can certify $G\boxtimes S\modelsφ$.
In~\autoref{subsec6_2} we describe the algorithm {\tt Find\_Vertex} of~\autoref{wedndlsakfnlkdsafnkalen} and in~\autoref{subsec6_3} we prove its correctness.
Also, throughout this section, we use $\funref{jsdfjbnvjfdak}$ to denote the function in \autoref{sdmgflsadmgfrlka}, bounding the treewidth of the compass of the wall that the claimed algorithm outputs.

\subsection{Characteristic of the panelled compass of a wall}\label{subsec6_1}

\paragraph{Panelled compass of a wall.}
Let $\rho\in\mathbb{N}_{\geq 1}$,
let  $G$ be a partially $\Delta$-embedded graph, let $W$ be a  $(2\rho+1)$-wall of $G$ such that  $\comp(W)\subseteq \Delta$.
We set $K=\comp(W)$ and, for every $t\in [\rho]$, we set $K^{(t)}= \comp(W^{(2t+1)})$ and $P^{(t)} = V(\perim(W^{(2t+1)}))$.
Let ${\bf K} = (V(K^{(1)}),\ldots, V(K^{(\rho)}))$. We call the tuple $\mathfrak{K}_W = (K,{\bf K})$ the {\em panelled compass} of the wall $W$ in $G$.

\paragraph{Characteristics.}
Let $φ$ be a Gaifman sentence.
By definition, $φ$ is a Boolean combination of sentences $φ_{1}, \ldots, φ_{m}$ such that, for every $h\in[m]$, 		\[φ_{h}=\exists x_{1}\ldots\exists x_{\ell_{h}}\big( \bigwedge_{1\leq i<j\leq \ell_{h}} d(x_{i}, x_{j})> 2r_{h}\wedge \bigwedge_{i\in [\ell_{h}]}\psi_{h}(x_{i})\big),\]
where $\ell_{h},r_{h}\geq 1$ and $\psi_{h}(x)$ is $r_{h}$-local.
We consider the sentence $\tilde{φ}$ and recall that it is the same Boolean combination 
of sentences $\tilde{φ}_{1}, \ldots, \tilde{φ}_{m}$ such that, for every $h\in[m]$, 	
\begin{eqnarray*}
\tilde{φ}_{h}=\exists x_{1}\ldots\exists x_{\ell_{h}}\big(\bigwedge_{i\in [\ell_{h}]}x_{i}\in R\wedge \bigwedge_{1\leq i<j\leq \ell_{h}} d(x_{i}, x_{j})> 2r_{h}\wedge \bigwedge_{i\in [\ell_{h}]}\psi_{h}(x_{i})\big),\label{gaidsfd}
\end{eqnarray*}
and the formulas $\tilde{φ}$ and $\tilde{φ}_1, \ldots, \tilde{φ}_m$ are evaluated on annotated graphs of the form $(G,R)$.

We set $r:=\max_{h\in[m]}\{r_{h}\}$ and $\ell:=\sum_{h\in[m]}\ell_{h}$ and
\begin{align*}	
	d & :=2\left(r + (\ell+1)r+r\right),\\
	\rho & := (2k+1)\cdot d.
\end{align*}
Let
$${\sf SIG}= 2^{[\ell_1]}\times\cdots\times 2^{[\ell_{m}]}\times[\rho].$$

Let $\boxtimes\in{\sf OP}$.
Let $G$ be a partially $\Delta$-embedded graph, let $W$ be a  $(2\rho+1)$-wall of $G$ such that  $\comp(W)\subseteq \Delta$.
Given the panelled compass $\mathfrak{K}_W$ of $W$ in $G$, a set $R\subseteq V(\comp(W))$, an integer $z\in [d,\rho]$, and a set $S\subseteq \boxtimes\langle K,R\rangle$ such that $A(S)\subseteq V(K^{(z-d+1)})\cap R$, we define

\begin{eqnarray*}
{{\sf sig}}_{φ,\boxtimes}(\mathfrak{K}_W, R, z,S) & =&  \{(Y_{1},\ldots,Y_{m},t)\in {\sf SIG}  \mid  t\leq z \mbox{~and~} \exists \ (\tilde{X}_{1},\ldots,\tilde{X}_m)\mbox{~such that~} \forall h\in[m]\  \\
& &~~~~~~~~~~~~~~~~~~~~~~~~~~~~~~~~~~~~~~~~ \tilde{X}_{h}=\{x_{i}^{h}\mid i\in Y_{h}\},\\
& &~~~~~~~~~~~~~~~~~~~~~~~~~~~~~~~~~~~~~~~~\tilde{X}_{h}\subseteq V((K^{(t-r+1)}\boxtimes S)\setminus P^{(t-r+1)})\cap R,\\
 & &~~~~~~~~~~~~~~~~~~~~~~~~~~~~~~~~~~~~~~~~\tilde{X}_h \text{~is~} (|Y_{h}|, r_{h})\text{-scattered in }K^{(t)}\boxtimes S,\text{ and }\\
& &~~~~~~~~~~~~~~~~~~~~~~~~~~~~~~~~~~~~~~~~K^{(t)}\boxtimes S\models \bigwedge_{x\in \tilde{X}_h}\psi_{h}(x)\}.\\
\end{eqnarray*}

Notice that $(Y_{1},\ldots,Y_{m},t)\in {{\sf sig}}_{φ,\boxtimes}(\mathfrak{K}_W, R, z,S)$ only if for every $h\in [m]$, $\tilde{X}_{h}\subseteq V(K^{(z-r+1)}\boxtimes S)$ (since, otherwise, $K^{(z)}\boxtimes S$ can not be a model of $\bigwedge_{x\in \tilde{X}_h}\psi_{h}(x)$).
Recall that $\rho =(2k+1)\cdot d$.
We also define the {\em $(φ,\boxtimes)$-characteristic} of $(\mathfrak{K}_W,R)$
as follows
\begin{eqnarray*}
\text{\sf  $(φ,\boxtimes)$-char}(\mathfrak{K}_W,R) & =& \{(z,{\sf sig},s)\in  [d,\rho]\times 2^{{\sf SIG}}\times [0,k]\mid \exists S\subseteq\boxtimes\langle K,R\rangle\mbox{~such that},\\
& &\hspace{7cm} A(S)\subseteq V(K^{(z-d+1)})\cap R,\\ 
& &\hspace{7cm}  |S|=s, K\boxtimes S \text{~is planar, and}   \\
& &\hspace{7cm}  {{\sf sig}}_{φ,\boxtimes}(\mathfrak{K},R,z,S)={\sf sig}\}.
\end{eqnarray*}

Notice that all queries in the definition of 
$(φ,\boxtimes)\text{\sf -char}(\mathfrak{K}_W, R)$ can be expressed in MSOL.
Indeed, this is easy to see when 
$\boxtimes\in \{{\sf vd}, {\sf ed}, {\sf ec}\}$, as in this case
the query ``$K\boxtimes S$ is planar'' is trivially true, since 
$V(\comp(\tilde{W}))$ is $\boxtimes$-planarization irrelevant.
In the case where $\boxtimes={\sf ea}$, MSOL expressibility follows from \autoref{kdsafksdfkalsdfkla}.

\subsection{An algorithm for finding irrelevant vertices}\label{subsec6_2}
In this subsection, we present the algorithm {\tt Find\_Vertex} of~\autoref{wedndlsakfnlkdsafnkalen}.
Throughout the rest of this section we assume that we are given a Gaifman sentence $φ$ and a $\boxtimes\in{\sf OP}$.

\paragraph{The algorithm {\bf Find\_Vertex}.}
The algorithm {\bf Find\_Vertex} receives as an input a $k\in \Bbb{N}$, a partially $\Delta$-embedded graph $G$, a set of (annotated) vertices $R\subseteq V(G)$, and a $q$-wall $\tilde{W}$ of $G$ such that 
\begin{itemize}
	\item $q = \funref{sngklargnklrangl}(k,|φ|)$,
	\item the compass of $\tilde{W}$ has treewidth at most $\funref{jsdfjbnvjfdak}(k,q)$ (where $\funref{jsdfjbnvjfdak}$ is the function of~\autoref{sdmgflsadmgfrlka}),
	\item $G\cap \Delta=\comp(\tilde{W})$, $\bd(\Delta)=\perim(\tilde{W})$,
	\item $V(\comp(\tilde{W}))$ is $\boxtimes$-planarization irrelevant, and 
\end{itemize}
The algorithm has four steps.
First, recall that any given Gaifman sentence $φ$ is a Boolean combination of sentences $φ_{1}, \ldots, φ_{m}$ such that, for every $h\in[m]$,
\[φ_{h}=\exists x_{1}\ldots\exists x_{\ell_{h}}\big( \bigwedge_{1\leq i<j\leq \ell_{h}} d(x_{i}, x_{j})> 2r_{h}\wedge \bigwedge_{i\in [\ell_{h}]}\psi_{h}(x_{i})\big),\]
where $\ell_{h},r_{h}\geq 1$ and $\psi_{h}(x)$ is $r_{h}$-local.
We consider the sentence $\tilde{φ}$ and recall that it is the same Boolean combination 
of sentences $\tilde{φ}_{1}, \ldots, \tilde{φ}_{m}$ such that, for every $h\in[m]$, 	
\begin{eqnarray*}
\tilde{φ}_{h}=\exists x_{1}\ldots\exists x_{\ell_{h}}\big(\bigwedge_{i\in [\ell_{h}]}x_{i}\in R\wedge \bigwedge_{1\leq i<j\leq \ell_{h}} d(x_{i}, x_{j})> 2r_{h}\wedge \bigwedge_{i\in [\ell_{h}]}\psi_{h}(x_{i})\big),\label{gaidsfd}
\end{eqnarray*}
and the formulas $\tilde{φ}$ and $\tilde{φ}_1, \ldots, \tilde{φ}_m$ are evaluated on annotated graphs of the form $(G,R)$.

We set $r:=\max_{h\in[m]}\{r_{h}\}$, $\ell:=\sum_{h\in[m]}\ell_{h}$,
\begin{align*}
	d & :=2\left(r + (\ell+1)r+r\right),\\
	\rho & := (2k+1)\cdot d,\\
	w & :=2^{\rho\cdot (k+1) \cdot 2^{2^\ell \cdot \rho}}\cdot (2k+1)(\ell +3),\text{ and}\\
	\funref{sngklargnklrangl}(k,|φ|) & := \lceil (2\rho+1)\cdot \sqrt{w}\rceil.
\end{align*}

\paragraph{Step 1.}
We first find a collection ${\cal W}$  of $w$-many $(2\rho+1)$-subwalls of $\tilde{W}$ whose compasses are pairwise disjoint. This collection exists because $\tilde{W}$ is a $q$-wall, where $q=  \funref{sngklargnklrangl}(k,|φ|) =\lceil (2\rho+1)\cdot \sqrt{w}\rceil$. Observe that ${\cal W}$ can be computed in linear time.

\paragraph{Step 2.}
We check whether there is a wall  $W\in {\cal W}$ such that $ V(\comp(W))\cap R=\emptyset$.
If there is such a wall $W$, we set $X:=V(\comp(W^{(\rho-1)}))$ and $v$ to be a vertex in $\cen(W)$ and our algorithm returns the vertex set $X$ and the vertex $v$.
If $V(\comp(W))\cap R\neq\emptyset$
for every $W\in {\cal W}$, we continue to Step 3.

At this point, we wish to argue about the correctness of  {\bf Step 2}.
First, note that for every $u\notin V(\comp(W^{(\rho -1)}))$ we have that $d(u,v)\geq \rho-1$.
This holds since
 $\comp(W^{(\rho -1)})$ is a planar graph
 and there exist at least $\rho-1$ layers of $W$ separating a vertex $u\notin V(\comp(W^{(\rho -1)}))$ and $v$.
Thus, given that for every $u\notin V(\comp(W^{(\rho -1)}))$ it holds that $d(u,v)\geq \rho-1>r$ and
for every $h\in[m]$, the formula $\psi_{h}(x)$ is $r_{h}$-local,
we derive that $(G,R,k)$ is a $(φ,\boxtimes)$-triple if and only if $(G\setminus v,R\setminus X, k)$ is a $(φ,\boxtimes)$-triple.
Therefore, our algorithm can safely return the vertex set $X$ and the vertex $v$.

\paragraph{Step 3.}
For every $i\in [w]$, we set $R_i= R\cap V(\comp(W_i))$ and $(\mathfrak{K}_{i}, R_i)$ be the panelled compass of $W_i$ in $G$, where $\mathfrak{K}_i: = \mathfrak{K}_{W_i}$, $K_i : = \comp(W_i)$, and  for every $j\in [\rho]$, $K_i^{(j)}: = \comp(W_i^{(2j+1)})$.
Also, for every $j\in [\rho]$, $K_i^{(j)}: = \comp(W_i^{(2j+1)})$ , we set $P_i^{(j)}: = V(\perim(W_i^{(2j+1)}))$.
Then, for every $i\in [w]$, we compute $(φ,\boxtimes)\text{\sf -char}(\mathfrak{K}_i, R_i)$. 
As all queries in the definition of $(φ,\boxtimes)\text{\sf -char}(\mathfrak{K}_W, R)$ can be expressed in MSOL and, by the hypothesis of the lemma, the compass of each $W\in {\cal W}$ has treewidth at most $\funref{jsdfjbnvjfdak}(k,q)$, it follows by the 
theorem of Courcelle that $(φ,\boxtimes)\text{\sf -char}(\mathfrak{K}_i, R_i), i\in [w]$ can be computed 
in ${\cal O}_{k,|φ|}(n)$ time.
We say that two walls $W_{i},W_{j}\in{\cal W}$ are {\em $(φ,\boxtimes)$-equivalent}  if $(φ,\boxtimes)\text{\sf -char}(\mathfrak{K}_{i}, R_i)= (φ,\boxtimes)\text{\sf -char}(\mathfrak{K}_{j}, R_j)$,  and we denote this 
by $W_{i}\sim_{φ,\boxtimes}W_{j}$.

\paragraph{Step 4.}
We find a collection ${\cal W}'\subseteq {\cal W}$ of  $(2k+1)(\ell+3)$ walls that are pairwise $(φ,\boxtimes)$-equivalent.
This can be done since  $w=2^{\rho\cdot (k+1) \cdot 2^{2^\ell \cdot \rho}}\cdot (2k+1)(\ell +3)$ and for every $i\in[w]$, $(φ,\boxtimes)\text{\sf -char}(\mathfrak{K}_{i}, R_i)\subseteq [d+1, \rho]\times 2^{\sf SIG}\times [0,k]$.
Observe that ${\cal W}'$ can be computed in time ${\cal O}_{k,|φ|}(n)$.
We fix a wall $W_{1}\in {\cal W'}$, and set $X:=V(K_{1}^{(r)})$.
Our algorithm returns $X$ and a vertex $v\in\cen(W_{1}^{(r)})$.

\subsection{Proof of correctness of the algorithm}\label{subsec6_3}
To complete the proof of \autoref{wedndlsakfnlkdsafnkalen}, we have to prove that $(G,R,k)$ is a $(φ,\boxtimes)$-triple if and only if $(G\setminus v,R\setminus X, k)$ is a $(φ,\boxtimes)$-triple.

Let $R^{\prime}:=R\setminus X$.
We now prove that the following holds:

\medskip
	
\noindent{\em Claim:} 
If  $S$ is a subset of $\boxtimes\langle G,R\rangle$, where $|S|=k$ and $G\boxtimes S$ is a planar graph, then there exists a set $S'\subseteq \boxtimes\langle G,R'\rangle$ such that 
\begin{itemize}
\item $|S|=|S'|$,
\item $G\boxtimes S'$ is a planar graph, and
\item  $(G\boxtimes S,R)\models \tilde{φ}$ if and only if $(G\boxtimes S^{\prime},R')\models \tilde{φ}$.
\end{itemize}

\medskip

\noindent{\em Proof of Claim:}
Let $S$ be a subset of $\boxtimes\langle G,R\rangle$, where $|S|= k$ and $G\boxtimes S$ is a planar graph.

\paragraph{Finding an equivalent panelled compass that is disjoint from $S$.}
Since the collection ${\cal W}'$ of walls that are $(φ,\boxtimes)$-equivalent with $W_{1}$ has size $(2k+1)(\ell +3)$ and $|A(S)|\leq 2k$, there exists a collection ${\cal W}''\subseteq {\cal W}'\setminus\{W_{1}\}$ of size $(\ell + 2)$, such that for every $\hat{W}\in {\cal W}''$, it holds that $\hat{W}\sim_{φ, \boxtimes} W_{1}$ and $V(\comp(\hat{W}))\cap A(S)=\emptyset$.
Let $W_{2}\in {\cal W}''$.

\paragraph{Every solution $S$ leaves an intact buffer in $W_1$.}
Since $W_1$ has height $2\rho+1$, where $\rho=(2k+1)\cdot d$,
observe that there is a collection of $2k+1$ closed annuli $\{\ann({\cal A}_{i\cdot d}^{(d)}(W_{1}))\mid i\in [2k+1]\}$ that are pairwise disjoint and keep in mind that  each $\ann({\cal A}_{i\cdot d}^{(d)}(W_{1}))$ is a closed annulus that is a subset of $\Delta$ and, intuitively, ``crops'' an area of $d$ consecutive layers of $W_1$.
Therefore, the fact that $|A(S)|\leq 2k$ implies that there exists an $i\in [2k+1]$ such that $A(S)$ does not intersect $\ann({\cal A}_{i\cdot d}^{(d)}(W_{1}))$.
Notice that, since $G\boxtimes S$ is planar and $d\geq 3$, $S$ is partitioned into the sets $S_{\sf in}$ and $S_{\sf out}$, where $A(S_{\sf in})\subseteq V(K_{1}^{(i\cdot d-d+1)})\cap R$ and $A(S_{\sf out})\cap V(K_{1}^{(i\cdot d)})=\emptyset$. We set $z := i\cdot d$.

\paragraph{Finding a substitute for $S_{\sf in}$ in the compass of $W_2$.}
Since $S_{\sf in}\subseteq \boxtimes\langle K_1,R_1\rangle$, $A(S_{\sf in})\subseteq V(K_{1}^{(z-d+1)})\cap R = V(K_{1}^{(z-d+1)})\cap R_1$, and $K_{1}\boxtimes S$ is planar, the fact that
$W_{1}\sim_{φ, \boxtimes} W_{2}$
 implies that there exists a set $\tilde{S}\subseteq \boxtimes\langle K_2,R_2\rangle$,
such that $|\tilde{S}|=|S_{\sf in}|$, $A(\tilde{S})\subseteq V(K_{2}^{(z-d+1)})\cap R_2$,  
$K_{2}\boxtimes \tilde{S}$ is planar, and
${{\sf sig}}_{φ,\boxtimes}(\mathfrak{K}_2, R_2, z,\tilde{S})={{\sf sig}}_{φ,\boxtimes}(\mathfrak{K}_1, R_1, z,S_{\sf in})$.
We set
\[S^{\prime}:=\tilde{S}\cup S_{\sf out}.\]

\paragraph{Planarity is preserved by replacing $S$ with $S'$.}
Notice that $S^{\prime}\subseteq \boxtimes\langle G,R'\rangle$, $|S^{\prime}|=k$, and $G\boxtimes S^{\prime}$ is planar. As a proof of the latter, in the case where $\boxtimes={\sf vd}, {\sf ed}$, or  ${\sf ec}$, since $V(\comp(\tilde{W}))$ is $\boxtimes$-planarization irrelevant, every inclusion-minimal $\boxtimes$-planarizer of $G$ is a subset of $S_{\sf out}$. Also, in the case where $\boxtimes={\sf ea}$, $G\boxtimes S^{\prime}$ is planar since $(G\setminus V(K_{2}^{(z-d+1)}))\boxtimes S_{\sf out}$ and  $K_{2}^{(z)}\boxtimes \tilde{S}$ are planar and $d\geq 3$ (due to  \autoref{dnsfkanklda} presented in \autoref{dnksldngklfdnglkad}).
Therefore, our goal now is to prove that $(G\boxtimes S, R)\models \tilde{φ}$ if and only if $(G\boxtimes S^{\prime},R')\models \tilde{φ}$.

\paragraph{Satisfiability of $\tilde{φ}$ is preserved by replacing $S$ with $S'$.}
Since $(G\boxtimes S,R)\models  \tilde{φ}$ and $ \tilde{φ}$ is a Boolean combination of the formulas $ \tilde{φ}_{1}, \ldots,  \tilde{φ}_{m}$, there is a set $J\subseteq [m]$ such that for every $j\in J$ it holds $(G\boxtimes S,R)\models  \tilde{φ}_{j}$ and for every $j\notin J$ it holds that $(G\boxtimes S,R)\models \neg  \tilde{φ}_{j}$.
In order to show that $(G\boxtimes S^{\prime},R')\models \tilde{φ}$, we show that  for every $j\in J$ it holds $(G\boxtimes S',R')\models  \tilde{φ}_{j}$ and for every $j\notin J$ it holds that $(G\boxtimes S',R')\models \neg \tilde{φ}_{j}$. Therefore, we distinguish two cases.
\bigskip

\noindent{\bf Case 1:} $j\in J$.
We aim to prove that $(G\boxtimes S,R)\models  \tilde{φ}_{j} \iff (G\boxtimes S',R')\models \tilde{φ}_{j}$.
In other words, we will prove that there exists an $(\ell_{j}, r_{j})$-scattered set $X_{j}\subseteq R$ in $G\boxtimes S$ such that $G\boxtimes S\models \bigwedge_{x\in X_{j}}\psi_{j}(x)$ if and only if there is an $(\ell_{j}, r_{j})$-scattered set $X\subseteq R^{\prime}$ in $G\boxtimes S^{\prime}$ such that $G\boxtimes S^{\prime}\models \bigwedge_{x\in X}\psi_{j}(x)$.
Let $X_{j}\subseteq R$ be an $(\ell_{j}, r_{j})$-scattered set in $G\boxtimes S$ such that $G\boxtimes S\models \bigwedge_{x\in X_{j}}\psi_{j}(x)$.
Recall that $S^{\prime}:=\tilde{S}\cup S_{\sf out}$, where $A(\tilde{S})\subseteq V(K_{2}^{(z-d+1)})\cap R_2$ and $A(S_{\sf out})\cap V(K_{2})=\emptyset$.
We prove the following, which intuitively states that, given the set $X_{j}$, we can find an other set $X_j '$ that ``behaves'' in the same way as $X_j$ but also ``avoids'' some inner part of $K_2$.
\medskip

\noindent{{\em Subclaim:}} There exists a $t\in [z-\frac{d}{2}+2r+1, z - r]$ and an  $(\ell_{j}, r_{j})$-scattered set ${X}^{\prime}_{j}\subseteq R$ in $G\boxtimes S$ such that $G\boxtimes S\models \bigwedge_{x\in X_{j}}\psi_{j}(x) \iff G\boxtimes S\models \bigwedge_{x\in {X}^{\prime}_{j}}\psi_{j}(x)$ and ${X}^{\prime}_{j}\cap V(K_{2}^{(t)}) = \emptyset$.

\medskip

\noindent{\em Proof of Subclaim:}	
Recall that there is a collection ${\cal W}''$ of size $(\ell +2)$ of walls $(φ, \boxtimes)$-equivalent to $W_{1}$ whose compasses are disjoint from $A(S)$.
Therefore, since $X_j$ has size at most $\ell$, there exists a wall $W_{3}\in {\cal W}''\setminus\{ W_{2}\}$ such that $V(K_{3})\cap (A(S)\cup  X_{j})=\emptyset$.

We now focus on the closed annulus $\ann({\cal A}_{z}^{(d)}(W_2))$, which, since $A(S)\cap V(K_2) = \emptyset$, does not intersect $A(S)$.
We have that
$d=2(r+(\ell+1)r+r)$ and $|X_{j}|\leq  \ell$ and therefore there exists a $t \in [z-\frac{d}{2}+2r+1, z- r]$ (see \autoref{fsjnaklfnsaklfn})
such that $X_{j}$ does not intersect $\ann({\cal A}_t^{(r)}(W_{2}))$.
Intuitively, we separate the $d$ layers of $W_2$ that are in $\ann({\cal A}_{z}^{(d)}(W_2))$ into two parts, the first $d/2$ layers and the second $d/2$ layers, and then we find some layer among the ``central'' $(\ell +1)r$ layers of the second part ($t$ corresponds to a layer in the yellow area of \autoref{fsjnaklfnsaklfn}). This layer (corresponding to $t$) together with its preceding $r-1$ layers define an annulus of size $r$, $\ann({\cal A}_t^{(r)}(W_{2}))$, which  $X_{j}$ ``avoids''.
Since $\ann({\cal A}_t^{(r)}(W_{2}))\subseteq \ann({\cal A}_{z}^{(d)}(W_2))$ and $A(S)\cap\ \ann({\cal A}_{z}^{(d)}(W_2)) = \emptyset$, it also holds that $A(S)\cap \ann({\cal A}_t^{(r)}(W_{2}))=\emptyset$).

\begin{figure}[ht]
	\centering
\resizebox{0.7\textwidth}{!}{
\begin{tikzpicture}[ipe stylesheet]
  \fill[orange, ipe opacity 30]
    (40, 600) rectangle (500, 544);
  \fill[gold, ipe opacity 30]
    (40, 724) rectangle (500, 668);
  \fill[purple, ipe opacity 30]
    (40, 756) rectangle (500, 728);
  \draw
    (40, 568)
     -- (500, 568);
  \draw
    (500, 564)
     -- (40, 564);
  \draw
    (40, 560)
     -- (500, 560);
  \draw
    (500, 576)
     -- (39.999, 576.137);
  \draw
    (40.002, 580.137)
     -- (500, 580);
  \draw
    (500, 584)
     -- (40.005, 584.137);
  \draw
    (40, 600)
     -- (500, 600);
  \draw
    (40, 596)
     -- (500, 596);
  \draw
    (500, 592)
     -- (40, 592);
  \draw
    (40, 588)
     -- (500, 588);
  \draw
    (500, 608)
     -- (39.999, 608.137);
  \draw
    (40.002, 612.137)
     -- (500, 612);
  \draw
    (500, 616)
     -- (40.005, 616.137);
  \draw[ipe pen fat]
    (40, 632)
     -- (500, 632);
  \draw
    (40, 628)
     -- (500, 628);
  \draw
    (500, 624)
     -- (40, 624);
  \draw
    (40, 620)
     -- (500, 620);
  \draw
    (500, 640)
     -- (39.999, 640.137);
  \draw
    (40.002, 644.137)
     -- (500, 644);
  \draw
    (500, 648)
     -- (40.005, 648.137);
  \draw
    (40, 664)
     -- (500, 664);
  \draw
    (40, 660)
     -- (500, 660);
  \draw
    (500, 656)
     -- (40, 656);
  \draw
    (40, 652)
     -- (500, 652);
  \draw
    (500, 672)
     -- (39.999, 672.137);
  \draw
    (40.002, 676.137)
     -- (500, 676);
  \draw
    (500, 680)
     -- (40.005, 680.137);
  \draw[ipe pen fat]
    (40, 696)
     -- (500, 696);
  \draw
    (40, 692)
     -- (500, 692);
  \draw
    (500, 688)
     -- (40, 688);
  \draw
    (40, 684)
     -- (500, 684);
  \draw
    (500, 700)
     -- (39.999, 700.137);
  \draw
    (40.002, 704.137)
     -- (500, 704);
  \draw
    (500, 708)
     -- (40.005, 708.137);
  \draw
    (40, 728)
     -- (500, 728);
  \draw
    (40, 720)
     -- (500, 720);
  \draw
    (500, 716)
     -- (40, 716);
  \draw
    (40, 712)
     -- (500, 712);
  \draw
    (500, 732)
     -- (39.999, 732.137);
  \draw
    (40.002, 736.137)
     -- (500, 736);
  \draw
    (500, 740)
     -- (40.005, 740.137);
  \draw[ipe pen fat]
    (40, 756)
     -- (500, 756);
  \draw
    (40, 752)
     -- (500, 752);
  \draw
    (500, 748)
     -- (40, 748);
  \draw
    (40, 744)
     -- (500, 744);
  \draw[ipe pen fat]
    (40, 544)
     -- (500, 544);
  \draw
    (40, 536)
     -- (500, 536);
  \draw
    (500, 532)
     -- (40, 532);
  \draw
    (40, 528)
     -- (500, 528);
  \draw
    (500, 548)
     -- (39.999, 548.137);
  \draw
    (40.002, 552.137)
     -- (500, 552);
  \draw
    (500, 556)
     -- (40.005, 556.137);
  \draw[ipe pen fat]
    (40, 512)
     -- (500, 512);
  \draw
    (500, 516)
     -- (39.999, 516.137);
  \draw
    (40.002, 520.137)
     -- (500, 520);
  \draw
    (500, 524)
     -- (40.005, 524.137);
  \draw[ipe pen fat]
    (40, 600)
     -- (500, 600);
  \draw
    (40, 596)
     -- (500, 596);
  \draw
    (500, 592)
     -- (40, 592);
  \draw
    (40, 588)
     -- (500, 588);
  \draw
    (500, 608)
     -- (39.999, 608.137);
  \draw
    (40.002, 612.137)
     -- (500, 612);
  \draw
    (500, 616)
     -- (40.005, 616.137);
  \node[ipe node, font=\LARGE]
     at (512, 752) {$P_2^{(z)}$};
  \node[ipe node, font=\LARGE]
     at (512, 720) {$P_2^{(z-r)}$};
  \node[ipe node, font=\LARGE]
     at (512, 692) {$P_2^{(t)}$};
  \node[ipe node, font=\LARGE]
     at (512, 664) {$P_2^{(z-\frac{d}{2}+r+1)}$};
  \node[ipe node, font=\LARGE]
     at (512, 628) {$P_2^{(z-\frac{d}{2})}$};
  \node[ipe node, font=\LARGE]
     at (512, 600) {$P_2^{(z-\frac{d}{2}-r)}$};
  \node[ipe node, font=\LARGE]
     at (512, 568) {$P_2^{(t')}$};
  \node[ipe node, font=\LARGE]
     at (512, 540) {$P_2^{(z-d+r+1)}$};
  \node[ipe node, font=\LARGE]
     at (512, 508) {$P_2^{(z-d+1)}$};
  \fill[purple, ipe opacity 30]
    (40, 664) rectangle (500, 636);
  \fill[green, ipe opacity 30]
    (40, 632) rectangle (500, 604);
  \fill[green, ipe opacity 30]
    (40, 540) rectangle (500, 512);
  \draw[ipe pen fat]
    (40, 724)
     -- (500, 724);
  \draw[ipe pen fat]
    (40, 668)
     -- (500, 668);
  \draw
    (40, 636)
     -- (500, 636);
  \draw
    (40, 604)
     -- (500, 604);
  \draw[ipe pen fat]
    (40, 572)
     -- (500, 572);
  \draw
    (40, 540)
     -- (500, 540);
\end{tikzpicture}
}
\caption{Visualization of the layers of $W_{2}$ that are subsets of $ \ann(\mathcal{A}_{z}^{(d)}(W_2))$.
For every $h\in[\rho]$, $P_{h}:=\perim(W^{(2h+1)})$.
The color-shadowed areas follow the colors in \autoref{ndajkgndskngdasklgnasd}.}
\label{fsjnaklfnsaklfn}
\end{figure}
We set $X_{j}^\star:=X_{j}\cap V(K_{2}^{(t -r+1)})$ and $Y_{j}\subseteq [\ell_j]$  to be the set of indices of the vertices in $X_{j}^{\star}$.
Notice that $X_{j}^\star\subseteq R\cap V(K_{2}^{(t -r+1)})\subseteq R_2$ and that, since $X_{j}$ does not intersect $\ann({\cal A}_t^{(r)}(W_{2}))$, also $X_{j}^\star$ does not intersect $P_2^{(t-r+1)}$ (that is an extremal cycle of $\ann({\cal A}_t^{(r)}(W_{2}))$).
Also, observe that, since
$A(S)\cap V(K_2)=\emptyset$, we have that $V(K_2)\subseteq V(G\boxtimes S)$ and $G\boxtimes S [V(K_2)]  = K_2$.
Therefore, since
$X_{j}^{\star}\subseteq V(K_{2}^{(t -r+1)}\setminus P_2^{(t-r+1)})$,
$\psi_j (x)$ is an $r_j$-local formula, 
and $r\geq r_j$, we have that $G\boxtimes S\models \bigwedge_{x\in X_{j}^\star}\psi_{j}(x)
\iff K_2^{(t)}\models \bigwedge_{x\in X_{j}^{\star}}\psi_{j}(x)$.
To sum up, we have that the set $X_{j}^{\star}$ is a subset of $V(K_{2}^{(t -r+1)}\setminus P_2^{(t-r+1)})\cap R_2$ that is $(|Y_j|,r_j)$-scattered in $K_2^{(t)}$ (being a subset of $X_j$)  and $K_2^{(t)}\models \bigwedge_{x\in X_{j}^{\star}}\psi_{j}(x)$.

Notice that, since $A(S)\cap V(K_2) = \emptyset$ and $W_{2}\sim_{φ,\boxtimes} W_{3}$, we have that ${{\sf  sig}}_{φ,\boxtimes}(\mathfrak{K}_2, R_2, t',\emptyset)={{\sf  sig}}_{φ,\boxtimes}(\mathfrak{K}_3, R_3, t',\emptyset)$, for every $t'\in [\rho]$. 
Therefore, we have that
${{\sf  sig}}_{φ,\boxtimes}(\mathfrak{K}_2, R_2, t,\emptyset)={{\sf  sig}}_{φ,\boxtimes}(\mathfrak{K}_3, R_3, t,\emptyset)$ and this implies that there is a set $\tilde{X_{j}}\subseteq V(K_{3}^{(t -r+1)}\setminus P_3^{(t-r+1)})\cap R_3$ such that
$\tilde{X_{j}}$ is $(|Y_{j}|, r_{j})$-scattered in $K_{3}^{(t)}$ and $K_2^{(t)}\models \bigwedge_{x\in X_{j}^{\star}}\psi_{j}(x)\iff K_3^{(t)}\models \bigwedge_{x\in \tilde{X}_{j}}\psi_{j}(x)$.
Observe that since $A(S)\cap V(K_{3})=\emptyset$ and $\tilde{X_{j}}\subseteq V(K_{3}^{(t-r+1)}\setminus P_3^{(t-r+1)})\subseteq V(K_3^{(\rho -r)})$, for every $x\in\tilde{X}_{j}$ it holds that $N_{G\boxtimes S}^{(\leq r)}(x)\cap A(S)=\emptyset$.
Thus, since every $\psi_{h}(x), h\in[m]$ is $r_{h}$-local, it follows that $K_3^{(t)}\models \bigwedge_{x\in \tilde{X}_{j}}\psi_{j}(x)\iff G\boxtimes S\models \bigwedge_{x\in \tilde{X}_{j}}\psi_{j}(x)$.

We now consider the set
\[{X}^{\prime}_{j}:=\left( X_{j}\setminus X_{j}^{\star}\right)\cup \tilde{X}_{j}.\]
Since $V(K_{3})\cap (X_{j} \cup A(S))=\emptyset$ and $r\geq r_j$, for every $x\in X_{j}$, and thus for every $x\in X_{j}\setminus X_{j}^{\star}$, it holds that $N_{G\boxtimes S}^{(\leq r_{j})}(x)\cap V(K_{3}^{(\rho-r+1)})=\emptyset$.
Also, since $t \leq \rho -r$ and $\tilde{X_{j}}\subseteq V(K_{3}^{(t -r+1)}\setminus P_3^{(t-r+1)})$, for every $x\in\tilde{X}_{j}$ it holds that $N_{G\boxtimes S}^{(\leq r_{j})}(x) \subseteq V(K_{3}^{(\rho-r+1)})$.
Thus, for every $x\in X_{j}\setminus X_{j}^{\star}$ and $x'\in \tilde{X}_{j}$ we have that
$N_{G\boxtimes S}^{(\leq r_{j})}(x)\cap N_{G\boxtimes S}^{(\leq r_{j})}(x')=\emptyset$.
The latter, together with the fact that the set $X_{j}\setminus X_{j}^{\star}$ is $(\ell_{j}-|Y_{j}|, r_{j})$-scattered in $G\boxtimes S$, $\tilde{X_{j}}$ is $(|Y_{j}|, r_{j})$-scattered in $K_{3}^{(t)}$, and $K_{3}^{(t)} = G\boxtimes S[V(K_{3}^{(t)})]$, implies that ${X}^{\prime}_{j}$ is an $(\ell_{j}, r_{j})$-scattered set in $G\boxtimes S$.
Moreover, by definition, we have that ${X}^{\prime}_{j}\subseteq R$ and ${X}^{\prime}_{j}$ does not intersect $V(K_{2}^{(t)})$, while we already argued why $G\boxtimes S\models \bigwedge_{x\in {X}_{j}}\psi_{j}(x) \iff G\boxtimes S\models \bigwedge_{x\in {X}^{\prime}_{j}}\psi_{j}(x)$. Subclaim follows.
\bigskip

Following the above subclaim, let a $t\in [z-\frac{d}{2}+2r+1, z - r]$ and an  $(\ell_{j}, r_{j})$-scattered set ${X}^{\prime}_{j}\subseteq R$ in $G\boxtimes S$ 
such that $G\boxtimes S\models \bigwedge_{x\in X_{j}}\psi_{j}(x) \iff G\boxtimes S\models \bigwedge_{x\in {X}^{\prime}_{j}}\psi_{j}(x)$ and ${X}^{\prime}_{j}\cap V(K_{2}^{(t)}) = \emptyset$.

Since $d=2(r+(\ell +1)r +r)$ and $|{X}^{\prime}_{j}|\leq \ell$, there exists a $t'\in [z -d+2r+1, z- \frac{d}{2}-r]$ such that  ${X}^{\prime}_{j}$  does not intersect $\ann({\cal A}_{t'}^{(r)}(W_{1}))$ ($t'$ corresponds to a layer in the orange area in \autoref{fsjnaklfnsaklfn}).

Now, consider the set $Z:={X}^{\prime}_{j}\cap V(K_{1}^{(t'-r+1)}\boxtimes S_{\sf in})$.
Observe that $Z\subseteq R_1$ and therefore $Z\subseteq V(K_{1}^{(t'-r+1)}\boxtimes S_{\sf in})\cap R_1$.
Also, notice that, since $A(S_{\sf in})\subseteq V(K_1^{z-d+1})$ and $t'\geq z -d+2r+1$, $P_1^{(t'-r+1)}\subseteq V(K_{1}^{(t'-r+1)}\boxtimes S_{\sf in})$.
Thus, $Z\subseteq V((K_{1}^{(t'-r+1)}\boxtimes S_{\sf in})\setminus P_1^{(t'-r+1)})\cap R_1$.
Recall that $R'=R\setminus V(K_{1}^{(r)})$ and observe that, since $({X}^{\prime}_{j}\setminus Z)\cap V(K_{1}^{(t')}) = \emptyset$ and $t'>r$, it holds that ${X}^{\prime}_{j}\setminus Z\subseteq R^{\prime}$.	
Let $Y_{j}^{\prime}\subseteq [\ell_j]$ be the set of the indices of the vertices of ${X}^{\prime}_{j}$ in $Z$.
Also, notice that since $Z\subseteq V((K_{1}^{(t'-r+1)}\boxtimes S_{\sf in})\setminus P_1^{(t'-r+1)})\cap R_1$, $Z$ is a subset of ${X}^{\prime}_{j}$, and ${X}^{\prime}_{j}$ is
$(\ell_{j}, r_{j})$-scattered in  $G\boxtimes S$, it holds that $Z$ is $(|Y_{j}^{\prime}|,r_j)$-scattered in $K_1^{(t')}\boxtimes S_{\sf in}$ and $K_1^{(t')}\boxtimes S_{\sf in}\models\bigwedge_{x\in {Z}}\psi_{j}(x)$.
As we mentioned before, ${{\sf  sig}}_{φ,\boxtimes}(\mathfrak{K}_2, R_2, z,\tilde{S})={{\sf  sig}}_{φ,\boxtimes}(\mathfrak{K}_1, R_1, z,S_{\sf in})$.
This implies the existence of a set $\tilde{Z}\subseteq V((K_{2}^{(t'-r+1)}\boxtimes \tilde{S})\setminus P_2^{(t'-r+1)})\cap R_2\subseteq R^{\prime}$
such that $\tilde{Z}$ is $(|Y_{j}^{\prime}|,r_{j})$-scattered in $K_{2}^{(t')}\boxtimes \tilde{S}$ and $K_1^{(t')}\boxtimes S_{\sf in}\models\bigwedge_{x\in {Z}}\psi_{j}(x)\iff K_2^{(t')}\boxtimes \tilde{S}\models\bigwedge_{x\in \tilde{Z}}\psi_{j}(x)$.
At this point, observe that, since the formula $\psi_{j}(x)$ is $r_{j}$-local and $Z\subseteq V((K_{1}^{(t'-r+1)}\boxtimes S_{\sf in})\setminus P_1^{(t'-r+1)})$, $N_{G\boxtimes S}^{(\leq r_j)} (x)\subseteq V(K_1^{(t')}\boxtimes S_{\sf in})$, for every $x\in Z$.
Also, $A(S_{\sf out})\cap V(K_1^{(z)}) = \emptyset$, which implies that $K_1^{(t')}\boxtimes S_{\sf in}\models\bigwedge_{x\in {Z}}\psi_{j}(x)\iff G\boxtimes S\models\bigwedge_{x\in {Z}}\psi_{j}(x)$.
Thus, $K_2^{(t')}\boxtimes \tilde{S}\models\bigwedge_{x\in \tilde{Z}}\psi_{j}(x)\iff G\boxtimes S\models\bigwedge_{x\in {Z}}\psi_{j}(x)$.

Also, since $S^{\prime}=\tilde{S}\cup S_{\sf out}$, where $A(S_{\sf out})\cap V(K_{2})=\emptyset$, and $\tilde{Z}$ is $(|Y_{j}^{\prime}|,r_{j})$-scattered in
$K_{2}^{(t')}\boxtimes \tilde{S}$, where $\tilde{Z}\subseteq V((K_{2}^{(t'-r+1)}\boxtimes \tilde{S})\setminus P_2^{(t'-r+1)})$ and $t'\leq \rho - r$, we notice that $\tilde{Z}$ is also $(|Y_{j}^{\prime}|,r_{j})$-scattered in $G\boxtimes S^{\prime}$.
Moreover, the formula $\psi_{j}(x)$ is $r_{j}$-local, so $K_2^{(t')}\boxtimes \tilde{S}\models\bigwedge_{x\in \tilde{Z}}\psi_{j}(x)\iff G\boxtimes S^{\prime}\models\bigwedge_{x\in \tilde{Z}}\psi_{j}(x)$.
Therefore, we have $G\boxtimes S\models\bigwedge_{x\in {Z}}\psi_{j}(x)\iff G\boxtimes S^{\prime}\models\bigwedge_{x\in \tilde{Z}}\psi_{j}(x)$.

Consider the set $$X:=({X}^{\prime}_{j}\setminus Z)\cup \tilde{Z}.$$

Notice that since ${X}^{\prime}_{j}\setminus Z$ is an $(\ell_{j}-|Y_{j}^{\prime}|, r_{j})$-scattered set in $G\boxtimes S$
and it does not intersect neither
$V(K_{2}^{(z-d+1)})$ (where $A(\tilde{S})$ lies),
nor $V(K_{1}^{(t'-r+1)})$ (where $A(S_{\sf in})$ lies),
it is also an $(\ell_{j}-|Y_{j}^{\prime}|, r_{j})$-scattered set in $G\boxtimes S'$.
Since $\tilde{Z}\subseteq V((K_{2}^{(t'-r+1)}\boxtimes \tilde{S})\setminus P_2^{(t' -r+1)})$,
${X}^{\prime}_{j}\cap  V(K_{2}^{(t)})=\emptyset$,
and $t'< t-2r$,
for every $x\in {X}^{\prime}_{j}\setminus Z$ and $x'\in \tilde{Z}$ it holds that
$N_{G\boxtimes S^{\prime}}^{(\leq r_{j})}(x)\cap N_{G\boxtimes S^{\prime}}^{(\leq r_{j})}(x')=\emptyset$.
The latter, together with the fact that ${X}^{\prime}_{j}\setminus Z$ is an $(\ell_{j}-|Y_{j}^{\prime}|,r_{j})$-scattered set in $G\boxtimes S'$ and $\tilde{Z}$ is $(|Y_{j}^{\prime}|,r_{j})$-scattered in $G\boxtimes S^{\prime}$,
implies that  $X\subseteq R^{\prime}$ is an $(\ell_{h}, r_{h})$-scattered set in $G\boxtimes S^{\prime}$.
Furthermore, since the formula $\psi_{j}(x)$ is $r_{j}$-local, we obtain $G\boxtimes S^{\prime}\models\bigwedge_{x\in X_{j}}\psi_{j}(x)\iff G\boxtimes S^{\prime}\models\bigwedge_{x\in X}\psi_{j}(x)$.

Thus, assuming that there is an $(\ell_{j}, r_{j})$-scattered set $X_{j}\subseteq R$ in $G\boxtimes S$ such that $G\boxtimes S\models \bigwedge_{x\in X_{j}}\psi_{j}(x)$, we proved that there is an $(\ell_{j}, r_{j})$-scattered set $X\subseteq R^{\prime}$ in $G\boxtimes S^{\prime}$ such that $G\boxtimes S^{\prime}\models \bigwedge_{x\in X}\psi_{j}(x)$.
To conclude Case 1, notice that we can prove the inverse implication analogously.
That is, by assuming the existence of an $(\ell_{j}, r_{j})$-scattered set $X_{j}\subseteq R'$ in $G\boxtimes S'$ such that $G\boxtimes S'\models \bigwedge_{x\in X_{j}}\psi_{j}(x)$ and using the same arguments as above (replacing $W_{1}$ with $W_{2}$, $S$ with $S'$ and $R$ with $R'$), we can prove the existence of an $(\ell_{j}, r_{j})$-scattered set $X\subseteq R$ in $G\boxtimes S$ such that $G\boxtimes S\models \bigwedge_{x\in X}\psi_{j}(x)$.

\bigskip

\noindent{\bf Case 2:} $j\notin J$.
We aim to prove that $(G\boxtimes S,R)\models\neg\tilde{φ}_{j} \iff (G\boxtimes S',R')\models\neg \tilde{φ}_{j}$.

In other words, we need to prove that  for every $(\ell_{j}, r_{j})$-scattered set $X_{j}\subseteq R$ in $G\boxtimes S$, $G\boxtimes S\models \neg \psi_{j}(x)$, for some $x\in X_{j}$ if and only if for every $(\ell_{j}, r_{j})$-scattered set $X_{j}\subseteq R'$ in $G\boxtimes S'$, $G\boxtimes S'\models \neg \psi_{j}(x)$, for some $x\in X_{j}$.
In Case 1 we argued that there is an $(\ell_{j}, r_{j})$-scattered set $X_{j}\subseteq R$ in $G\boxtimes S$ such that $G\boxtimes S\models \bigwedge_{x\in X_{j}}\psi_{j}(x)$
if and only if there is an $(\ell_{j}, r_{j})$-scattered set $X_{j}\subseteq R'$ in $G\boxtimes S'$ such that $G\boxtimes S'\models \bigwedge_{x\in X_{j}}\psi_{j}(x)$.
This directly implies that $(G\boxtimes S,R)\models\neg\tilde{φ}_{j} \iff (G\boxtimes S',R')\models\neg \tilde{φ}_{j}$.
This concludes Case 2 and completes the proof of our claim.
\bigskip

We conclude the proof of the lemma by proving that $(G,R,k)$ is a $(φ,\boxtimes)$-triple if and only if $(G\setminus v,R', k)$ is a $(φ,\boxtimes)$-triple. As a proof of the latter, notice that by the above claim, we get that $(G,R,k)$ is a $(φ,\boxtimes)$-triple if and only if $(G,R', k)$ is a $(φ,\boxtimes)$-triple.
By the definition of the $(φ, \boxtimes)$-triple, $(G,R', k)$ is a $(φ,\boxtimes)$-triple if and only if there exists an $S\subseteq \boxtimes\langle G,R'\rangle$ such that $|S|=k$, $G\boxtimes S$ is a planar graph, and $(G\boxtimes S,R')\models \tilde{φ}$.
Since for every $h\in[m]$ the FOL-formula $\psi_{h}(x)$ is $r_{h}$-local, then the validity of $\psi_{h}(x)$ does not depend on the central vertex $v$ of $W_{1}^{(r)}$.
Therefore, $(G,R', k)$ is a $(φ,\boxtimes)$-triple if and only if $(G\setminus v,R', k)$ is a $(φ,\boxtimes)$-triple.

\remove{

\section{Further research directions}
\red{
In this paper we provide an algorithmic-meta theorem for the graph modifiction problem  where the modification operation is in $\{{\sf vd}, {\sf ed}, {\sf ec}, {\sf ea}\}$ and the target property is planarity plus being a model of some FOL-sentence $φ$. 

The immediate question is whether our results can be panelled 
to target properties that are more general than planarity (and still not FOL-expressible).
The first candidate is the \mnb{\sc $\boxtimes$-Modification to $g$-Euler Genus and $φ$}, 
where we ask for  a set $S\subseteq \boxtimes\langle G,V(G)\rangle$ of size $k$ such that $G\boxtimes S$ has Euler genus at most $g$. Notice that 
the property of having Euler genus at most $g$ is not FOL-expressible.
On the positive side, this property is MSOL-expressible as there is a set ${\cal B}_{g}$ of 
graphs such that $G$ has Euler genus at most $g$ iff none of the graphs in ${\cal B}_{g}$ is a minor of $G$ and minor containment is MSOL-expressible. We next argue about how to adapt the techniques of this paper in order to prove  that 
this problem can be solved in  ${\cal O}_{k,|φ|,g}(n^2)$ when $\boxtimes\in\{{\sf vd},{\sf ed},{\sf ec}\}$. For this we first straightforward extend the notions of 
$\boxtimes$-planarization irrelevant vertex set  and $\boxtimes$-planarizer to the respective 
notions of {\em $\boxtimes$-$g$-Euler Genus irrelevant} vertex set
{\em $\boxtimes$-$g$-euler genus enforcer}.
Our aim is to prove a more general version of \autoref{sdmgflsdskgfnaksdffdglsgklaamgfrlka} 
where $\boxtimes$-planarizer is replaced by $\boxtimes$-$g$-euler genus enforcer.
The ${\cal O}_{k,|φ|,g}(n^2)$ time algorithm for  \mnb{\sc $\boxtimes$-Modification to $g$-Euler Genus and $φ$} follows directly from this panelled version of \autoref{sdmgflsdskgfnaksdffdglsgklaamgfrlka} with the same arguments as its 
planarization counterpart. \autoref{sdmgflsdskgfnaksdffdglsgklaamgfrlka} in turn
is a consequence of the general versions 
of \autoref{sdmgflsadmgfrlka} and \autoref{wedndlsakfnlkdsafnkalen}
 where  $\boxtimes$-planarizer is replaced by $\boxtimes$-$g$-euler genus enforcer
 and $\boxtimes$-planarization irrelevant  is replaced by  {$\boxtimes$-$g$-Euler Genus irrelevant}. 
The generalized version of \autoref{sdmgflsadmgfrlka} follows as the same arguments 
also hold on bounded-genus graphs: the result we use from~\cite{GolovachKMT17thep}
has a bounded-genus analogue, the results from~\cite{FominGT11cont} and 
~\cite{DemaineFHT04bidi} hold for the more general graph class of apex-minor-free graphs. Also the fact that the ``big-enough'' $q$-wall that we find  is $\boxtimes$-$g$-Euler Genus irrelevant can be proven
using arguments from~\cite{KociumakaP19}. Having the panelled version of \autoref{sdmgflsadmgfrlka}, the proof of the panelled version of \autoref{wedndlsakfnlkdsafnkalen} is almost identical as we still work inside 
a disk $\Delta$ where $G$ is partially embedded, so that local modifications 
should locally respect planarity. To be precise, the main difference is that 
in the definition of $d$, we now demand that $d$ is also lower bounded by 
some big-enough function of the genus which guarantees that local 
modifications in the disk $\Delta$ do not alter the genus of the whole graph.

Now, the two general challenges that we distinguish are the following.
\begin{itemize}
	\item Pick a (non-empty) subset $\mathfrak{O}$ of $\{{\sf vd}, {\sf ed}, {\sf ec}, {\sf ea}\}$ and define \mnb{\sc  Graph $\mathfrak{O}$-Modification to Planarity and $φ$} in the obvious way,  by permitting any modification operation from  $\mathfrak{O}$.
	It is possible (however more technical) to adapt our results 
	for this problem in the case where ${\sf ea}\not\in \mathfrak{O}$.
	However, in the case where ${\sf ea}\in \mathfrak{O}$ (while $|\mathfrak{O}|>1$)
	the problem becomes considerably more complicated 
	as parts of the graph may be relocated during the modification operation (in fact, from a more general perspective, the same issue appears for the \mnb{\sc ${\sf ea}$-Modification to $g$-Euler Genus and $φ$} problem that we avoided to consider above).
	We believe that this issue can be tackled using the techniques 
	of~\cite{FominGT19modi}. However, the technical details of such an enterprise seem to  be quite involved.
	
	\item Consider other target properties, alternative to planarity, that are not FOL-expressible. A natural challenge in this direction is to consider 
	some finite set of graphs ${\cal H}$ and define the \mnb{\sc $\boxtimes$-Modification to Excluding ${\cal H}$-minors and $φ$} problem where the  target property, apart from being a model of $φ$, is to exclude every graph in ${\cal H}$ as a minor.
	Notice that if ${\cal H}$ contains some planar graph, then the \yes-instance 
	of the problem has bounded treewidth, therefore the problem is 
	fixed-parameter tractable  due to  Courcelle's Theorem.
	The result of this paper can be seen as  \mnb{\sc $\boxtimes$-Modification to Excluding $\{K_{5},K_{3,3}\}$-minors and $φ$} that is the simplest, however essential, version of the general problem. 
	We conjecture that the same results can be achieved for every ${\cal H}$ and we believe that the techniques introduced in this paper can be the starting point of such a project.

\end{itemize}

}
}

\end{document}